\documentclass[a4paper,UKenglish,cleveref, autoref, thm-restate,notab,cleveref]{lipics-v2021}

\usepackage{float}
\newfloat{model}{h}{lom}   
\floatname{model}{Model}     

\title{Strong ILP Formulations for the $p$-Regions Problem} 

\author{Daniel Faber}{University of Bonn, Germany }{dfaber@uni-bonn.de}{https://orcid.org/0009-0002-1896-9402}{}

\author{Jan-Henrik Haunert}{University of Bonn, Germany}{haunert@igg.uni-bonn.de}{https://orcid.org/0000-0001-8005-943X}{}

\author{Petra Mutzel}{University of Bonn, Germany \and Lamarr Institute, Bonn, Germany}{pmutzel@uni-bonn.de}{https://orcid.org/0000-0001-7621-971X}{}

\authorrunning{D. Faber, J.-H. Haunert, P. Mutzel} 

\Copyright{Daniel Faber, Jan-Henrik Haunert, and Petra Mutzel} 

\ccsdesc[500]{Theory of computation~Discrete optimization} 

\keywords{p-regions problem, connected graph partitioning, area aggregation, integer linear programming, branch-and-cut} 

\category{} 

\relatedversion{Full version on arXiv: \url{https://arxiv.org/abs/2607.04886}} 

\supplement{Code available under \url{https://zenodo.org/records/21275790}}

\funding{This research was partially funded by the Deutsche Forschungsgemeinschaft (DFG, German Research Foundation) under grant FOR-5361 -- 459420781.}

\acknowledgements{We want to thank Michael Kaibel for his helpful remarks.}

\nolinenumbers 

\EventEditors{John Q. Open and Joan R. Access}
\EventNoEds{2}
\EventLongTitle{42nd Conference on Very Important Topics (CVIT 2016)}
\EventShortTitle{CVIT 2016}
\EventAcronym{CVIT}
\EventYear{2016}
\EventDate{December 24--27, 2016}
\EventLocation{Little Whinging, United Kingdom}
\EventLogo{}
\SeriesVolume{42}
\ArticleNo{23}

\usepackage{todonotes}

\usepackage[]{algorithm}    
\usepackage{algpseudocode}     
\usepackage{placeins}
\usepackage{amsmath}
\usepackage{amssymb}
\usepackage{subcaption}
\usepackage{hyperref}
\usepackage{tikz}
\usetikzlibrary{arrows.meta, positioning}
\usepackage{xspace}
\usepackage{booktabs}

\newcommand{\diedge}[1]{A({#1})}
\newcommand{\uvsep}{\text{$u,v$-separator}\xspace}
\newcommand{\tree}{\texttt{Tree}\xspace}
\newcommand{\treeplus}{\texttt{Tree+}\xspace}
\newcommand{\flow}{\texttt{Flow}\xspace}
\newcommand{\ER}{\texttt{ER}\xspace}
\newcommand{\ERS}{\texttt{ER-S}\xspace}
\newcommand{\ERSBase}{\texttt{ER-S-Base}\xspace}
\newcommand{\ERSTree}{\texttt{ER-S-Tree}\xspace}
\newcommand{\Size}{\emph{Size}\xspace}
\usepackage{comment}

\begin{document}

\maketitle
\begin{abstract}
Regionalization is a fundamental task in spatial analysis that seeks to partition a larger area - such as a country - into smaller regions that are homogeneous with respect to a given attribute. A popular model for regionalization is the $p$-regions problem, in which regions are formed by grouping the areas of an input planar subdivision. Given the subdivision's adjacency graph $G$ and pairwise dissimilarities between vertices, the goal is to partition $G$ into a fixed number $p$ of connected subgraphs, such as to minimize the sum of dissimilarities over all vertex pairs in the same subgraph. The problem is NP-hard and even small instances are difficult to solve to provable optimality. 

In this paper, we present the new ILP model \ERS for the $p$-regions problem, exploiting a connection between the $p$-regions objective and the $k$-partitioning problem. Furthermore, we strengthen the known ILP model \tree with a new type of subtour elimination inequality specific to the $p$-regions problem. Combining \ERS and the strengthened version of \tree yields the model \ERSTree, which dominates the state-of-the-art models in polyhedral strength. This theoretical advantage is reflected in its superior performance in our experimental evaluation.  In particular, the new models \ERS and \ERSTree enable the solution of problem instances for major European countries that were previously intractable.
\end{abstract}

\section{Introduction}
\begin{figure}
\begin{minipage}{0.44\linewidth}
\centering
\includegraphics[width=\linewidth]{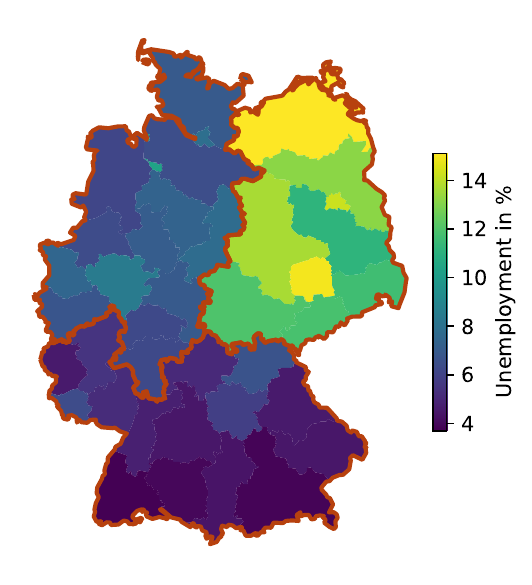}
\\
\subcaption{$k = 3$} 
\end{minipage}
\hfill
\begin{minipage}{0.44\linewidth}
\centering
\includegraphics[width=\linewidth]{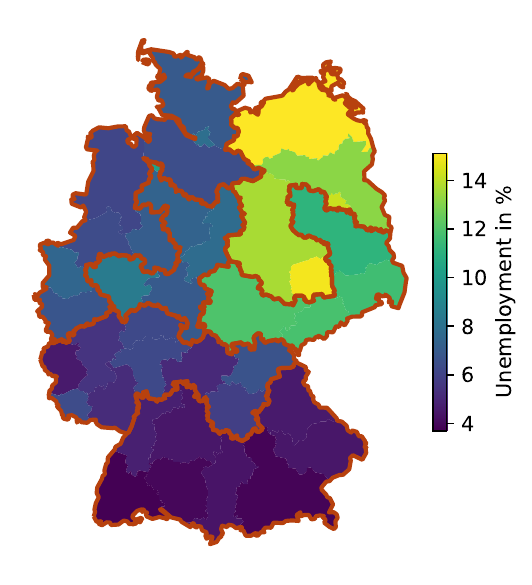}
\\
\subcaption{$k = 6$}
\end{minipage}
\caption{Regionalization solutions for the NUTS-2 subdivison of Germany for $k \in \{3,6\}$ }
\label{fig:germany-part}
\end{figure}
The aggregation of spatial data is a core topic in geoinformation science, and has given rise to many computationally challenging problems. A common restriction in these aggregation problems is regional connectivity: Given a set of geographic areas, the goal is to group similar (with respect to a given attribute) areas into a smaller number of regions, where each region needs to be contiguous. There are numerous problems that fall into this class of connected clustering problems, such as area aggregation in map generalization~\cite{oehrlein17} and political districting~\cite{ValidiB22}. Almost all of these problems are NP-hard, and solving them to optimality has proven to be a difficult task. Integer Linear Programming (ILP) techniques have shown great success for many difficult combinatorial optimization problems, and thus the majority of exact algorithms in the area of connected clustering problems are based on ILP formulations.

One instance of an NP-hard connected clustering problem is the $p$-regions problem, introduced by Duque et al.~\cite{duque11}. They consider the problem of clustering a set of given areas into a predefined number of $p$ spatially contiguous regions, while minimizing the total heterogeneity of all regions. The authors define the heterogeneity of a region by the sum of dissimilarities between all pairs of vertices within a region, where the dissimilarity function $d(u,v)$ for all pairs of areas $u,v$ is given as part of the problem input. Figure \ref{fig:germany-part} shows solutions to the $p$-regions problem for unemployment data of Germany, with $k\in\{3,6\}$.

In their work, the authors frame the $p$-regions problem as a graph problem and propose the three ILP models \texttt{Tree}, \texttt{Order} and \texttt{Flow}~\cite{duque11}. Their computational experiments show that even small instances (\# areas $\approx 25$) are already challenging and could not be solved within the time limit of 3 hours.

The $p$-regions problem is closely related to both the $k$-partitioning problem and connected clustering problems.
In the $k$-partitioning problem, the goal is to partition the vertices of an edge-weighted graph into $k$ subsets such that the total weight of edges whose endpoints lie in the same part is minimized. It is equivalent to the max-$k$-cut problem, where the goal is to maximize the weight of edges running between partitions. Notably, Chopra et al.~\cite{chopra93,chopra95}, Ales et al.~\cite{ales16} and Fairbrother et al.~\cite{Fairbrother17} studied the problem from a polyhedral perspective and identified various facet-defining inequalities. They consider formulations with only edge variables, edge and representative variables, and edge and vertex assignment variables.

Recently, there have been a number of ILP approaches for the connected max-$k$-cut problem, which is defined as the max-$k$-cut problem with the additional constraint that each partition must be connected~\cite{hojny21,healy24}. Notably, Healy et al.~\cite{healy24} proposed an ILP model based on edge cut constraints. In contrast to the $p$-regions problem, where the costs are defined over every pair of vertices, the objective is defined over the cost of cut edges. 
In the p-regions problem, assigning two non-adjacent vertices to different parts is associated with a cost, whereas the connected max-k-cut problem does not penalize such assignments. Therefore the $p$-regions problem is more general.

Another closely related problem is the \textit{graph-connected clique-partitioning problem (GCCP)}, introduced in~\cite{benati17}. As in the $p$-regions problem, the cost function is defined over each vertex pair and the objective is to minimize the sum of costs over all vertex pairs in the same partition. The difference between the two problems is that there is no constraint on the number of partitions for GCCP, while for the $p$-regions problem the number of regions is fixed. In~\cite{benati17}, Benati et al.~introduce three ILP formulations, based on a classic flow formulation, a formulation based on Miller-Tucker-Zemlin constraints and a cut-based formulation. In a subsequent paper, they propose a branch-and-price approach for this problem~\cite{benati22}.

\subsection*{Our Contribution}
In this work, we present and theoretically analyze new ILP formulations for the $p$-regions problem. Based on these formulations, we develop a branch-and-cut algorithm and evaluate its performance experimentally in a computational study, demonstrating significant improvements over the state of the art.
In particular:
\begin{itemize}
    \item We propose a new ILP formulation \ERS for the $p$-regions problem exploiting a connection to the $k$-partitioning problem. Specifically, we combine the edge-representative model for the $k$-partitioning problem~\cite{ales16} and the vertex-separator constraints for modeling graph connectivity~\cite{oehrlein17}.
    \item We present constraints to strengthen \ERS and \tree~\cite{duque11} as well as a combination of the strengthened models into a new hybrid model \ERSTree.
     \item We conduct a polyhedral analysis of our new models and known formulations from the literature~\cite{duque11}, 
     \tree and \flow, showing that our new hybrid model \ERSTree strictly dominates.
     \item We provide implementations of our branch-and-cut algorithms based on the formulations \ERSTree, \ERS, \tree, and \flow, which are publicly available \cite{faber26}.
    \item We experimentally evaluate the performance of our new algorithms on a benchmark set of real-world instances as well as generated grid instances for multiple numbers of regions.  \ERS and \ERSTree display superior performance across the benchmark set, supporting our theoretical findings. In particular, neither \flow nor \tree was able to solve the instance \texttt{Romania} for any tested value for $k$, while \ERSTree manages to solve it for all but one of the eight tested values for $k$.
\end{itemize}

The rest of this paper is organized as follows: In section \ref{sec:preliminaries}, we introduce the notation used in this paper and formally define the $p$-regions problem. In section \ref{sec:sota}, we revisit state-of-the-art ILP formulations for the $p$-regions problem and related problems that our formulations build upon. In section \ref{sec:treeplus}, we present a strengthened version of the model \tree proposed in \cite{duque11}.
Section \ref{sec:ers}, introduces our new models \ERS and \ERSTree. The relaxation strength of the models is analyzed in section \ref{sec:theory}.
In section \ref{sec:separation}, we describe the separation routines of our branch-and-cut algorithms. The results of our experimental evaluation are discussed in section \ref{sec:experiment}. Finally, \ref{sec:conclusion} concludes the paper.
\section{Preliminaries}\label{sec:preliminaries}
We use the common notation ${V \choose i}$ to describe the set of all subsets of $V$ that have cardinality $i$. We also use the notations $[b]$ and $[a,b]$ for $b \in \mathbb{N}_0$ to denote the set $\{1,\ldots,b\}$ and $\{a,\ldots,b\}$ of natural numbers.

In this paper, we mainly consider simple undirected graphs $G=(V(G),E(G))$ with dissimilarities $d\colon {V \choose 2} \rightarrow \mathbb{R}$ defined between any pair of vertices. Following standard notation, we define $n = |V(G)|, m = |E(G)|$. We consider the set of vertices to be numbered, i.e. $V = [n]$. We will use $V$ and $E$ instead of $V(G)$ and $E(G)$ if $G$ is clear from the context.

For a vertex set $V' \subseteq V$, $G[V']$ denotes the subgraph induced by $V'$, with vertex set $V'$ and edge set $E(V') = \{e \in E\mid e \subseteq V'\}$.
For an undirected graph $G=(V,E)$, we define the neighborhood $N(v) = \{w \in V(G) \mid \{v,w\} \in E(G)\}$
For a directed graph $D=(V,A)$, we define $N^-(v) = \{u \in V(D) \mid (u,v) \in A(D)\}$ and $N^+(v) = \{w \in V(D) \mid (v,w) \in A(D)\}$.

Since some algorithms and formulations are defined only for directed graphs, we associate with a set of undirected edges $E$ the set of directed edges $\diedge{E} \subseteq V \times V$ containing both $(u,v)$ and $(v,u)$ for every undirected edge $\{u,v\} \in E$.

We now formally define the $p$-regions problem. We remark that in the following, we denote the number of regions requested as output by $k$ (as opposed to $p$), as this is standard convention in the graph partitioning literature ~\cite{ales16,healy24,chopra93,Fairbrother17}. 
\begin{definition}[$p$-regions problem]
    Given a connected graph $G=(V,E)$, pairwise vertex dissimilarities $d\colon {V \choose 2} \rightarrow \mathbb{R}_{\geq 0}$ and an integer $k \in [n]$, the $p$-regions problem is to find a $k$-partitioning $\mathcal{P} = (V_1,...,V_k)$ of $V$ such that
    \begin{enumerate}
        \item $V = \bigcup_{i \in [k]} V_i$ (each vertex is covered)
        \item $V_i \cap V_j = \emptyset$ for $i \neq j$ (the partitions are disjoint)
        \item $G[V_i]$ for $i \in [k]$ is connected
        \item $V_i \neq \emptyset$ for all $i \in [k]$
    \end{enumerate}
    minimizing the total heterogeneity $H(\mathcal{P}) = \sum_{i \in [k]}\sum_{u,v \in V_i} d(u,v)$.
\end{definition}

\section{State-of-the-art ILP formulations}\label{sec:sota}
\subsection{ILP formulations for the $p$-regions problem}
In this section we review the ILP models for the $p$-regions problem proposed by Duque et al.~\cite{duque11}. In this work, the authors proposed three ILP models, namely \texttt{Flow}, \texttt{Tree} and \texttt{Order}. We will discuss the first two models, as \texttt{Order} has been shown to not be competitive~\cite{duque11}. 

\subsubsection{Formulation \texttt{Flow}}
\begin{model}[ht]
\begin{align}
 &\text{min}  && \sum_{u < v}d(u,v)\cdot x_{u,v}\notag   &&\\
&  \text{s.t.} && x_{u,v}\geq z_{u,i} + z_{v,i} - 1   &&  \forall u,v \in V,u \neq v,i \in [k]\label{eq:flow2} \\
 &&& \sum_{i\in [k]} z_{v,i} = 1   &&  \forall v \in V\label{eq:flow3}  \\
&&& \sum_{u\in V} r_{u,i} = 1  & \forall &i\in [k] \label{eq:flow4}\\ 
&&& r_{u,i} \leq z_{u,i} & \forall &u\in V,i \in [k] \label{eq:flow5}\\
&&& f^i_{u,v} \leq (M-1) \cdot z_{u,i}& \forall &(u,v)\in \diedge{E},i \in [k] \label{eq:flow6}\\
&&& f^i_{u,v} \leq (M-1) \cdot z_{v,i}& \forall &(u,v)\in \diedge{E},i \in [k] \label{eq:flow7}\\
&&& \sum_{u \in N(v)}f^i_{u,v} - \sum_{w \in N(v)}f^i_{v,w} \geq z_{v,i} - M\cdot r_{v,i} & \forall &v\in V,i\in[k] \label{eq:flow8}\\
&&& z_{1,1} = 1 \label{eq:flow9}\\
&&&             x_{u,v} = x_{v,u} &&  \forall (u,v) \in V\times V,u\neq v\label{eq:flow13}   \\
&&&             z_{v,i},r_{v,i} \in \{0,1\}  &&  \forall v \in V,i \in [k]\label{eq:flow10}   \\
&&&             x_{u,v} \in \{0,1\}  &&  \forall (u,v) \in V \times V,u \neq v \label{eq:flow11}\\
&&&f^i_{u,v}\in [0,M-1] &\forall& (u,v)\in \diedge{E}, i \in [k]\label{eq:flow12}
\end{align}
\caption{Model \flow proposed in~\cite{duque11}} \label{model:flow}
\end{model}

Formulation \texttt{Flow}~\cite{duque11} (see Model \ref{model:flow}) contains three sets of binary and one set of continuous variables. The first set consists of binary variables $z_{v,i}$ for each vertex $v \in V$ and region $i \in [k]$, where $z_{v,i} = 1$ iff $v \in V_i$. Additionally, the model has binary variables $x_{u,v}$ for each vertex pair $u,v \in V$ with $u \neq v$, defined by $x_{u,v} = 1$ exactly if $u$ and $v$ are in the same region (denoted as \textit{pairing variables}). To enforce connectivity within a region, the model uses a standard multi-commodity flow formulation, using flow variables $f^i_{u,v}$ for $(u,v) \in \diedge{E}, i \in [k]$ and binary root variables $r_{u,i}$ for $u \in V, i \in [k]$. 

\begin{model}[ht]
\begin{align}  
& \text{min}  && \sum_{u<v}d(u,v)\cdot x_{u,v}\notag    &&\\
& \text{s.t.} && \sum_{(u,v) \in \diedge{E}}y_{u,v} = n - k \label{eq:tree1}  \\
&&& \sum_{v \in N(u)}y_{u,v} \leq 1  &&  \forall u\in V\label{eq:tree2}  \\
&&& x_{u,w} \geq x_{u,v}+x_{v,w}-1  &&  \forall u,v,w \in {V \choose 3}\label{eq:tree3}  \\
&&& y_{u,v} \leq x_{u,v} &&  \forall (u,v) \in \diedge{E}  \label{eq:tree4} \\
&&& \sum_{(u,v) \in \diedge{S}}y_{u,v} \leq |S|-1 && \forall S \subseteq V \label{eq:tree5} \\
&&& x_{u,v} = x_{v,u}&& \forall (u,v) \in V \times V, u \neq v\\
&             && y_{u,v} \in \{0,1\}  &&  \forall (u,v) \in \diedge{E}\label{eq:tree6}   \\
&             && x_{u,v} \in \{0,1\}  &&  \forall (u,v) \in V \times V, u \neq v\label{eq:tree7}  
\end{align}
\caption{Model \tree proposed in~\cite{duque11}}\label{model:tree}
\label{model-tree}
\end{model}

The objective minimizes the total sum of dissimilarities between vertices within the same region in the partitioning. Constraints \eqref{eq:flow2} ensure that $x_{u,v} = 1$ if there exists an $i \in [k]$ such that $z_{u,i} = z_{v,i} = 1$. Finally, constraints \eqref{eq:flow3} enforce that each vertex is assigned to exactly one region. Constraints \eqref{eq:flow4} -- \eqref{eq:flow8} model regional connectivity via a flow network for each region. Constraints \eqref{eq:flow4} -- \eqref{eq:flow5} ensure that for each partition, exactly one vertex is assigned as the root vertex. Constraints \eqref{eq:flow6} -- \eqref{eq:flow7} ensure that flow of type $i \in [k]$ can only traverse an edge if both incident vertices are assigned to $i$. Here, $M$ denotes a sufficiently large constant. The authors use $M = n-k+1$, as a partition can have at most $n-k+1$ vertices. Finally, the flow conservation constraints \eqref{eq:flow8} enforce that each vertex assigned to partition $i$ that is not the root must consume exactly one unit of flow. Constraint \eqref{eq:flow9} has the purpose of breaking some symmetries arising from permuting the region indices.

\subsubsection{Formulation \texttt{Tree}}
The tree formulation~\cite{duque11} (see Model \ref{model:tree}) models each region as a directed arborescence (oriented towards the root) in the graph and is defined over the directed edge set $\diedge{E}$.  Like formulation \flow, the model contains pairing variables $x_{u,v}$. Additionally, it uses linking variables $y_{u,v}$ for all $(u,v) \in \diedge{E}$, where $y_{u,v} = 1$ exactly if $(u,v)$ is part of an arborescence. To enforce that the selected edges form $k$ trees spanning all vertices, one can fix the number of selected edges to $n-k$ and ensure that the set of selected edges do not contain any cycle. The authors propose two variants of cycle-breaking constraints, a compact encoding and a formulation with up to exponentially many cycle breaking constraints. We will only consider the latter one, as the compact formulation was shown to be non-competitive.

The objective is equivalent to the one in \flow. Constraint \eqref{eq:tree1} enforces that exactly $n-k$ edges are selected, the total number of edges contained in $k$ edge-disjoint trees spanning all vertices. Constraints \eqref{eq:tree2} enforce that each vertex can have at most one outgoing arc, as the arborescences are oriented towards the root.  Constraints \eqref{eq:tree3} are the so-called \textit{transitivity constraints}: They ensure that for any triple of vertices $u,v,w$, if $u,v$ and $v,w$ are in the same region, then $u$ and $w$ are also in the same region, yielding a consistent pairing. Constraints \eqref{eq:tree4} state that an edge can only be selected if the incident vertices are in the same region. Finally, the subtour elimination constraints \eqref{eq:tree5} ensure that for any set $S \subseteq V$ the set of selected edges never form a cycle in $G[S]$ by imposing
that $G[S]$ contains at most $|S|-1$ selected edges. 

\subsection{ILP formulation for $k$-partitioning}
The $k$-partitioning problem, as defined by Ales et al.~\cite{ales16}, is as follows: Given an edge-weighted complete graph $H = (V(H),E(H))$ with weights $d\colon E(H)\rightarrow \mathbb{R}$, find a partitioning of the vertices into $k$ parts minimizing the total weight of edges whose endpoints lie in the same part.
Ales et al.~\cite{ales16} proposed two ILP formulations, the \textit{node-cluster} formulation and the \textit{edge-representative} formulation. As the node-cluster formulation suffers from a poor LP relaxation and symmetries, we present the edge-representative formulation \ER (see Model~\ref{model:er}). 

\begin{model}[bht]
\begin{align}  
&  \text{min} && \sum_{u<v}d(u,v)\cdot x_{u,v} \notag   &&\\
& \text{s.t.} &&  x_{u,w} \geq x_{u,v}+x_{v,w}-1  &&  \forall u,v,w \in {V \choose 3}\label{eq:ERS2}  \\
&&& r_v \leq 1 - x_{u,v} &&  \forall u,v \in V, u < v \label{eq:ERS3}  \\
&&& r_v \geq 1 - \sum_{u < v} x_{u,v} &&  \forall v \in V\label{eq:ERS4}  \\
&&& \sum_{v \in V} r_v = k &&  \label{eq:ERS5}  \\
&&& x_{u,v} = x_{v,u}&&  \forall (u,v) \in V \times V, u \neq v\\
&             && r_{v} \in \{0,1\}  &&  \forall v \in V\label{eq:ERS6}   \\
&             && x_{u,v} \in \{0,1\}  &&  \forall (u,v) \in V \times V, u \neq v\label{eq:ERS7}  
\end{align}
\caption{Model \ER proposed in~\cite{ales16}}\label{model:er}
\end{model}

Like \tree, the formulation contains pairing variables $x_{u,v}$ and transitivity constraints \eqref{eq:ERS2}. Additionally, the formulation contains binary variables $r_v$ for each $v \in V$, which we denote as \textit{representative variables}. 

The underlying idea is that we designate exactly one vertex per part as a representative, so that we can constrain the number of parts by constraining the number of representatives. By convention, a part is represented by the vertex in the part having the lowest index.
By constraints \eqref{eq:ERS3} and \eqref{eq:ERS4}, it holds that $r_v = 1$ exactly if $v$ is paired with no other vertex $u$ with $u < v$ (in which case we call $v$ a representative): \eqref{eq:ERS3} enforces that a vertex $v$ cannot be a representative if it is paired with a vertex $u < v$, while \eqref{eq:ERS4} enforces that if no vertex with smaller index is paired with $v$, $v$ must be a representative.  Therefore, there must be exactly one representative per region, so the number of regions can be restricted by fixing $\sum_{v \in V} r_v = k $ in constraint \eqref{eq:ERS5}.

\subsection{ILP formulations for spatial contiguity}
Oehrlein and Haunert~\cite{oehrlein17} consider a similar districting problem, which also requires a partitioning of the adjacency graph $G$ into connected regions. In contrast to our problem, they use an objective based on region centers instead of our pairwise objective and do not prescribe the number of output regions. Nevertheless, the connectivity constraints used in their model can still be applied to our setting. The constraints are based on the concept of vertex separators, defined as follows:
\begin{definition}[$u,v$-Separator]
    Given a graph $G = (V,E)$ and vertices $u,v \in V$, a $u,v$-Separator $S\subseteq V$ is a set of vertices such that $G[V\setminus S]$ does not contain a path between $u$ and $v$.
\end{definition}
A graph is connected exactly if for any $u,v \in V$, every \uvsep is non-empty. Therefore, connectivity can be expressed using the following constraints: 
\begin{align}\label{eq:separator}
x_{u,v} \leq \sum_{w \in S} x_{u,w}  \quad\quad \forall  u,v \in V,u\neq v,S \in \mathcal{S}_{u,v}
\end{align}
Here, $\mathcal{S}_{u,v}$ is the set of all $u,v$-separators, while variables $x_{u,v} = 1$ exactly if vertices $u$ and $v$ are paired together. 

\section{\treeplus: Strengthening Formulation \tree} \label{sec:treeplus}
Before introducing our new algorithm \ERS in the next section, we propose two modifications of \tree presented in~\cite{duque11,duque18}. 

The first modification is strengthening constraints \eqref{eq:tree4} by substituting it with
\begin{align}
    &&& y_{u,v}+y_{v,u} \leq x_{u,v} &&  \forall \{u,v\} \in E  \label{eq:tree8}
\end{align}
The idea is that the directed edges $(u,v)$ and $(v,u)$ cannot be part of the same arborescence. The resulting constraints are at least as strong as \eqref{eq:tree4} as $y_{u,v} \leq y_{u,v}+y_{v,u}$.

The second modification is a new type of valid inequality based on the cycle-breaking constraints \eqref{eq:tree5}:

\begin{equation}
    2\cdot\sum_{(u,v) \in \diedge{T}}y_{u,v} - \sum_{\{u,v\} \in T}x_{u,v}\leq |S|-2 \quad\quad \forall S \subseteq V, T\subseteq E(S) : |T| \geq |S| \label{eq:tree-strong} 
\end{equation}

\begin{theorem}
    Every valid integer solution $(x^*,y^*)$ of formulation \tree satisfies constraints \eqref{eq:tree-strong}.
\end{theorem}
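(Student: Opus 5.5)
The plan is a direct counting argument on an arbitrary integer solution $(x^*,y^*)$ of \tree, for fixed $S\subseteq V$ and $T\subseteq E(S)$ with $|T|\ge|S|$. Let $F\subseteq T$ be the set of undirected edges $\{u,v\}\in T$ for which one of $(u,v),(v,u)$ is selected by $y^*$. Let $X\subseteq T$ be the set of edges $\{u,v\}\in T$ with $x^*_{u,v}=1$.

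First I would show that for each $\{u,v\}\in T$ at most one orientation is selected. Applying \eqref{eq:tree5} to $S'=\{u,v\}$ gives $y^*_{u,v}+y^*_{v,u}\le 1$. Hence $\sum_{(u,v)\in\diedge{T}}y^*_{u,v}=|F|$. By \eqref{eq:tree4}, every edge in $F$ has $x^*=1$, so $F\subseteq X$. The left-hand side of \eqref{eq:tree-strong} therefore equals $2|F|-|X|\le |F|$.

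Next, \eqref{eq:tree5} applied to $S$ itself gives $|F|\le \sum_{(u,v)\in\diedge{S}}y^*_{u,v}\le |S|-1$. So the left-hand side is at most $|S|-1$, and I would close the remaining gap of one by a case distinction.
\begin{itemize}
\item If $|F|\le|S|-2$, the bound $2|F|-|X|\le|F|$ already suffices.
\item If $|X|>|F|$, then $2|F|-|X|\le |F|-1\le|S|-2$.
\item The remaining case is $|F|=|S|-1$ and $X=F$.
\end{itemize}

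In the remaining case, $F$ is an edge set on the vertex set $S$ with $|S|-1$ edges. It is acyclic, since a cycle on a vertex subset $S''$ would give $|S''|$ selected edges inside $S''$, violating \eqref{eq:tree5}. So $F$ is a spanning tree of $S$. For any two vertices of $S$, the tree path between them consists of edges with $x^*=1$. Repeatedly applying the transitivity constraints \eqref{eq:tree3} along this path (induction on path length) gives $x^*=1$ for every pair in $S$. In particular every edge of $T\subseteq E(S)$ lies in $X$, so $|X|=|T|\ge|S|>|S|-1=|F|$. This contradicts $X=F$, so the remaining case cannot occur, and the inequality holds in all cases.

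The only step needing care is the spanning-tree step: deriving acyclicity of $F$ from \eqref{eq:tree5}, and propagating the pairing across $S$ via transitivity. Both are routine inductions.
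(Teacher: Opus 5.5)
Your proof is correct and follows essentially the same route as the paper. Both arguments bound the number of selected arcs on $T$ by $|S|-1$ via \eqref{eq:tree5} and dispose of the non-tight case with \eqref{eq:tree4}. In the tight case both use that the selected edges form a spanning tree of $S$, so transitivity \eqref{eq:tree3} forces $x^*=1$ on all of $T$ and gives $\sum_{\{u,v\}\in T}x^*_{u,v}=|T|\ge|S|$.

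Your explicit use of \eqref{eq:tree5} on $\{u,v\}$ to rule out selecting both orientations of an edge is a small but useful addition. The paper's Case~1 bound $\sum_{(u,v)\in\diedge{T}}y^*_{u,v}\le\sum_{\{u,v\}\in T}x^*_{u,v}$ relies on that fact only implicitly.
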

\begin{proof}
    Let $S \subseteq V$ and $ T\subseteq E(S)$ with $|T| \geq |S|$ and let $F = \{a \in \diedge{T} \mid y^*_a = 1 \}$. As $F$ induces a forest in $G$ and by extension $G[S]$, it holds that $|F| \leq |S|-1$. We distinguish between two cases:
    \begin{description}
    \item[Case 1: $|F| \leq |S|-2$:] By constraints \eqref{eq:tree4}, we have 
    \begin{equation*}
        \sum_{(u,v) \in \diedge{T}}y^*_{u,v} - \sum_{\{u,v\} \in T}x^*_{u,v} \leq 0
    \end{equation*}
    from which constraint \eqref{eq:tree-strong} follows.
    \item[Case 2: $|F| = |S|-1$:] Because, by construction, the edges in $F$ cannot contain a cycle, they must form an arborescence spanning all vertices in $S$. Therefore by constraints \eqref{eq:tree4} and the transitivity constraints \eqref{eq:tree3}, all vertices in $S$ are paired together, which implies that $x^*_{u,v} = 1$ for all $\{u,v\} \in T$.
    As $|T| \geq |S|$, it must hold that
      $  \sum_{\{u,v\} \in T}x^*_{u,v} = |T| \geq |S| $
    which completes the proof, as it must hold that $2 \cdot \sum_{(u,v) \in \diedge{T}}y^*_{u,v} \leq 2(|S|-1)$.
\end{description}
\end{proof}

Furthermore, we can show that for sets $S$ with $|E(S)| = |S|$ (e.g. chordless cycles), constraints \eqref{eq:tree-strong} dominate constraints \eqref{eq:tree5}.
\begin{observation}\label{obs:tree-strong}
    For vertex sets $S$ with $|E(S)| = |S|$, constraints \eqref{eq:tree-strong} imply constraint \eqref{eq:tree5}.
\end{observation}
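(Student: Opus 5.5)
Take $T = E(S)$ in \eqref{eq:tree-strong}. This choice is admissible because $|T| = |E(S)| = |S|$. The inequality then reads
\begin{equation*}
2\sum_{(u,v)\in \diedge{E(S)}} y_{u,v} - \sum_{\{u,v\}\in E(S)} x_{u,v} \le |S|-2 .
\end{equation*}
The plan is to add to this a second valid inequality whose $x$-terms cancel, so that what remains is a pure $y$-inequality that is exactly \eqref{eq:tree5} (or stronger). Note that $\diedge{S}$ in \eqref{eq:tree5} is the same arc set as $\diedge{E(S)}$.

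The second inequality comes from summing the strengthened linking constraints \eqref{eq:tree8} over all $\{u,v\}\in E(S)$. This gives $\sum_{(u,v)\in\diedge{E(S)}} y_{u,v} \le \sum_{\{u,v\}\in E(S)} x_{u,v}$. If only the weaker \eqref{eq:tree4} is available, I would instead use $\sum_{\{u,v\}\in E(S)} x_{u,v} \le |E(S)| = |S|$, which holds by the variable bounds $x\le 1$. I will use \eqref{eq:tree8}, since that is the setting of \treeplus.

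Adding the two inequalities, the $x$-terms cancel and we get
\begin{equation*}
\sum_{(u,v)\in\diedge{E(S)}} y_{u,v} \le |S|-2 \le |S|-1 ,
\end{equation*}
which is \eqref{eq:tree5} for $S$.

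To make the dominance claim robust, I would also record the alternative derivation using only $x\le 1$. In that case the bound becomes $2\sum y \le |S|-2+|S|$, that is, $\sum y \le |S|-1$. This is again \eqref{eq:tree5}, so the implication does not depend on which linking constraint is present.

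The only delicate point is the bookkeeping of the arc set versus the edge set. One must check that $\diedge{S}$ in \eqref{eq:tree5} coincides with $\diedge{T}$ for $T=E(S)$, and that \eqref{eq:tree8} (or \eqref{eq:tree4} summed over both orientations) is indexed consistently with this. Beyond that the argument is a two-line aggregation, and I do not expect a real obstacle.
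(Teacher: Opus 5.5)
Your primary route has a sign error. Summing \eqref{eq:tree8} over $E(S)$ gives $\sum_{(u,v)\in\diedge{E(S)}} y_{u,v} - \sum_{\{u,v\}\in E(S)} x_{u,v} \le 0$. This carries $-x$, exactly as \eqref{eq:tree-strong} does. Adding the two yields $3\sum y - 2\sum x \le |S|-2$, and nothing cancels.

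To eliminate the $-\sum x$ term in \eqref{eq:tree-strong} you need an \emph{upper} bound on $\sum x$. Constraint \eqref{eq:tree8} only bounds $\sum x$ from below.

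The inequality $\sum y \le |S|-2$ you claim to derive is in fact invalid. Take $G$ to be a chordless $4$-cycle $a,b,c,d$ with $k=1$, set all $x=1$, and select the arcs $(a,b),(b,c),(c,d)$. This is a feasible integer solution of \treeplus. It satisfies \eqref{eq:tree-strong} with equality ($2\cdot 3-4=2=|S|-2$) and satisfies \eqref{eq:tree8}, yet $\sum y = 3 > |S|-2$. So the chain ``$\le |S|-2 \le |S|-1$'' must go.

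Your fallback via $x_{u,v}\le 1$ is correct and is exactly the paper's argument: $2\sum y \le |S|-2+\sum_{T} x \le |S|-2+|T| = 2|S|-2$. It uses only the variable bounds and no linking constraint. Present that as the proof and drop the \eqref{eq:tree8} route entirely.
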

For the proof, see Section~\ref{proof:tree-strong} of the appendix.


The resulting improved model \treeplus is defined as Model \ref{model:treeplus}.
\begin{model}
\begin{align}  
&  \text{min} && \sum_{u<v}d(u,v)\cdot x_{u,v}  \notag  &&\\
& \text{s.t.} && \eqref{eq:tree1}-\eqref{eq:tree3},\eqref{eq:tree5}-\eqref{eq:tree7},\eqref{eq:tree8},\eqref{eq:tree-strong}\notag
\end{align}
\caption{Model \treeplus}\label{model:treeplus}
\end{model}

\section{New ILP formulations \ERS and \ERSTree} \label{sec:ers}
\subsection{\ERSBase}
One can observe that the $p$-regions problem essentially is a $k$-partitioning problem under the constraint that every region must be connected. It is important to note that, in this setting, the complete graph $H=(V,E(H))$ defining the objective and the adjacency graph $G = (V,E(G))$ defining the connectivity are independent from each other. Given the adjacency graph $G$ and dissimilarities $d: {V \choose 2} \rightarrow \mathbb{R}$, define the dissimilarity graph to be the complete graph $H=(V,E(H))$ with edge weights defined by $d$. The $p$-regions problem can then be stated as finding an optimal $k$-partitioning in $H$, with the additional constraint that each partition $V_i$ must induce a connected subgraph $G[V_i]$ in $G$. 

Following this motivation, we combine ideas from the presented state-of-the art ILP formulations for $k$-partitioning with formulations for enforcing regional connectivity.  
We propose a new ILP model for the $p$-regions problem, which is derived by combining the edge-representative model for $k$-partitioning~\cite{ales16} and the vertex separator constraints for connectivity~\cite{oehrlein17}. As in the edge-representatives model for $k$-partitioning, we define binary variables $x_{u,v}$ (called \emph{pairing variables}) for each $u,v \in V$ with $x_{u,v} = 1$ iff $u$ and $v$ share a partition and binary representatives variables $r_v$ for $v \in V$, with $r_v = 1$ iff $v$ is a representative. The model $\ERSBase$ (\textbf{e}dge-\textbf{r}epresentatives model with \textbf{s}eparator constraints) is defined as Model \ref{model:ersbase}.
\begin{model}[ht]
\begin{align}  
&  \text{min}  && \sum_{u<v}d(u,v)\cdot x_{u,v} \notag  &&\\
& \text{s.t.} && \eqref{eq:ERS2}-\eqref{eq:separator}\notag
\end{align}
\caption{Model \ERSBase }\label{model:ersbase}
\end{model}
The model contains $\mathcal{O}(n^2)$ variables, $\mathcal{O}(n^3)$ constraints of type \eqref{eq:ERS2}-\eqref{eq:ERS7} and up to $\mathcal{O}(2^nn^2)$ constraints of type \eqref{eq:separator}. The correctness of the model directly follows from the correctness of the edge-representative model for the $k$-partitioning problem and the correctness of the vertex separator constraints. This model forms the basis of the following models \ERS and \ERSTree. Notably, it does not contain any symmetries like \flow, so we do not need to consider any symmetry breaking constraints.
\subsection{$\ERS$}
The model $\ERSBase$ already correctly defines the feasible solutions and objective of the $p$-regions problem. To strengthen the LP relaxation of the model, we additionally include further valid constraints.

\medskip\noindent
\textbf{$k$-Forest Constraint.} The $k$-forest constraint follows from the observation that any connected partition contains a spanning tree with exactly $|V_k|-1$ edges over its vertices.

As the incident vertices of these trees are paired together, the following inequality is valid:
\begin{equation}
     \sum_{\{u,v\} \in E}x_{u,v} \geq n-k \label{eq:edge-sum}
\end{equation}
\textbf{General Clique Constraints.}
General clique constraints were introduced in Chopra et al.~\cite{chopra93} for the $k$-partitioning problem and were proven to be facet defining for the $k$-partitioning polytope. They are based on the idea that in a set $Q$ of $|Q| = qk+p$ vertices (with $q \in [\lfloor\frac{n}{k}\rfloor], p \in [0,k-1]$), minimizing the number of vertex pairs in $Q$ that are in the same region is achieved by spreading the vertices of $Q$ into $p$ clusters of size $q+1$ and $k-p$ clusters of size $q$. The resulting lower bound on the number of paired vertices yields the following valid constraints:
\begin{align}\label{eq:gen-clique}
    \sum_{u,v \in Q}x_{u,v} \geq {q+1 \choose 2}p+{q \choose 2}(k-p)  && \forall Q \subseteq V,|Q| = qk+p \geq k+1
\end{align}

The constraints corresponding to subsets $Q$ with $|Q| = k+1$ are also simply known as clique inequalities~\cite{chopra93}.

\medskip\noindent
\textbf{Formulation $\ERS$.}
We define the model \ERS (see Model \ref{model:ers}) to be the model $\ERSBase$ strengthened with the general clique constraints and the $k$-forest constraint.
\begin{model}
\begin{align}  
& \text{min}  && \sum_{u<v}d(u,v)\cdot x_{u,v}\notag   &&\\
& \text{s.t.} && \eqref{eq:ERS2}-\eqref{eq:separator},\eqref{eq:edge-sum},\eqref{eq:gen-clique}\notag
\end{align}
\caption{Model \ERS}\label{model:ers}
\end{model}
The model contains the same set of variables as $\ERSBase$. It has one additional constraint \eqref{eq:edge-sum} and $\mathcal{O}(2^n)$ additional constraints of type \eqref{eq:gen-clique}.
\subsection{\ERSTree}
To further tighten the model, we consider a combination of the models \ERS and \treeplus, denoted as \ERSTree (see Model \ref{model:erstree}). 
\begin{model}
\begin{align}  
& \text{min}  && \sum_{u<v}d(u,v)\cdot x_{u,v} \notag   &&\\
& \text{s.t.}  && \eqref{eq:ERS2}-\eqref{eq:separator},\eqref{eq:edge-sum},\eqref{eq:gen-clique}\notag\\
&&&\eqref{eq:tree1},\eqref{eq:tree3},\eqref{eq:tree5}-\eqref{eq:tree7},\eqref{eq:tree8},\eqref{eq:tree-strong}\notag\\
&&& \sum_{v \in N(u)}y_{u,v} = 1 - r_u &&  \forall u\in V\label{eq:ers-tree1}  
\end{align}
\caption{Model \ERSTree}\label{model:erstree}
\end{model}
Constraints \eqref{eq:ers-tree1} are a strengthened version of constraints \eqref{eq:tree2}, which exploit the presence of root variables in formulation \ERS. We consider the arborescenes to be rooted in the root vertices defined by variables $r$ in \ERS. A vertex has exactly one outgoing arc in the arborescence if it is not a root and otherwise zero. This has the additional benefit of eliminating some symmetries that arise from the ambiguous choice of trees for each partition. 

The polyhedral analysis in the next section demonstrates that combining the constraints of \ERS and \treeplus into \ERSTree yields a model that can be strictly stronger than both.

\section{Theoretical Analysis}\label{sec:theory}
In this section, we conduct a theoretical analysis of the different ILP formulations, proving that \ERSTree has a strong relaxation.
We compare the relaxation strength of the SOTA formulations \tree and \flow~\cite{duque11} with our new formulations \ERS and \ERSTree. We will use the following notation: $P(\square) \subseteq \mathbb{R}^{|\square|}$ describes the set of feasible solutions for the LP relaxation of $\square \in \{\ERS,\ERSTree,\tree,\flow\}$ (where $|\square|$ is the number of variables in $\square$). Furthermore, for a model $\square$ and a set $\Delta$ of variables contained in $\square$, $P_\Delta(\square)$ describes the projection of $P(\square)$ onto the space of variables $\Delta$.

In the following, we will compare the polytopes of the models with respect to their projection onto the pairing variables $P_x(\square)$ for the following three reasons: (1) These variables are the only variables relevant for the objective, (2) they are included in every model, (3) they completely describe a partitioning without any arbitrary symmetries (e.g. region indices).
We assume that $M$ is chosen to be $n-k+1$ for \flow, as every lower value for $M$ leads to an incomplete model (i.e. the model excludes feasible solutions).

We first remark that the relaxation of \flow is very weak, which can be seen from the following theorem:
\begin{theorem}\label{theorem:flow-relax}
For $k \geq 3$, $P_x(\ERS) \subseteq P_x(\flow)$ and $P_x(\tree) \subseteq P_x(\flow)$ and both inclusions can be strict. In particular, for any instance with $3 \leq k \leq n$, it holds that $P_x(\flow) = [0,1]^{(n^2-n)}$ and for $k < n$ the inclusion is always strict.
\end{theorem}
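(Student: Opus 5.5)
The plan is to prove the middle claim first: for $3 \le k \le n$, $P_x(\flow) = [0,1]^{(n^2-n)}$. Both inclusions then follow at once, because every LP relaxation considered bounds the pairing variables by $0 \le x_{u,v} \le 1$ (and enforces symmetry), so $P_x(\ERS)$ and $P_x(\tree)$ lie in the cube. Strictness for $k<n$ needs one cube point outside both $P_x(\ERS)$ and $P_x(\tree)$.

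\textbf{Step 1: every $x$ extends to a feasible \flow point.} Fix an arbitrary symmetric $x \in [0,1]^{(n^2-n)}$ and choose the remaining variables as follows.
\begin{itemize}
\item Assignment variables: because $z_{1,1}=1$ is forced, set $z_{1,1}=1$ and $z_{1,i}=0$ for $i \ge 2$. Set $z_{v,1}=0$ for $v \ne 1$, and spread every other vertex evenly over the remaining $k-1$ regions: $z_{v,i} = 1/(k-1)$ for $v\neq 1$, $i\ge 2$.
\item Roots: $r_{1,1}=1$, $r_{v,1}=0$ for $v \ne 1$, and $r_{v,i} = 1/(n-1)$ for $v\ne 1$, $i \ge 2$.
\item Flows: all $f^i_{u,v}=0$.
\end{itemize}

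\textbf{Step 2: check the \flow constraints.}
\begin{itemize}
\item \eqref{eq:flow3}, \eqref{eq:flow4} and \eqref{eq:flow9} hold by construction.
\item \eqref{eq:flow5} needs $1/(n-1) \le 1/(k-1)$, which holds since $k\le n$.
\item \eqref{eq:flow6}, \eqref{eq:flow7} and \eqref{eq:flow12} hold trivially with $f=0$.
\item \eqref{eq:flow2}: we need $z_{u,i}+z_{v,i}-1 \le 0$ for every pair, so that any $x_{u,v}\ge 0$ works. In region $1$ only vertex $1$ has positive weight, so the sum is at most $1$. In regions $i\ge2$ each weight is at most $1/(k-1) \le 1/2$, which is exactly where $k\ge 3$ is used.
\item \eqref{eq:flow8} with $f=0$ reads $0 \ge z_{v,i} - M r_{v,i}$. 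For $i=1$ this is $1-M\le 0$ at $v=1$ and $0\le 0$ otherwise. For $i\ge 2$ and $v \ne 1$ we need
\[
\frac{1}{k-1} \le \frac{n-k+1}{n-1}.
\]
This is equivalent to $(k-1)(n-k+1)-(n-1) = (k-2)(n-k) \ge 0$, which is true.
\end{itemize}
I expect this verification, especially choosing $r$ so that \eqref{eq:flow5} and \eqref{eq:flow8} hold simultaneously with $M=n-k+1$, to be the only delicate point.

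\textbf{Step 3: strictness for $k<n$.} Take the point $x=0$, which lies in the cube.
\begin{itemize}
\item It violates the $k$-forest constraint \eqref{eq:edge-sum}, since $0 < n-k$, so $x=0 \notin P_x(\ERS)$.
\item For \tree, constraints \eqref{eq:tree4} force $y_{u,v} \le 0$ on every arc. Then \eqref{eq:tree1} would require $0 = n-k>0$, a contradiction, so $x=0 \notin P_x(\tree)$.
\end{itemize}
Hence both inclusions are strict whenever $3\le k<n$, which also shows they can be strict.
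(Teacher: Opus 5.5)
Your proof is correct and follows the paper's argument. You use the same fractional point (vertex $1$ fixed to region $1$, every other vertex spread with weight $\frac{1}{k-1}$ over regions $2,\dots,k$, roots $\frac{1}{n-1}$, zero flow) and the same witness $x=0$ for strictness. The only deviations are minor: your identity $(k-1)(n-k+1)-(n-1)=(k-2)(n-k)$ lets you cover $k=n$ with the same construction instead of the paper's separate integral case, and you exclude $x=0$ from $P_x(\ERS)$ via \eqref{eq:edge-sum}, whereas the paper uses \eqref{eq:ERS4}--\eqref{eq:ERS5}.
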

\begin{proof}(Sketch)
    In a fractional solution, we can assign every vertex to be partially assigned to every region $i \neq 1$ by $z_{v,i} = \frac{1}{k-1}$. This results in the right-hand side of constraints \eqref{eq:flow2} to be zero and the pairing variables to be unbounded. To show that these conclusion is strict for any $k < n$, we can observe that neither \ERS nor \tree contain the solution $x_{u,v} = 0$ for all $u,v \in V$. For the full proof, see Section \ref{proof:flow-relax} of the appendix. 
\end{proof}

\begin{theorem}\label{theorem:treestrong}
     $P_x(\treeplus) \subseteq P_x(\tree)$ and this inclusion can be strict.
\end{theorem}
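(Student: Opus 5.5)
The inclusion is a containment of constraint systems. \treeplus uses the same variables as \tree. It keeps constraints \eqref{eq:tree1}--\eqref{eq:tree3} and \eqref{eq:tree5}--\eqref{eq:tree7}, replaces \eqref{eq:tree4} by \eqref{eq:tree8}, and adds \eqref{eq:tree-strong}. For any $(x,y)\in P(\treeplus)$, nonnegativity of $y$ gives $y_{u,v}\le y_{u,v}+y_{v,u}\le x_{u,v}$ for every $(u,v)\in\diedge{E}$, so \eqref{eq:tree4} holds. Hence $P(\treeplus)\subseteq P(\tree)$, and projecting onto $x$ gives $P_x(\treeplus)\subseteq P_x(\tree)$.

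For strictness, I would exhibit a concrete small instance and a point $\bar x\in P_x(\tree)$ that lies outside $P_x(\treeplus)$. I would take $G$ to be a chordless cycle $C_n$ (say $n=4$) and $k=1$, or more safely $k=2$. The idea is to exploit the weakness that \eqref{eq:tree4} allows $y_{u,v}=y_{v,u}=x_{u,v}$ on each edge. For $k=1$, constraint \eqref{eq:tree1} demands $\sum y=n-1$. Setting $y_{u,v}=y_{v,u}=\frac{n-1}{2n}$ on every cycle edge satisfies \eqref{eq:tree2}, since each vertex has two incident arcs summing to $\frac{n-1}{n}\le 1$. It satisfies \eqref{eq:tree5}: for $S=V$ the sum is $n-1$, and for proper subsets $S$ the induced graph is a path, giving at most $2(|S|-1)\frac{n-1}{2n}\le |S|-1$. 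It also satisfies \eqref{eq:tree4} as soon as $x_{u,v}\ge\frac{n-1}{2n}$ on cycle edges.

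I would choose $x$ uniformly: $x\equiv\alpha$ on all pairs with $\alpha=\frac{n-1}{2n}$. This needs transitivity \eqref{eq:tree3}, i.e.\ $\alpha\ge 2\alpha-1$, which is trivially true. So $\bar x\equiv\alpha\in P_x(\tree)$.

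To show $\bar x\notin P_x(\treeplus)$, I would argue that no $y$ completes it. Constraint \eqref{eq:tree8} gives $y_{u,v}+y_{v,u}\le\alpha$ on each of the $n$ cycle edges, so $\sum y\le n\alpha=\frac{n-1}{2}<n-1$, which contradicts \eqref{eq:tree1} for $n\ge2$. If $k=1$ is considered degenerate, I would use $k=2$ with $n$ large enough that the same counting works: $y_{u,v}=y_{v,u}=\frac{n-2}{2n}$, $x\equiv\frac{n-2}{2n}$, and the bound $n\cdot\frac{n-2}{2n}=\frac{n-2}{2}<n-2$ holds whenever $n>2$. The \tree constraints are then checked exactly as before, with \eqref{eq:tree5} on path subsets now giving $2(|S|-1)\frac{n-2}{2n}\le|S|-1$.

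The only real care point is verifying \eqref{eq:tree5} for every $S$ in the witness. In a cycle, induced subgraphs are paths or the whole cycle, and the symmetric uniform choice makes that check mechanical. Notably, the separation uses only \eqref{eq:tree8}, so \eqref{eq:tree-strong} is not even needed for the strictness claim.
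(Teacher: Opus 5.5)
Your proof is correct. The inclusion part is the paper's own argument: \eqref{eq:tree8} together with $y\ge 0$ implies \eqref{eq:tree4}.

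For strictness you take a different route. The paper uses a triangle with $k=1$, $x\equiv\frac23$, and $y=\frac23$ on the arcs of a directed 3-cycle, and checks that this point violates the new inequality \eqref{eq:tree-strong} for $S=V$, $T=E$. As written, that only shows the point lies in $P(\tree)\setminus P(\treeplus)$. To conclude $x\notin P_x(\treeplus)$ one extra line is needed: \eqref{eq:tree1} and \eqref{eq:tree8} force $y_{u,v}+y_{v,u}=\frac23$ on every edge, so every completion $y$ violates \eqref{eq:tree-strong}.

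Your argument works directly on the projection. Summing \eqref{eq:tree8} over $E$ and comparing with \eqref{eq:tree1} shows that every $x\in P_x(\treeplus)$ satisfies the $k$-forest inequality \eqref{eq:edge-sum}. By contrast, \eqref{eq:tree4} only gives $\sum_{\{u,v\}\in E}x_{u,v}\ge\frac{n-k}{2}$, and your uniform point on $C_n$ lies in that gap. With $\alpha=\frac{n-k}{2n}$ the construction works for every $1\le k<n$.

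The trade-off is that your witness says nothing about whether \eqref{eq:tree-strong} adds strength beyond \eqref{eq:tree8}, which is what the paper's example is meant to show. Indeed, the triangle point $x\equiv\frac23$ with $y\equiv\frac13$ on all six arcs satisfies \tree together with \eqref{eq:tree8}, and it is cut off only by \eqref{eq:tree-strong}.

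One small imprecision: for a proper subset $S$ of a cycle, $G[S]$ is a disjoint union of paths, not necessarily a single path. It still has at most $|S|-1$ edges, so your check of \eqref{eq:tree5} goes through unchanged.
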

\begin{proof}(Sketch)
    The inclusion trivially holds. To see that the inclusion can be strict, we can consider a triangle, and a fractional assignment of $\frac{2}{3}$ for every pairing and every arc in a directed cycle in the triangle. We observe that this satisfies constraints \eqref{eq:tree5} but violates constraints \eqref{eq:tree-strong}. For the complete proof, see Sections \ref{proof:treestrong} of the appendix.
\end{proof}

\begin{theorem}\label{theorem:erstree-incomp}
     $P_x(\ERS) \not\subset P_x(\tree)$ and $P_x(\tree) \not \subset P_x(\ERS)$, i.e. the two models are incomparable under inclusion. The same holds for $P_x(\ERS)$ and $P_x(\treeplus)$.
\end{theorem}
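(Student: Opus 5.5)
The statement consists of two non-inclusions, so I would prove it with two explicit small counterexamples. The chain $P_x(\treeplus) \subseteq P_x(\tree)$ from Theorem~\ref{theorem:treestrong} lets me get all four claims from just two points. A point $\bar x \in P_x(\treeplus)\setminus P_x(\ERS)$ lies in $P_x(\tree)$, so it shows both $P_x(\treeplus)\not\subset P_x(\ERS)$ and $P_x(\tree)\not\subset P_x(\ERS)$. A point $\hat x \in P_x(\ERS)\setminus P_x(\tree)$ is also outside $P_x(\treeplus)$, so it gives the two remaining non-inclusions.

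\textbf{Point in \treeplus but not in \ERS.} I would take any connected instance with $2\le k<n$ and the all-ones pairing vector $\bar x_{u,v}=1$. For $y$, fix an integral spanning forest of $G$ with $k$ components, orient each tree towards a root, and set $y_a=1$ exactly on these arcs. Then the following hold:
\begin{itemize}
\item \eqref{eq:tree1} holds, since exactly $n-k$ arcs are selected.
\item \eqref{eq:tree2} holds, since every vertex has out-degree at most one.
\item Transitivity \eqref{eq:tree3} is trivial for $\bar x\equiv 1$.
\item \eqref{eq:tree8} holds because at most one orientation of each edge is used.
\item \eqref{eq:tree5} holds because $y$ is a forest.
\item \eqref{eq:tree-strong} holds because $2\sum_{\diedge{T}} y\le 2(|S|-1)$ and $\sum_T \bar x=|T|\ge|S|$.
\end{itemize}
So $(\bar x,y)\in P(\treeplus)$. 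In \ERS, however, \eqref{eq:ERS3} gives $r_v\le 1-\bar x_{1,v}=0$ for all $v\ne 1$, and $r_1\le 1$. Hence $\sum_v r_v\le 1<k$, contradicting \eqref{eq:ERS5}, so $\bar x\notin P_x(\ERS)$. This direction is routine.

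\textbf{Point in \ERS but not in \tree.} This is the step I expect to be the main obstacle, because \ERS carries the general clique constraints and the $k$-forest inequality, and these must all be satisfied by a point that still violates \tree. My first attempt will use the structural bound that \tree imposes through the arcs. Each vertex has out-degree at most one, $y_{u,v}\le x_{u,v}$, and the subtour constraint on $\{u,v\}$ gives $y_{u,v}+y_{v,u}\le 1$. In addition, $\sum_{\diedge{E(S)}}y\le |S|-1$ holds on dense sets $S$. Together these give an upper bound on $\sum_a y_a$, and I want an instance where this bound is below $n-k$. \ERS only controls edge pairings globally through $\sum_{E}x\ge n-k$, so the idea is to place high pairing values inside a dense subgraph such as a triangle. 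There the subtour constraints cap the usable $y$-mass at $|S|-1$ even though $\sum_{E(S)}x$ counts $|S|$. Outside the dense part I would use edges whose pairing values are fractional, so that the missing $y$-mass cannot be recovered there.

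Concretely, I would try a triangle $\{1,2,3\}$ with one or two pendant vertices attached, with $k=2$ so that the clique constraints do not force $x\equiv1$. I would set $x$ to $1$ inside the triangle and to values around $\tfrac12$ on and towards the pendant vertices, and choose the $r$-values accordingly. Then I would check the \ERS constraints in turn: transitivity, \eqref{eq:ERS3}, \eqref{eq:ERS4}, \eqref{eq:ERS5}, the separator constraints for the few non-adjacent pairs, \eqref{eq:edge-sum}, and \eqref{eq:gen-clique}. After that I would bound the maximum of $\sum_a y_a$ under the \tree constraints by a small LP or a direct counting argument, aiming to show it is strictly less than $n-k$. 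If the pendant construction fails to break the \tree bound, the fallback is to exploit that \tree alone does not bound pairings of non-adjacent vertices via separators. I would then search, by solving the separation LP on small instances ($n\le 5$), for an \ERS vertex that is infeasible for \tree, and verify that candidate by hand.
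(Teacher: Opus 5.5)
\textbf{First half.} Your reduction to two points is right, and your first point is correct. It is simpler and more general than the paper's. $\bar x\equiv 1$ together with an integral oriented $k$-component spanning forest satisfies every constraint of \treeplus, while \eqref{eq:ERS3} with $u=1$ forces $r_v\le 0$ for all $v\neq 1$, so $\sum_v r_v\le 1<k$. The paper instead uses the path $v_1v_2v_3$ with $k=2$, $x_{v_1,v_2}=x_{v_2,v_3}=\tfrac12$, $x_{v_1,v_3}=1$, which violates the separator inequality $x_{v_1,v_3}\le x_{v_1,v_2}$. Both arguments work.

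\textbf{Second half: a genuine gap.} You exhibit no point, and the construction you sketch cannot succeed. With one pendant and $k=2$ we have $n-k=2$, which the triangle alone supplies. With two leaves $a,b$ the deficit after the triangle is $1$. However, \eqref{eq:tree4} bounds each \emph{arc} separately, so in \tree an edge can carry $y_{u,v}+y_{v,u}$ up to $\min(2x_{u,v},1)$. Transitivity with the triangle gives $x_{c,a}=\alpha$ and $x_{c,b}=\beta$ for every triangle vertex $c$. Each leaf's unique neighbour separates it, so \eqref{eq:separator} gives $x_{a,b}\le\min(\alpha,\beta)$. The clique inequality \eqref{eq:gen-clique} on $\{c,a,b\}$ then forces $\alpha+\beta\ge\tfrac23$. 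Hence the leaf edges can supply $\min(2\alpha,1)+\min(2\beta,1)\ge 1$, and \tree is feasible. The obstruction is that the clique inequalities force pairing mass at the leaves, the separators tie that mass to the leaf edges, and \tree may use each such edge at twice its pairing value. Your computer-search fallback is not an argument either.

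\textbf{The missing idea.} You need the clique inequality to be paid for by \emph{non-edge} pairings, which requires larger $k$. Take a triangle $c_1c_2c_3$ with four leaves $a_1,\dots,a_4$ attached to $c_1$, and $k=4$ (so $n=7$, $n-k=3$). Set $x=1$ on the three triangle pairs and $x=\tfrac1{10}$ on all other pairs. Use the order $a_1<\dots<a_4<c_1<c_2<c_3$ and $r=(1,\tfrac9{10},\tfrac8{10},\tfrac7{10},\tfrac6{10},0,0)$. Then:
\begin{itemize}
\item Transitivity, \eqref{eq:ERS3}--\eqref{eq:ERS5}, \eqref{eq:separator} and \eqref{eq:edge-sum} (here $3.4\ge 3$) are immediate. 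For the separators, every non-adjacent pair contains a leaf, and every such separator contains $c_1$.
\item For \eqref{eq:gen-clique} only $|Q|\in\{5,6,7\}$ occur. The minimum for $|Q|=5$ is exactly $6\cdot\tfrac1{10}+4\cdot\tfrac1{10}=1$, attained at $\{a_1,\dots,a_4,c_i\}$. Every $|Q|=6$ gives at least $2.4\ge 2$, and $Q=V$ gives $4.8\ge 3$.
\item In \tree, \eqref{eq:tree5} on the triangle allows $y$-mass at most $2$ there, and \eqref{eq:tree4} allows at most $\tfrac1{10}$ on each of the eight leaf arcs. So $\sum y\le 2.8<3$, contradicting \eqref{eq:tree1}.
\end{itemize}
More generally, $k\ge 4$ leaves with value $\tfrac{2}{k(k+1)}$ give leaf capacity $\tfrac{4}{k+1}<1$.

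\textbf{Caution about the paper's example.} Do not borrow it for this step. Its claim that $\sum_E x=n-k$ forces $y_{u,v}+y_{v,u}=x_{u,v}$ uses \eqref{eq:tree8}, so it only shows $P_x(\ERS)\not\subset P_x(\treeplus)$. Under \eqref{eq:tree4}, set $y=\tfrac38$ on all eight arcs of the $4$-cycle on $\{v_5,\dots,v_8\}$ and $y=\tfrac18$ on all arcs of the four value-$\tfrac15$ edges inside $\{v_1,\dots,v_4\}$. This satisfies every \tree constraint with $\sum y=4=n-k$. Your reduction needs a point outside $P_x(\tree)$, which the leaf example provides.
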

\begin{proof}(Sketch)
    To show that $P_x(\tree) \not \subset P_x(\ERS)$, we can consider the instance depicted in Figure \ref{fig:seperator-implied-counter} with a corresponding solution of \treeplus for $k=2$. Notably, it violates the vertex separator constraints of 
    \ERS. 
    
    For $P_x(\ERS) \not \subset P_x(\tree)$, consider the example depicted in Figure \ref{fig:cycle-implied-counter} and the corresponding solution for \ERS for $k=4$: To satisfy constraint \eqref{eq:tree1}, we would need to violate the subtour elimination constraint for $S = \{v_5,v_6,v_7,v_8\}$. 

    For the full proof, see Section \ref{proof:erstree-incomp} of the appendix.
\end{proof}
\begin{figure}
    \centering
    \includegraphics[]{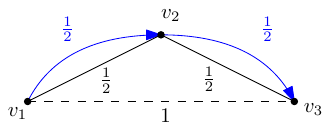}
    \caption{Example for $P(\treeplus) \not \subset P(\ERS)$, $k=2$: the values of $y$ are in blue, the values for $x$ in black. Solid lines depict existing edges.}
    \label{fig:seperator-implied-counter}
\end{figure}

\begin{figure}
    \centering
    \includegraphics[]{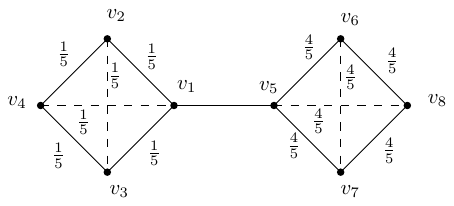}
    \caption{Example for $P(\ERS) \not \subset P(\tree)$, $k=4$. The figure depicts the non-zero values of $x$, existing edges are drawn as solid lines.  The representative variables have values: $r_1,r_5 = 1,r_2=\frac{4}{5},r_3 = \frac{3}{5},r_4 =\frac{2}{5},r_6=\frac{1}{5},r_7,r_8 =0$}
    \label{fig:cycle-implied-counter}
\end{figure}

\begin{theorem}\label{theorem:ersstrong}
     $P_x(\ERSTree) \subseteq P_x(\treeplus)$ and $P_x(\ERSTree) \subseteq P_x(\ERS)$ and both inclusions can be strict.
\end{theorem}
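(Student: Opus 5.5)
The inclusions follow by comparing constraint sets, and strictness follows from the instances already used for Theorem~\ref{theorem:erstree-incomp}.

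\emph{Inclusions.} Take any $(x,y,r)\in P(\ERSTree)$.

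For $P_x(\ERSTree)\subseteq P_x(\ERS)$, note that \ERSTree contains every constraint of \ERS, namely \eqref{eq:ERS2}--\eqref{eq:separator}, \eqref{eq:edge-sum} and \eqref{eq:gen-clique}. The additional variables $y$ do not appear in these constraints. Hence dropping $y$ gives a point $(x,r)\in P(\ERS)$ with the same $x$.

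For $P_x(\ERSTree)\subseteq P_x(\treeplus)$, I check that $(x,y)$ satisfies every constraint of \treeplus. All constraints of \treeplus except \eqref{eq:tree2} appear verbatim in \ERSTree. So it remains to derive \eqref{eq:tree2} from \eqref{eq:ers-tree1}. In the LP relaxation, \eqref{eq:ERS6} is relaxed to $0\le r_u\le 1$. Therefore
\[
\sum_{v\in N(u)} y_{u,v} = 1-r_u \le 1 \quad \text{for every } u\in V .
\]
Discarding $r$ thus yields a point of $P(\treeplus)$ with the same pairing vector $x$.

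\emph{Strictness.} I reuse the two instances from the proof of Theorem~\ref{theorem:erstree-incomp}.

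For $P_x(\ERSTree)\subsetneq P_x(\treeplus)$, consider the instance of Figure~\ref{fig:seperator-implied-counter}. Its point $x$ lies in $P_x(\treeplus)$ but violates a vertex separator constraint \eqref{eq:separator}. Since \eqref{eq:separator} involves only $x$ and is part of \ERSTree, this $x$ is not in $P_x(\ERSTree)$. Equivalently, $P_x(\ERSTree)\subseteq P_x(\ERS)$ and $x\notin P_x(\ERS)$.

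For $P_x(\ERSTree)\subsetneq P_x(\ERS)$, consider the instance of Figure~\ref{fig:cycle-implied-counter} with $k=4$. Its point $x$ lies in $P_x(\ERS)$, but Theorem~\ref{theorem:erstree-incomp} shows $x\notin P_x(\tree)$. By Theorem~\ref{theorem:treestrong} and the inclusion just proved,
\[
P_x(\ERSTree)\subseteq P_x(\treeplus)\subseteq P_x(\tree),
\]
so $x\notin P_x(\ERSTree)$.

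\emph{Main obstacle.} The step needing care is making sure the Figure~\ref{fig:cycle-implied-counter} argument excludes $x$ from $P_x(\tree)$ for \emph{every} choice of $y$, not just one. That argument says that satisfying \eqref{eq:tree1} forces a violation of \eqref{eq:tree5} for $S=\{v_5,\dots,v_8\}$. I would re-check it through \eqref{eq:tree4}: the arc variables on pairs with $x_{u,v}=0$ must be zero, so the only arcs available to reach $n-k$ selected arcs lie largely inside $S$. Their total then exceeds $|S|-1$, contradicting \eqref{eq:tree5}. Since all three models \tree, \treeplus and \ERSTree contain \eqref{eq:tree1}, \eqref{eq:tree5} and either \eqref{eq:tree4} or the stronger \eqref{eq:tree8}, this infeasibility carries over unchanged.
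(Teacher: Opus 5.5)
Your inclusion arguments and your witness for $P_x(\ERSTree)\subsetneq P_x(\treeplus)$ match the paper's proof: compare constraint sets, use $1-r_u\le 1$, and invoke the examples of Theorem~\ref{theorem:erstree-incomp}. The step that fails is the witness for $P_x(\ERSTree)\subsetneq P_x(\ERS)$. You derive $x\notin P_x(\ERSTree)$ from $x\notin P_x(\tree)$, and you plan to confirm the latter via \eqref{eq:tree4}. But \eqref{eq:tree4} bounds each arc separately, so both arcs of an edge may carry $x_{u,v}$, and the counting does not close.

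Concretely, in the instance of Figure~\ref{fig:cycle-implied-counter}, the edges inside $\{v_1,\dots,v_4\}$ carry total pairing $4-\frac{16}{5}=\frac{4}{5}$, i.e.\ four edges with $x=\frac{1}{5}$. The chordless cycle $C=\{v_5,\dots,v_8\}$ has $x=\frac{4}{5}$ on each of its four edges. Put $y_{u,v}=y_{v,u}=\frac{1}{5}$ on the former edges, $y_{u,v}=y_{v,u}=\frac{3}{10}$ on the edges of $C$, and $y=0$ on all other arcs. Then:
\begin{itemize}
\item $\sum y=\frac{8}{5}+\frac{12}{5}=4=n-k$;
\item every out-degree is at most $\frac{3}{5}$;
\item \eqref{eq:tree4} holds;
\item every $S$ satisfies \eqref{eq:tree5}, e.g.\ $\frac{12}{5}\le 3$ for $S=C$;
\item transitivity is inherited from \ERS.
\end{itemize}
So this $x$ actually lies in $P_x(\tree)$, and no chain through $P_x(\tree)$ can exclude it. The argument you are quoting from Theorem~\ref{theorem:erstree-incomp} silently uses $y_{u,v}+y_{v,u}\le x_{u,v}$. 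What it really establishes is $x\notin P_x(\treeplus)$, not $x\notin P_x(\tree)$.

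The repair is to stop at \treeplus, or to argue directly in \ERSTree, using \eqref{eq:tree8}, which both models contain. Summing \eqref{eq:tree8} over $E$ gives $\sum_{a\in\diedge{E}}y_a\le\sum_{\{u,v\}\in E}x_{u,v}=4=n-k$. Then \eqref{eq:tree1} forces $y_{u,v}+y_{v,u}=x_{u,v}$ on every edge. The arcs inside $C$ therefore carry $\frac{16}{5}>3=|C|-1$, which violates \eqref{eq:tree5}. Hence $x\in P_x(\ERS)$ but $x\notin P_x(\treeplus)\supseteq P_x(\ERSTree)$, which is exactly what strictness requires. Correspondingly, your closing remark should say that the infeasibility needs \eqref{eq:tree8}, not merely \eqref{eq:tree4}, and so does not carry over to \tree.
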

\begin{proof}
    The inclusion $P_x(\ERSTree) \subseteq P_x(\ERS)$ trivially follows from \ERSTree containing all constraints of \ERS. Similarly, $P_x(\ERSTree) \subseteq P_x(\treeplus)$  follows from \ERSTree containing all constraints of \treeplus, with the exception of substituting constraints \eqref{eq:tree2} by \eqref{eq:ers-tree1}. However, \eqref{eq:tree2} is implied by \eqref{eq:ers-tree1} as for the right-hand side of the inequalities it holds that $1-r_v \leq 1$. The strict inclusions follow from Theorem \ref{theorem:erstree-incomp}.
\end{proof}
In summary, we have the inclusion hierarchy as depicted in Figure \ref{fig:inc-hierarchy}.
\begin{figure}
    \centering
\begin{tikzpicture}[node distance=0cm]

\node (A) {\flow};
\node (B) [above right of=A,xshift=3cm,yshift=1cm] {\ERS};
\node (C) [below right of=A,xshift=2cm,yshift=-1cm] {\tree};
\node (Cp) [below right of=A,xshift=4cm,yshift=-1cm] {\treeplus};
\node (D) [right of=A, xshift=6cm] {\ERSTree};

\draw (A) -- node[above] {$\supset^*$} (B);
\draw (A) -- node[below] {$\supset^*$} (C);
\draw (B) -- node[above] {$\supseteq$} (D);
\draw (C) -- node[below] {$\supseteq$} (Cp);
\draw (Cp) -- node[below] {$\supseteq$} (D);

\draw[dashed] (B) -- node[midway,fill=white]{$||$}(C);
\draw[dashed] (B) -- node[midway,fill=white]{$||$}(Cp);
\end{tikzpicture}
    \caption{Inclusion hierarchy for the projected polytopes of the five analyzed models. The strict inclusion $\supset^*$ only holds for $3 \leq k \leq n-1$}
    \label{fig:inc-hierarchy}
\end{figure}

\FloatBarrier

\section{Separation algorithms} \label{sec:separation}
With the exception of \flow, all presented formulations contain a potentially exponential number of constraints. These require a branch-and-cut implementation, where we start by solving a relaxed version of the models omitting the constraint classes having exponential size and iteratively add them. 

\medskip\noindent
\textbf{Subtour elimination constraints.} Both \tree and \treeplus contain the subtour elimination constraints \eqref{eq:tree5}. For separating integer solutions, we follow the approach by Duque et al.~\cite{duque18}, which is to compute the set of strongly connected components induced by all arcs $(u,v)$ with $y_{u,v} = 1$ in the current solution. Constraint \eqref{eq:tree2} guarantees that there is a violated subtour elimination constraint exactly if there exists a strongly connected component of size at least~2. 

For separating fractional solutions of the LP relaxation, we employ the classical algorithm by Magnanti and Wolsey~\cite{magnanti1995} that is based on reducing the problem to a min-cut problem on an auxiliary graph.

To separate our new constraints \eqref{eq:tree-strong}, we use the following heuristic:
using the above algorithm, we start with a set $S$ minimizing $|S|- \sum_{(u,v) \in \diedge{S}}y_{u,v}$. We then compute $T = \{\{u,v\} \in E(S) \mid 2 (y_{u,v}+y_{v,u}) - x_{u,v} > 0\}$. If constraint \eqref{eq:tree-strong} is violated for $S$ and $T$, we add it to our model.
We note that the algorithm by Magnanti and Wolsey cannot be directly applied to separate \eqref{eq:tree-strong}, as the associated costs $2 (y_{u,v}+y_{v,u}) - x_{u,v}$ for an edge $\{u,v\}$ can be negative, prohibiting the reduction to the min-cut problem. We do not know if the separation problem is polynomial-time solvable or NP-hard.

\medskip\noindent
\textbf{Vertex separator constraints.} The separation problem of constraints \eqref{eq:separator} for fractional solutions can be solved in polynomial time by transforming the problem of finding a minimum $u,v$-separator into a minimum $u,v$-cut problem. This can be done via a standard vertex-splitting graph transformation, see Oehrlein and Haunert~\cite{oehrlein17}. For each pair of vertices, we run the algorithm of finding a minimum $u,v$-separator twice, once with costs $x_{u,a}$ for $a \in V \setminus\{u,v\}$ and once with costs $x_{v,a}$ for $a \in V \setminus\{u,v\}$. This results in $\mathcal{O}(n^2)$ calls to a flow algorithm on a graph of size $\mathcal{O}(n+m)$. Additionally, we use \textit{nested cuts}~\cite{KochM98}, which is a technique to find multiple orthogonal cuts in one iteration.

\medskip\noindent
\textbf{General clique constraints.} The separation of general clique constraints is NP-hard, as the independent set problem can be reduced to it. Therefore, we use a greedy heuristic incorporating local search steps to separate constraints \eqref{eq:gen-clique}. In each iteration, the algorithm greedily picks the best vertex to extend the current set $S$. It then checks if it can improve the current set by swapping out a vertex $b \in S$ with a vertex $a\in V\setminus S$. If the resulting set then violates the associated cut, it is added to the model. This is repeated until $|S| > limit$.
The detailed pseudocode for the algorithm is given in Section \ref{sec:clique-sep} of the appendix (Algorithm \ref{alg:general-clique}).

\FloatBarrier

\section{Experimental evaluation}\label{sec:experiment}
In our experimental evaluation, we are interested in answering the following questions:
\begin{itemize}
    \item Q1: How does the performance of our algorithms \ERS and \ERSTree compare to the state-of-the-art algorithms \flow and \tree~\cite{duque11,duque18} and to each other?
    \item Q2: How does the performance of each evaluated model depend on the predefined number of clusters $k$? 
\end{itemize}
The algorithm was implemented using \texttt{C++20} and compiled using \texttt{gcc 13} with \texttt{-O3} optimization flag. We used \texttt{Gurobi 12.0} as the ILP solver and the \texttt{Boost Graph Library 1.89 (BGL)} for various graph algorithms. Further implementation details can be found in Section \ref{section:implement-details} of the appendix.
\subsection{Benchmark Instances}
\textbf{Real-world instances.} Our motivation for studying the p-regions problem was its application in map aggregation, consequently we are interested in real-world geostatistical data for our computational experiments.  
In that regard, we follow the experimental section of the work by Oehrlein et al.~\cite{oehrlein17} and use data provided by the \emph{European Statistical Office (Eurostat)}~\cite{eurostat}. 
The data is acquired with respect to the NUTS (Nomenclature des unit\'es territoriales statistiques) subdivision of Europe~\cite{nuts-overview}. This subdivision contains three levels of hierarchy, with NUTS 1 being the coarsest and NUTS 3 the most fine-grained level of subdivision. 

The geostatistical data we use in particular is the unemployment rate in Europe from 2010~\cite{unemployment-data}, following Haunert et al.~\cite{oehrlein17}. Table \ref{tab:instances} gives an overview of the used instances. For all countries, we used the \texttt{NUTS-3} subdivison as the input subdivision, with the exception of \texttt{Germany}, where we used the \texttt{NUTS-2} subdivison as the size of the adjacency graph for \texttt{NUTS-3} was infeasible.
For vertices $u,v \in V$, the dissimilarity was calculated as $d(u,v) = |\gamma(u)-\gamma(v)|$, where $\gamma(u)$ is the unemployment rate of area $u$.
\begin{figure}[]
    \centering
    \begin{tabular}{ccccc}
         Country  & \# Vertices & \# Edges   \\
         \hline
         \texttt{Finland} & 19 & 41 \\
         \texttt{Bulgaria} &  28 & 60 \\
         \texttt{United Kingdom} & 36 & 74 \\
         \texttt{Greece} & 38 & 71 \\
         \texttt{Germany$^*$}  & 39 & 90 \\
         \texttt{Romania}  & 42 & 100 \\
         \texttt{Spain} &  47 & 111 \\
    \end{tabular}
    \caption{Overview of the instances used. The attribute used for each instance is the unemployment rate in each area. $^*$For \texttt{Germany}, we used \texttt{NUTS-2} as the input subdivision.}

    \label{tab:instances}
\end{figure}

\medskip\noindent
\textbf{Random instances.} Following the works of Duque et al.\ on the $p$-regions problem~\cite{duque11,duque18}, we include another set of randomly generated instances simulating spatial auto-regressive processes on grid graphs. For a given integer $\Size$ and $\rho \in [0,1]$, we generate a grid graph of size $\Size \times \Size$ with autocorrelation $\rho$ and the rook contiguity criterion~\cite{Lesage}. The details can be found in Section \ref{sec:random-generation} of the appendix.

We generated one instance for every combination of $\Size \in \{4,5,6,7,8,9\}$ and $\rho \in \{0,0.3,0.6,0.9\}$. Like for the real-world instances, the dissimilarity is calculated as $d(u,v) = |\gamma(u)-\gamma(v)|$, $\gamma(u)$ being the attribute of $u$.

\subsection{Test Setup}
We ran the experiments on an HPC cluster, with each node containing 2 × Intel Xeon "Sapphire Rapids" 48-core/96-thread 2.10GHz as CPUs and 1024GB of DDR5 4800MHz memory. The system is running the Linux distribution \texttt{AlmaLinux OS, Version 9.2 (Turquoise Kodkod)}. We set a time limit of 5 hours (18,000 s) for each instance.

\subsection{Results}
Figure \ref{fig:runtime} shows the number of solved instances plotted against the runtime for \ERS, \ERSTree, \flow and \tree.
The plots are grouped by different values for the number of predetermined regions $k$. Subfigures \ref{fig:runtime-rl1} - \ref{fig:runtime-rl4} show the results on the real-world instances, subfigures \ref{fig:runtime-rand1} - \ref{fig:runtime-rand4} on the random grid instances.

On the real-world instances (Subfigures \ref{fig:runtime-rl1} - \ref{fig:runtime-rl4}), we can see that our models \ERS and \ERSTree outperform both \tree and \flow for all values of $k$, especially for medium values ($k \in \{5,6,7,8\}$). In accordance with the work by Duque et al.~\cite{duque11}, we observe that \flow performs better for smaller values of $k$, while \tree performs better for larger values. For $k \in \{3,4\}$, \flow is somewhat competitive with \ERS and \ERSTree (although still slower), whereas for all other values of $k$, it falls far behind. Comparing \ERS and \ERSTree, we observe that in the majority of cases \ERSTree outperforms \ERS, which is explained by the superior relaxation strength of \ERSTree shown in Section \ref{sec:theory}. The performance improvement increases with higher values of $k$: While for $k=3$, both models show almost equal performance, \ERSTree clearly dominates for $k = \{7,8,9,10\}$. This coincides with the fact that \tree performs much better with increasing $k$. We therefore hypothesize that the performance impact of the subtour elimination constraints increases with the number of regions. 

\begin{figure}
\centering

\begin{minipage}{0.45\linewidth}
\centering
\includegraphics[width=\linewidth]{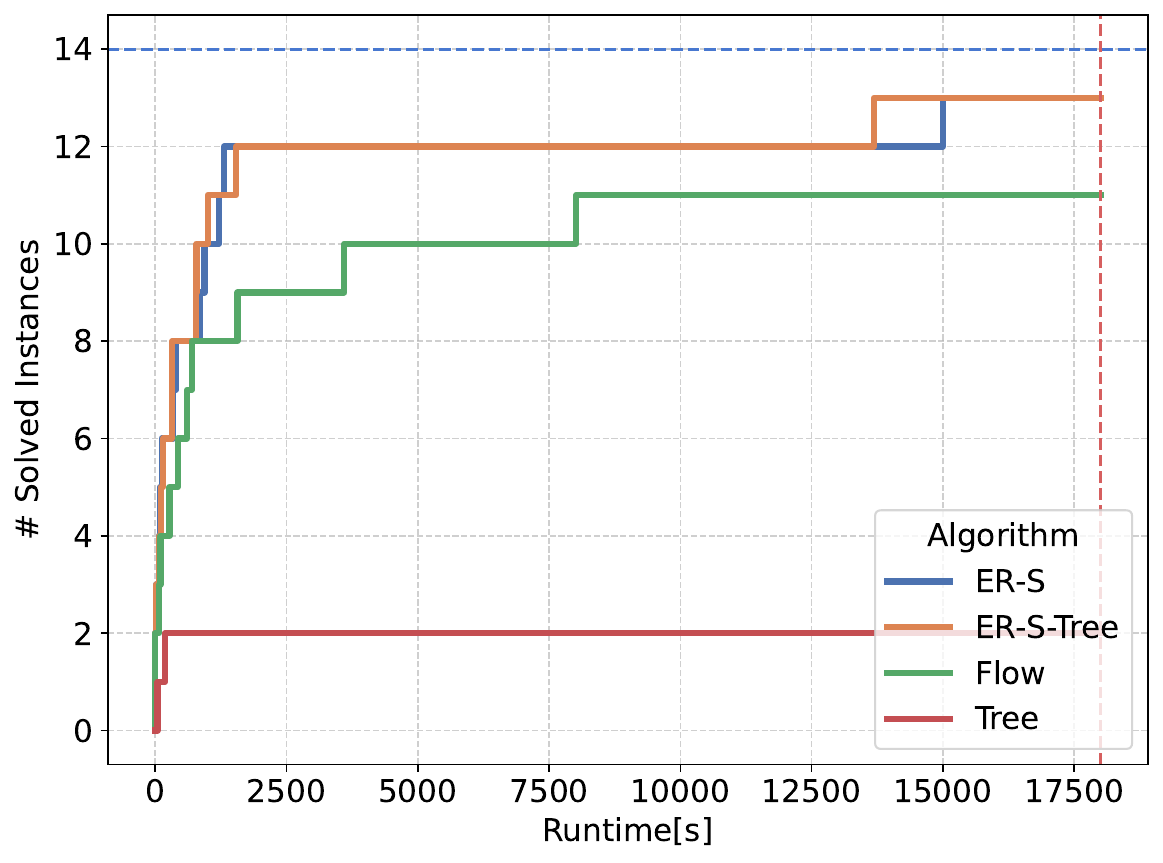}
\subcaption{Results for $k \in \{3,4\}$, real-world instances.\label{fig:runtime-rl1}}
\end{minipage}
\hfill
\begin{minipage}{0.45\linewidth}
\centering
\includegraphics[width=\linewidth]{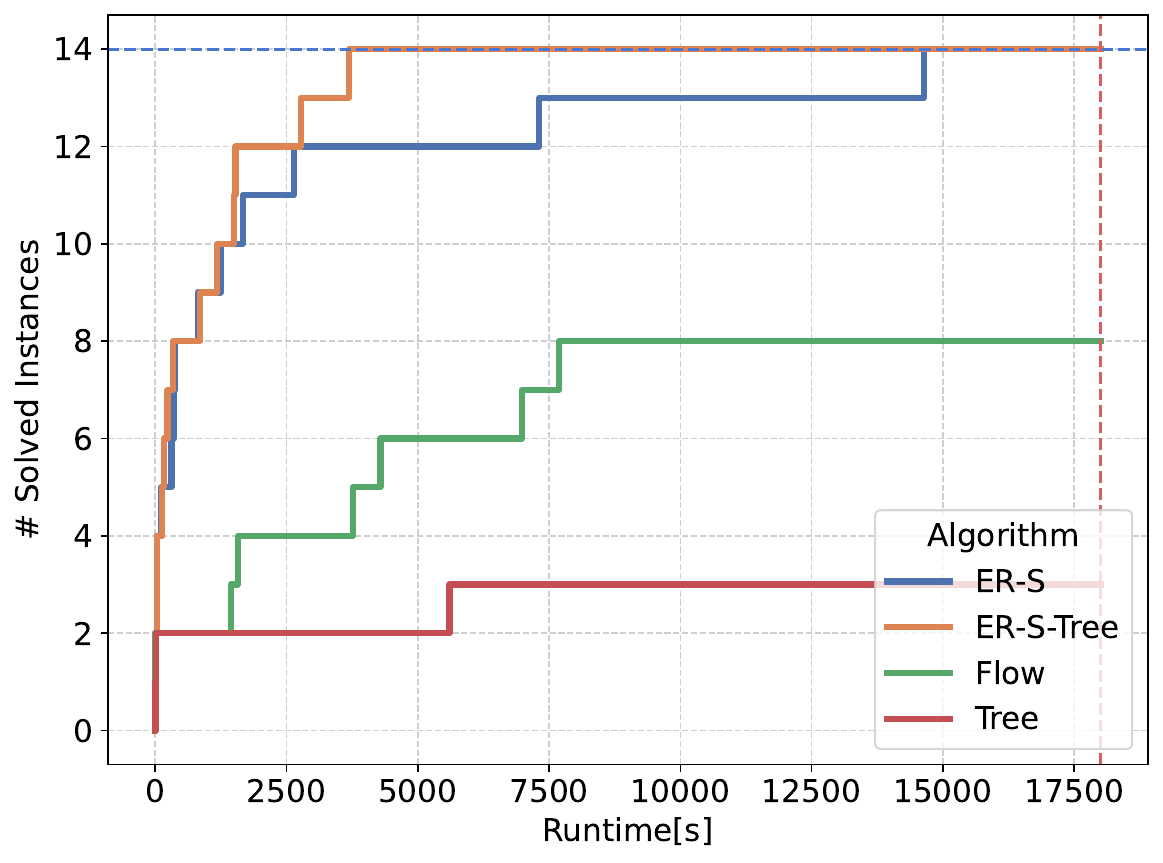}
\subcaption{Results for $k \in \{5,6\}$, real-world instances.\label{fig:runtime-rl2}}
\end{minipage}
\centering
\begin{minipage}{0.45\linewidth}
\centering
\includegraphics[width=\linewidth]{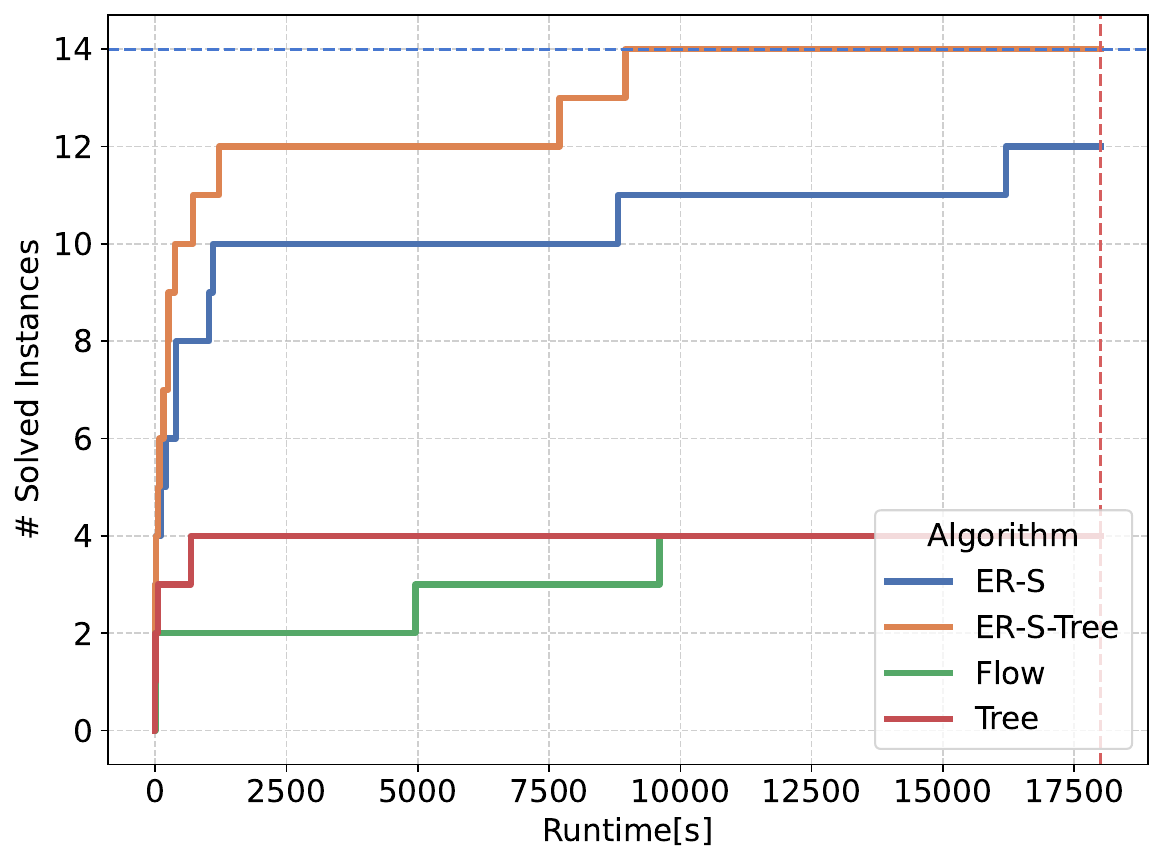}
\subcaption{Results for $k \in \{7,8\}$, real-world instances.\label{fig:runtime-rl3}}
\end{minipage}
\hfill
\begin{minipage}{0.45\linewidth}
\centering
\includegraphics[width=\linewidth]{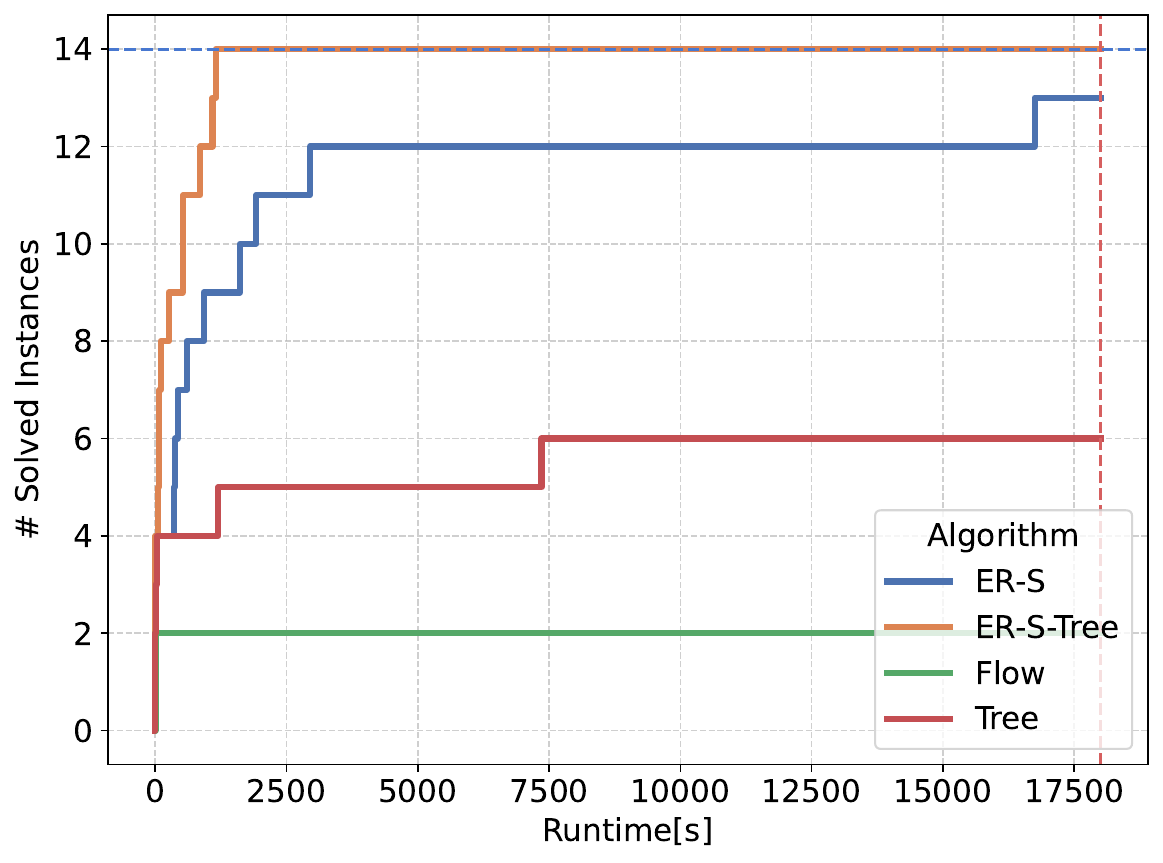}
\subcaption{Results for $k \in \{9,10\}$, real-world instances.\label{fig:runtime-rl4}}
\end{minipage}

\centering

\begin{minipage}{0.45\linewidth}
\centering
\includegraphics[width=\linewidth]{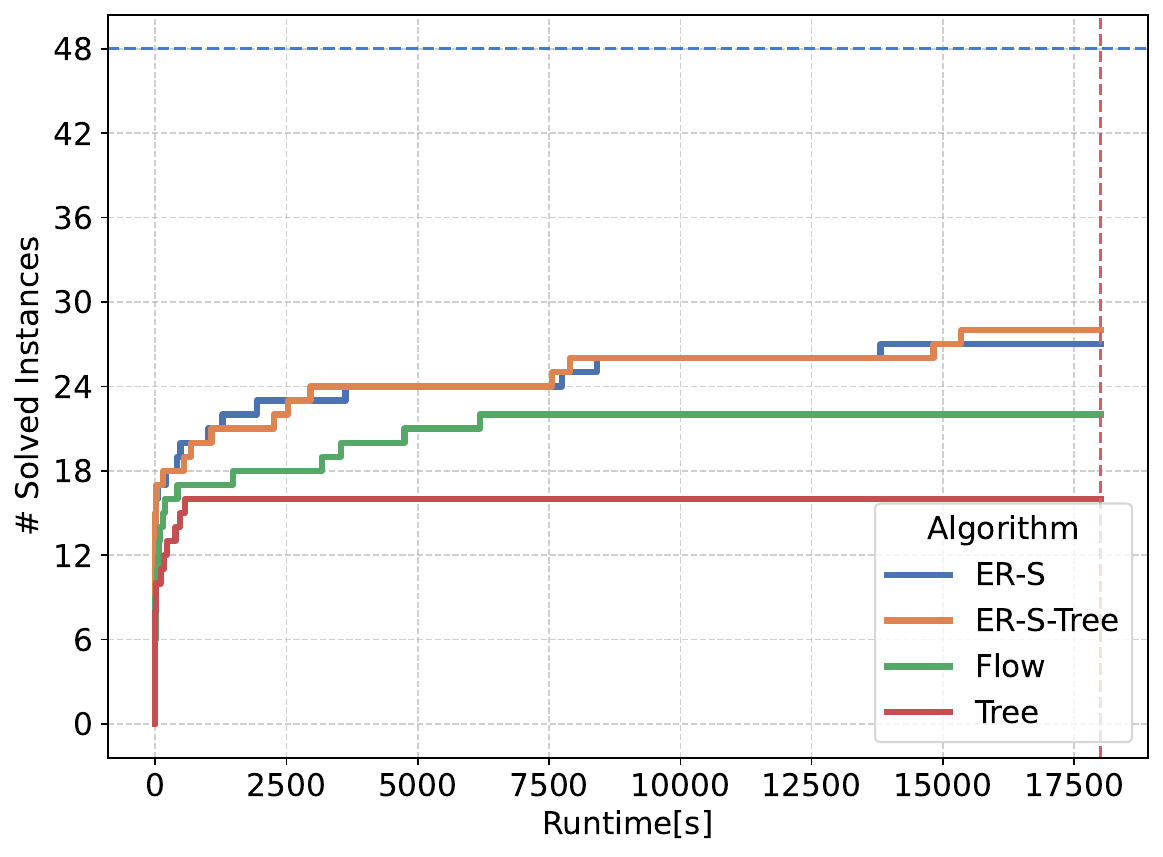}
\\
\subcaption{Results for $k \in \{3,4\}$, grid instances.\label{fig:runtime-rand1}}
\end{minipage}
\hfill
\begin{minipage}{0.45\linewidth}
\centering
\includegraphics[width=\linewidth]{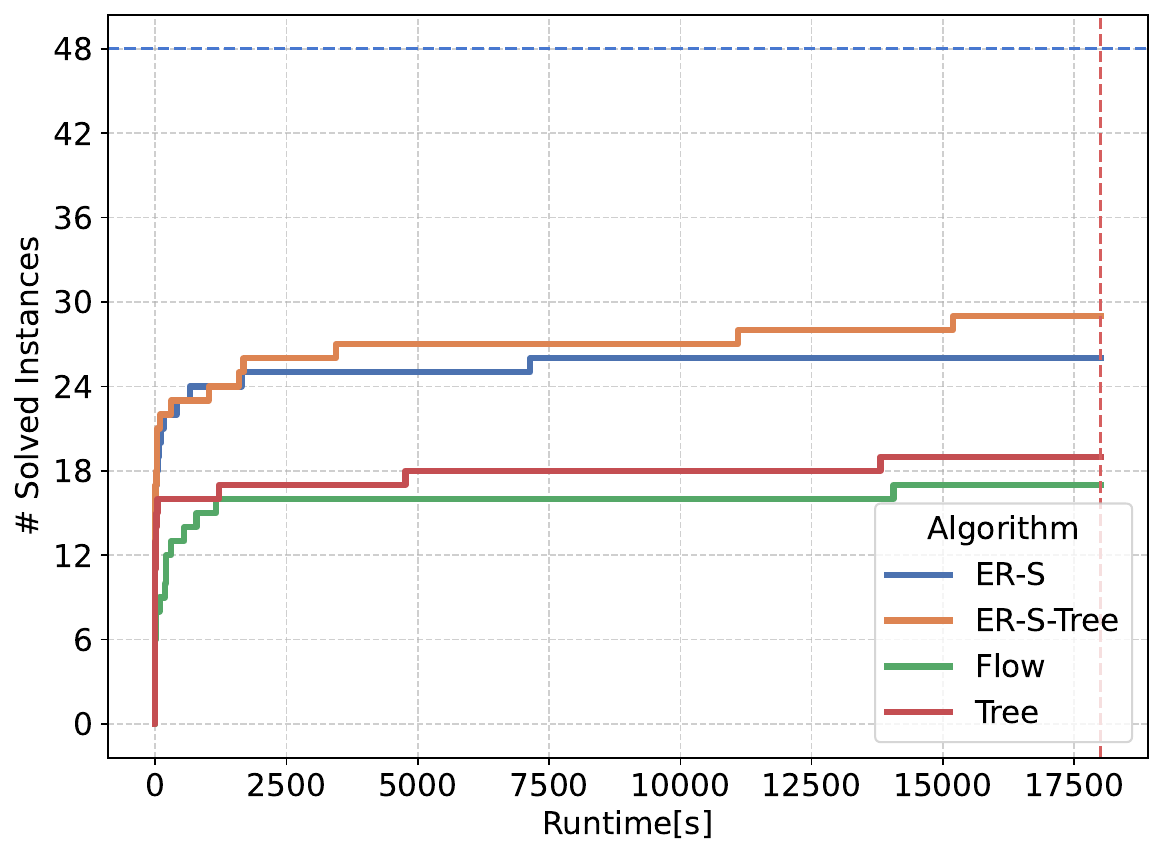}
\\
\subcaption{Results for $k \in \{5,6\}$, grid instances.\label{fig:runtime-rand2}}
\end{minipage}
\centering
\begin{minipage}{0.45\linewidth}
\centering
\includegraphics[width=\linewidth]{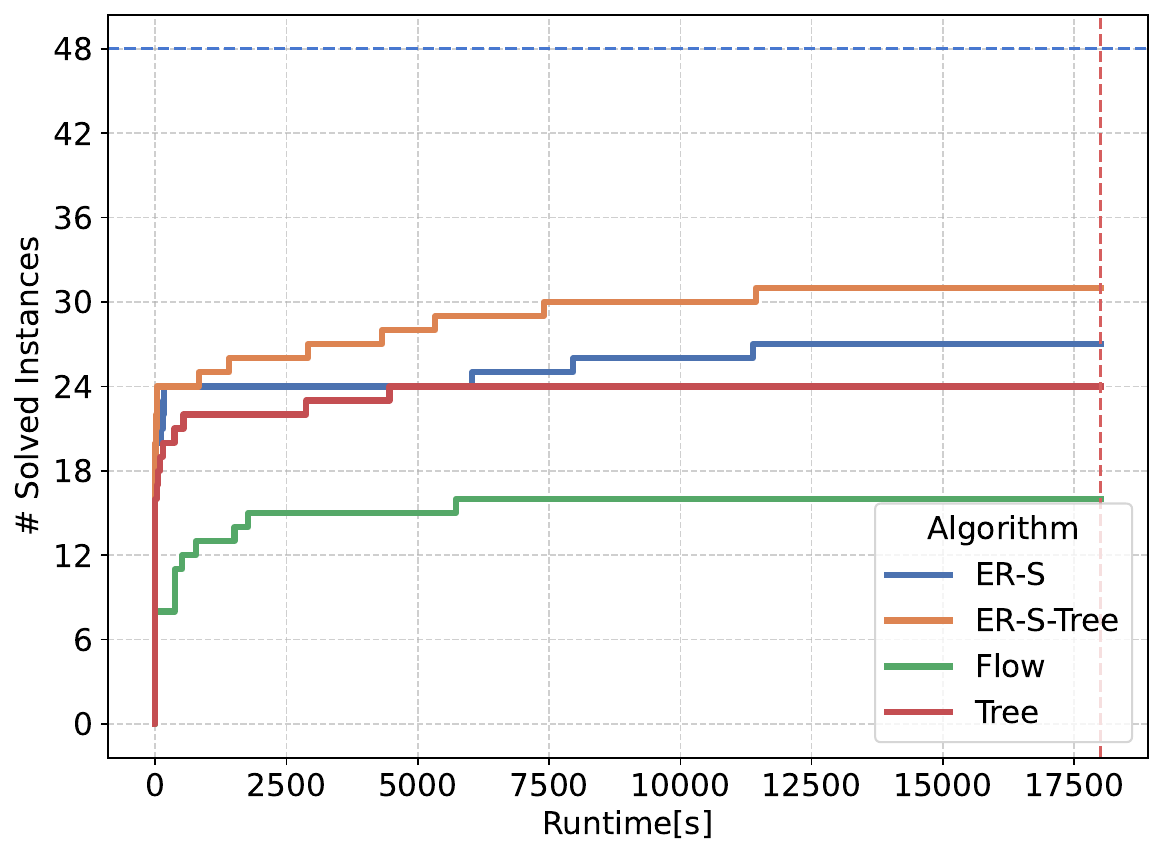}
\\
\subcaption{Results for $k \in \{7,8\}$, grid instances.\label{fig:runtime-rand3}}
\end{minipage}
\hfill
\begin{minipage}{0.45\linewidth}
\centering
\includegraphics[width=\linewidth]{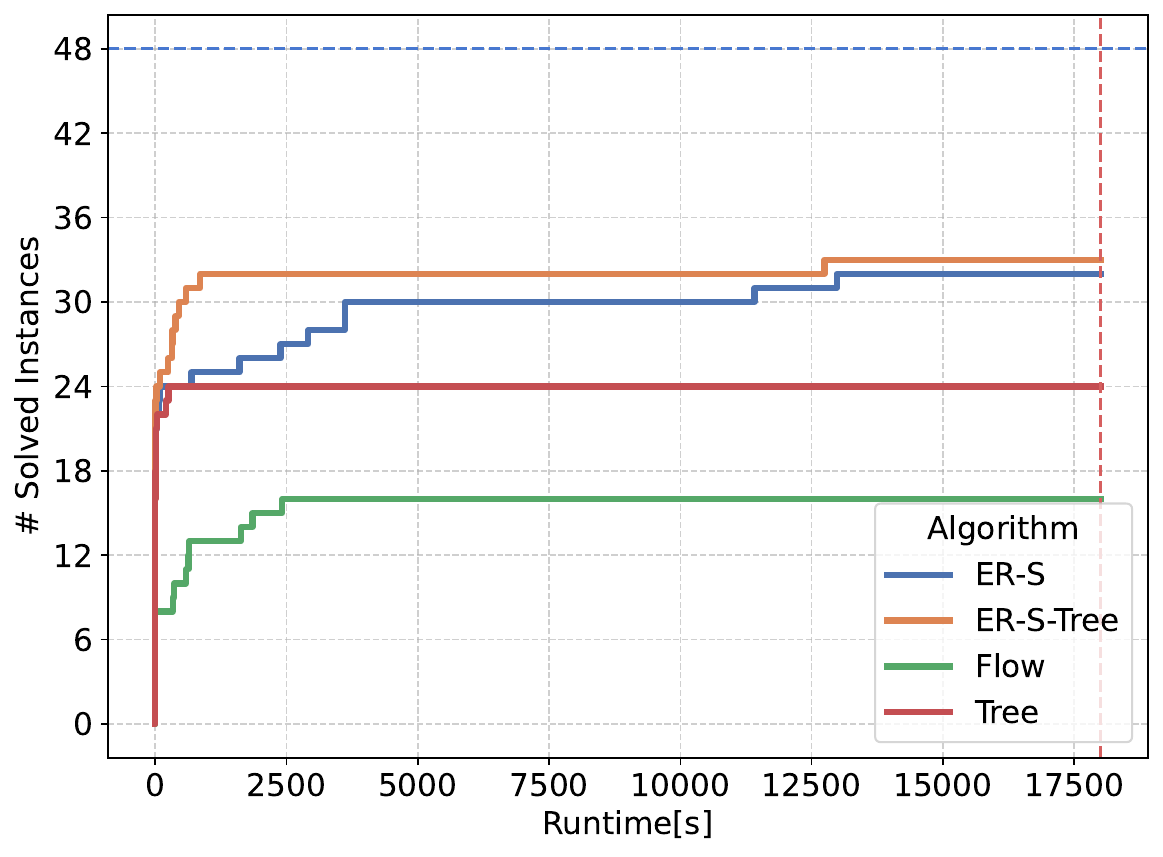}
\\
\subcaption{Results for $k \in \{9,10\}$, grid instances.\label{fig:runtime-rand4}}
\end{minipage}
\caption{Performance plot showing the number of solved instances within a given time for \ERS, \ERSTree, \flow and \tree, grouped by the number of predefined numbers $k$.}
\label{fig:runtime}
\end{figure}

For the generated grid instances (Subfigures \ref{fig:runtime-rand1} - \ref{fig:runtime-rand4}), we observe similar performance trends as for the real-world instances; \ERS and \ERSTree outperform the models \flow and \tree for all choices of $k$, while \flow outperforms \tree for low values of $k$ and \tree outperforms \flow for high values of $k$. \ERSTree outperforms \ERS for higher values of $k$, while for lower values the performance is similar. One notable difference is that for the grid instances, \tree has better performance relative to \flow when compared to the real-world instances. It seems that \tree benefits from a grid graph structure.

We mention that for instances that were not solved by any model. i.e. grid instances of size $8 \times 8$ and $9\times 9$, \ERS and \ERSTree consistently provide much stronger bounds than \flow and \tree, in many cases achieving optimality gaps of $< 10 \%$. The detailed results can be found in the full version of the paper footnote{See \url{https://arxiv.org/abs/2607.04886}}. For an ablation study of the performance impact of different implementation choices, see Section \ref{sec:ablation-study} of the appendix.

\section{Conclusion and Outlook} \label{sec:conclusion}
In this paper, we proposed two new ILP models for the $p$-regions problem: \ERS which models connectivity using vertex separator constraints, and \ERSTree, which combines vertex-separator constraints and subtour elimination constraints. The model \ERSTree also includes a new type of subtour elimination constraint, which is specific to the $p$-regions problem. We conducted a theoretical analysis of the polyhedral strength of each model, proving that \ERSTree is strictly stronger than the other models. Our computational experiments show that the proposed models \ERS and \ERSTree strongly outperform the models from the literature across all chosen values for $k$, both on real-world instances and generated grid instances, supporting our theoretical results. 

In line with other work~\cite{oehrlein17,LiCG14}, we would like to incorporate regional compactness into the problem objective in the future.
Notably, the popular compactness criterion of minimizing cut edges~\cite{ValidiB22} can be incorporated without any change to the algorithms; it simply suffices to modify the costs for every vertex pair that is connected by an edge. First experiments showed that including compactness into the objective often decreases the running time of the algorithms, as connected regions emerge more naturally. Figure \ref{fig:regions-compact} shows two subdivisions of Spain for different objectives.
\begin{figure}
\begin{minipage}{0.44\linewidth}
\centering
\includegraphics[width=\linewidth]{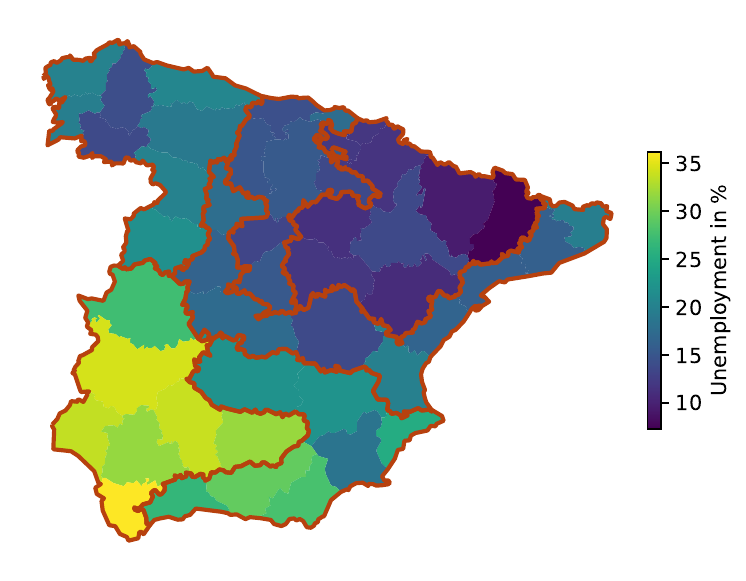}
\subcaption{Only considering pairwise feature similarity.} 
\vfill
\end{minipage}
\hfill
\begin{minipage}{0.44\linewidth}
\centering
\includegraphics[width=\linewidth]{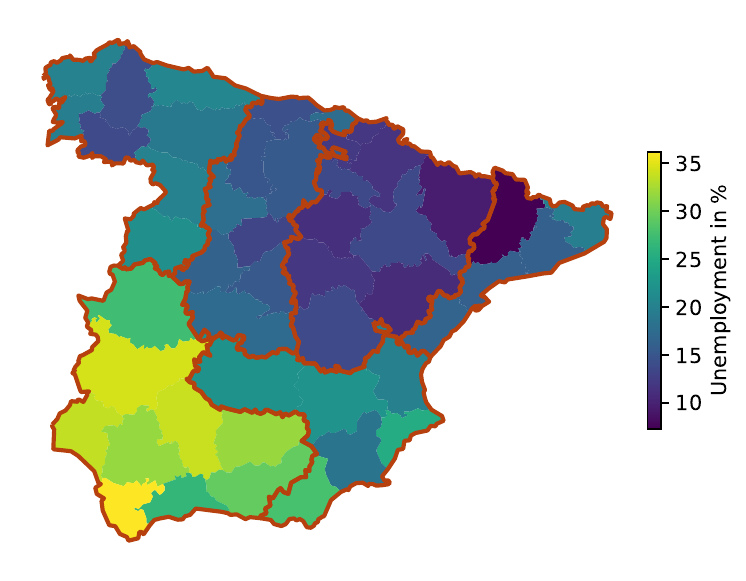}
\subcaption{Linear combination between feature similarity and cut edges.}
\end{minipage}
\caption{Examples of resulting regionalization on the NUTS-3 subdivison of Spain for $k = 6$ for different costs}
\label{fig:regions-compact}
\end{figure}

For further research, it would be interesting to apply techniques from this paper to the max-$p$-regions problem ~\cite{duque12}. In this variant of the problem, the number of output regions is not fixed, and the granularity of the partition is instead constricted by bounds on the region sizes. 
As a general direction, our theoretical results show that vertex separator constraints and subtour elimination constraints cut off different fractional solutions from the solution polytope. Combining separator constraints and subtour elimination constraints therefore seems to be a promising direction for a wider range of connectivity constrained problems.
\bibliography{bibliography}
\FloatBarrier
\appendix
\section{Proofs of polyhedral results}
\subsection{Proof of Theorem \ref{theorem:flow-relax}}\label{proof:flow-relax}
\begin{proof}
    For $k=n$, simply set $z_{a,a} = 1$ and $r_{a,a}$ and all flow variables to zero. The right-hand side of \eqref{eq:flow2} is always zero and therefore satisfied.

    For $3 \leq k \leq n-1$, let $(z,r,x,f) \in P(\flow)$ for $G,k$. For any $i \neq 1$ and $v \neq 1$, set $z_{v,i} = \frac{1}{k-1}$ and set $r_{a,i} = \frac{1}{n-1}$ and all flow variables to zero. This assignment trivially satisfies constraints \eqref{eq:flow3}-\eqref{eq:flow7}, \eqref{eq:flow9} and \eqref{eq:flow2} for any value of $x_{u,v}$ (the right-hand side is always $<0$). 
    
    For constraints \eqref{eq:flow8}, the following holds:
    \begin{equation*}
        M\cdot r_{v,i} = \frac{n-k+1}{n-1} = 1-\frac{k-2}{n-1} \geq 1-\frac{k-2}{k} = \frac{2}{k} \geq \frac{1}{k-1}
    \end{equation*}
    where the last inequality follows from $k \geq 3$. Therefore, $z_{v,i} - M \cdot r_{v,i} \leq 0$ and the constraints are always satisfied.
    
    It immediately follows that $P_x(\ERS) \subseteq P_x(\flow)$ and $P_x(\tree) \subseteq P_x(\flow)$. We now show that these inclusions are strict for $k < n$: Let $G$ be a any graph with $V = \{v_1,\dots, v_n\}$ and $3 \leq k < n$ and let $z_{a,l} = \frac{1}{k-1}$, $r_{a,l} = \frac{1}{n-1}$ for $a \in V \setminus\{1\},\; l \geq 2$ and $x_{u,v} = 0$ for $u,v \in V, u < v$ be a solution for \flow.
    This solution is neither contained in \ERS nor \tree: For \ERS, constraint \eqref{eq:ERS5} implies $r_v < 1$ for at least one vertex $v$ (as $k < n$). By constraints \eqref{eq:ERS4}, this implies $\sum_{u < v}x_{u,v} >0$, contradicting $x_{u,v} = 0$ for all $u < v$. Furthermore, this solution is also not contained in \tree, as constraints \eqref{eq:tree4} enforce that $\sum_{e \in \diedge{E}}y_e \leq \sum_{u < v}x_{u,v} = 0$, which contradicts Constraint \eqref{eq:tree1}.
\end{proof}
\subsection{Proof of Theorem \ref{theorem:treestrong}}\label{proof:treestrong}
\begin{proof}
    Obviously, it holds that $P_x(\treeplus) \subseteq P_x(\tree)$ as constraints \eqref{eq:tree8} imply \eqref{eq:tree4} and all other constraints of \tree are also part of $\treeplus$. To show that the strengthened cycle inequalities \eqref{eq:tree-strong} can cut off additional fractional solutions, consider the following instance: Let $G$ be a triangle graph with $V=\{v_1,v_2,v_3\}$ and $k=1$. A valid solution for the fractional relaxation of \tree is $x_{1,2} = x_{2,3}=x_{1,3} = y_{1,2} = y_{2,3}=y_{3,1} = \frac{2}{3}$. However, this solution violates constraints \eqref{eq:tree-strong}, as 
    \begin{equation*}
         2\cdot\sum_{(u,v) \in \diedge{E}}y_{u,v} - \sum_{\{u,v\} \in E}x_{u,v} = 4 -2 \not \leq |V|-2
    \end{equation*}
\end{proof}
\subsection{Proof of Observation \ref{obs:tree-strong}}\label{proof:tree-strong}
\begin{proof}
Let $S\subseteq V$ and $T \subseteq E(S)$ with $|T| = |S|$ and let $(x^*,y^*)$ be a variable assignment satisfying Constraint \eqref{eq:tree-strong} for $S$. Then it holds that:
\begin{align}
    2\cdot\sum_{(u,v) \in \diedge{T}}y^*_{u,v} &= 2\cdot\sum_{(u,v) \in \diedge{T}}y^*_{u,v} - |S|+|S|\\
    &\leq 2\cdot\sum_{(u,v) \in \diedge{T}}y^*_{u,v} - \sum_{\{u,v\} \in T}x^*_{u,v} + |S| \\
    &\leq 2\cdot(|S|-1)
\end{align}
\end{proof}
\subsection{Proof of Theorem \ref{theorem:erstree-incomp}}\label{proof:erstree-incomp}
The example consists of the graph with $V = \{v_1, v_2, v_3\}$ $E = \{\{v_1,v_2\},\{v_2,v_3\}\}$. Consider the following solution $(x,y) \in P(\tree)$: Let $k = 2$, $x_{v_1,v_2} = x_{v_2,v_3} = \frac{1}{2}$, $x_{v_1,v_3} = 1$ and $y_{v_1,v_2} = y_{v_2,v_3} = \frac{1}{2}$. The feasibility of this solution is trivial, however there exists no feasible solution $(x,r) \in P(\ERS)$. The reason for this is the violation of the separator constraint $x_{v_1,v_3} \not \leq x_{v_1,v_2}$.

\begin{figure}
    \centering
    \includegraphics[]{pregions/SeparatorImpliedCounter.pdf}
    \caption{Example for $P(\tree) \not \subset P(\ERS)$: $k=2$, the values of $y$ are in blue, the values of $x$ in black. Solid lines depict existing edges}
    \label{fig:seperator-implied-counter2}
\end{figure}

To show that $P_x(\ERS) \not \subset P_x(\tree)$, we can use the example depicted in Figure \ref{fig:cycle-implied-counter2} for $k=4$, which is in $P(\ERS)$ but not $P(\tree)$. First we show that the solution is feasible for \ERS: Constraints \eqref{eq:ERS2} and \eqref{eq:separator} are satisfied as the vertex sets $\{v_1,v_2,v_3,v_4\}$ and $\{v_5,v_6,v_7,v_8\}$ all have the same pairing values within each group and all pairings between the two groups are zero. Furthermore, the values of the representative variables satisfy \eqref{eq:ERS3}-\eqref{eq:ERS4} and sum up to exactly 4. The pairing variables of adjacent edges sum up to $4$, so \eqref{eq:edge-sum} is also satisfied. Finally, we consider the general clique constraints \eqref{eq:gen-clique}: For $|Q|=5$, the vertex set with the lowest clique sum consists of $Q_5 = \{v_1,v_2,v_3,v_4,v_5\}$, which sums up to $\frac{6}{5} \geq 1$. For $|Q| = 6$, the lowest sum is obtained with $Q_6$ being the union of $Q_5$ and one vertex from $v_6,v_7,v_8$, which sums up to $2$. For $Q_8$, the total sum of all pairing variables is $6 \geq 4$ and for $Q_7$, it must be at least $6 - \frac{12}{5} = \frac{18}{5} \geq 3$. To see that there cannot be a valid solution for \tree with such an assignment of pairing variables, observe that $\sum_{\{u,v\} \in E}x_{u,v} = 4 = n-k$, so for $\sum_{(u,v) \in \diedge{E}}y_{u,v} = n-k$ to hold, it must hold that $y_{u,v}+y_{v,u} = x_{u,v}$ for all $\{u,v\} \in E$. However, for the cycle $C = \{v_5,v_6,v_7,v_8\}$, this would imply that $\sum_{(u,v) \in \diedge{E}(C)}y_{u,v} = \frac{16}{5} > 3$, which would violate the cycle inequality.

\begin{figure}
    \centering
    \includegraphics[]{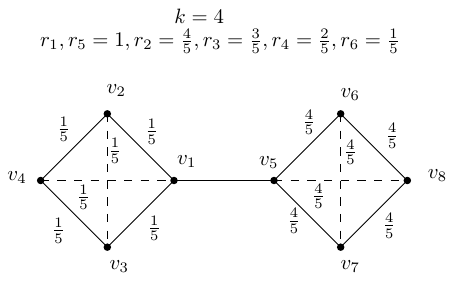}
    \caption{Example for $P(\ERS) \not \subset P(\tree)$, $k=4$: the figure depicts the non-zero values of $x$, existing edges are drawn as solid lines}
    \label{fig:cycle-implied-counter2}
\end{figure}

$P_x(\tree) \not \subset P_x(\ERS)$ and $P_x(\ERS)\not \subset P_x(\treeplus)$ follow from $P_x(\treeplus) \not \subset P_x(\ERS)$ and $P_x(\ERS)\not \subset P_x(\tree)$ and Theorem \ref{theorem:treestrong}.

\section{Separation of General Clique Constraints}\label{sec:clique-sep}
In this section, we give the pseudocode of the heuristic we use to separate constraints \eqref{eq:gen-clique}. We denote $\bar{x}(u,v)$ to be the fractional value of $x_{u,v}$ in the current solution and define $\bar{x}(u,S) = \sum_{v \in S}x_{u,v}$ for $S \subseteq V \setminus \{u\}$.
\begin{algorithm}
\caption{Seperating General Clique Constraints}\label{alg:general-clique}
\begin{algorithmic}
\Require Fractional assignments $\bar{x}:{V\choose 2} \rightarrow \mathbb{R}_{\geq 0}$, Clique size limit $limit$
\State Find $u,v \in V$ minimizing $\bar{x}(u,v)$
\State $S \leftarrow \{u,v\}$
\While{$|S| \leq limit$}
\State Let $w \in V \setminus S$ be the vertex minimizing $\bar{x}(w,S)$ 
\State $S \leftarrow S \cup \{w\}$ \Comment{Greedily extending $S$ with the best vertex}
\State Find vertex pairs $a\in V \setminus S$, $b \in S$ minimizing $\Delta(a,b) = \bar{x}(a,S)- \bar{x}(b,S)-\bar{x}(a,b)$
\If{$\Delta(a,b) < 0$}
\State $S \leftarrow (S \setminus \{b\}) \cup \{a\}$ \Comment{Swap $b$ for $a$}
\EndIf
\State Let $|S| = qk+p$
\If{$|S| \geq k+1$ \textbf{and} $\bar{x}(E(S)) < {q+1 \choose 2}p+{q \choose 2}(k-p)$}
\State Add inequality $x(E(S)) \geq {q+1 \choose 2}p+{q \choose 2}(k-p)$
\EndIf
\EndWhile
\end{algorithmic}
\end{algorithm}
\FloatBarrier
\section{Implementation Details}\label{section:implement-details}
In this section, we give further details on our implementation.
\paragraph*{Separation routines}
We used the implementation of the Boykov-Kolmogorov flow algorithm~\cite{BoykovK04} provided in \texttt{BGL} as a subroutine for computing the minimum cut in the separation of constraints \eqref{eq:separator} and \eqref{eq:tree4}.

For fractional solutions, we only add cuts if they are violated by at least a tolerance $\epsilon$, i.e. a cut $a^tx \leq b$ is added if $a^tx > (1+\epsilon) b$ or in the case of $b = 0$ if $a^tx > \epsilon$. We set $\epsilon = 5\%$

For the general clique inequalities, we set $limit = 3\cdot k+3$. An exception is the general clique inequality corresponding to $S=V$, which we add beforehand. We observed that this helps finding a strong initial root relaxation.

To add cutting planes during the solving process, we use the callback function of Gurobi. We add violated inequalities using \texttt{addLazy} in integer separation routines, and \texttt{addCut} in fractional separation routines. 
The difference is that constraints added using \texttt{addLazy} are enforced in each branch-and-bound node, while constraints added using \texttt{addCut} are stored in the global cut pool and are applied selectively by the solver.

Despite the internal cut filters of the solver, we still observed slightly better performance if we prefiltered our cuts before handing them to the solver. 
For this reason, during the fractional separation routines, we only add the $30\%$ of identified violated cuts that have the highest efficacy~\cite{achterberg2009scip} (the distance of the hyperplane from the separated invalid point). In the case of integer solutions, all violated separator constraints are added.
\paragraph*{Reimplementation of algorithms}
The authors in~\cite{duque11} did not provide the source code for their algorithms, thus we reimplemented the algorithms following the descriptions in~\cite{duque11} and the follow-up paper~\cite{duque18} as closely as possible. For \texttt{Tree}, we used \texttt{addLazy} to separate violated cycle-breaking inequalities in each integer callback. Following~\cite{duque18} we used the \texttt{BGL} implementation of Tarjan's algorithm to calculate strongly connected components. For each strongly connected component $\Gamma$ containing more than one vertex, we add the corresponding inequality of type \eqref{eq:tree4}.

We used default parameters for \texttt{Gurobi} (with the exception of setting \texttt{PreCrush=1}, as this is necessary to guarantee user cuts working correctly).

\section{Generation of random instances}\label{sec:random-generation}
For the generation of the grid instances, we generated the vertex attributes simulating a spatial auto-regressive model (see for example \cite{hoef18}): Given an integer $Size$ and an autocorrelation parameter $\rho$, the generated graph is a grid graph of size $Size \times Size$. The vertex attributes are given as $\Gamma_{G,\rho} = ((I-\rho \bar{A})^{-1})^t\epsilon$, where $I$ is the identity matrix, $\bar{A}$ is the row-normalized adjacency matrix of $G$, and $\epsilon$ is a vector of entries drawn from a normal distribution. The vertex dissimilarities between vertices $u$ and $v$ are now defined as $d(u,v) = |(\Gamma_{G,\rho})_u-(\Gamma_{G,\rho})_v|$, $(\Gamma_{G,\rho})_u$ representing the matrix entry corresponding to vertex $u$.

\section{Ablation study}\label{sec:ablation-study}
To determine the impact of various implementation details on the performance of \ERS and decide on the final settings for our experimental study, we performed some preliminary experiments on a limited subset of the benchmark set. The instances used in these experiments include \texttt{Finland}, \texttt{Bulgaria}, \texttt{UK} and \texttt{Spain} tested for $k \in \{3,5,7,9\}$.
\subsection{Size limit for separated general clique cuts}
We tested the performance of \ERS for various values of $limit = \alpha \cdot k + \beta$, where $limit$ is the size limit of separated cliques in Algorithm \ref{alg:general-clique} and $\alpha,\beta \in \mathbb{N}$.  
Figure \ref{fig:plot-clqcuts} shows the results for $\alpha,\beta \in \{1,2,3\}$. We observe that setting $\alpha = \beta = 1$ clearly shows the worse performance, therefore the solver benefits from separating \textbf{general} clique constraints (in contrast to just separating regular clique constraints). Beyond that, all configurations with $\alpha \in \{2,3\}$ exhibit very similar performance. For our final experiments, we chose $\alpha=\beta =3$.
\begin{figure}
    \centering
    \includegraphics[width=0.5\linewidth]{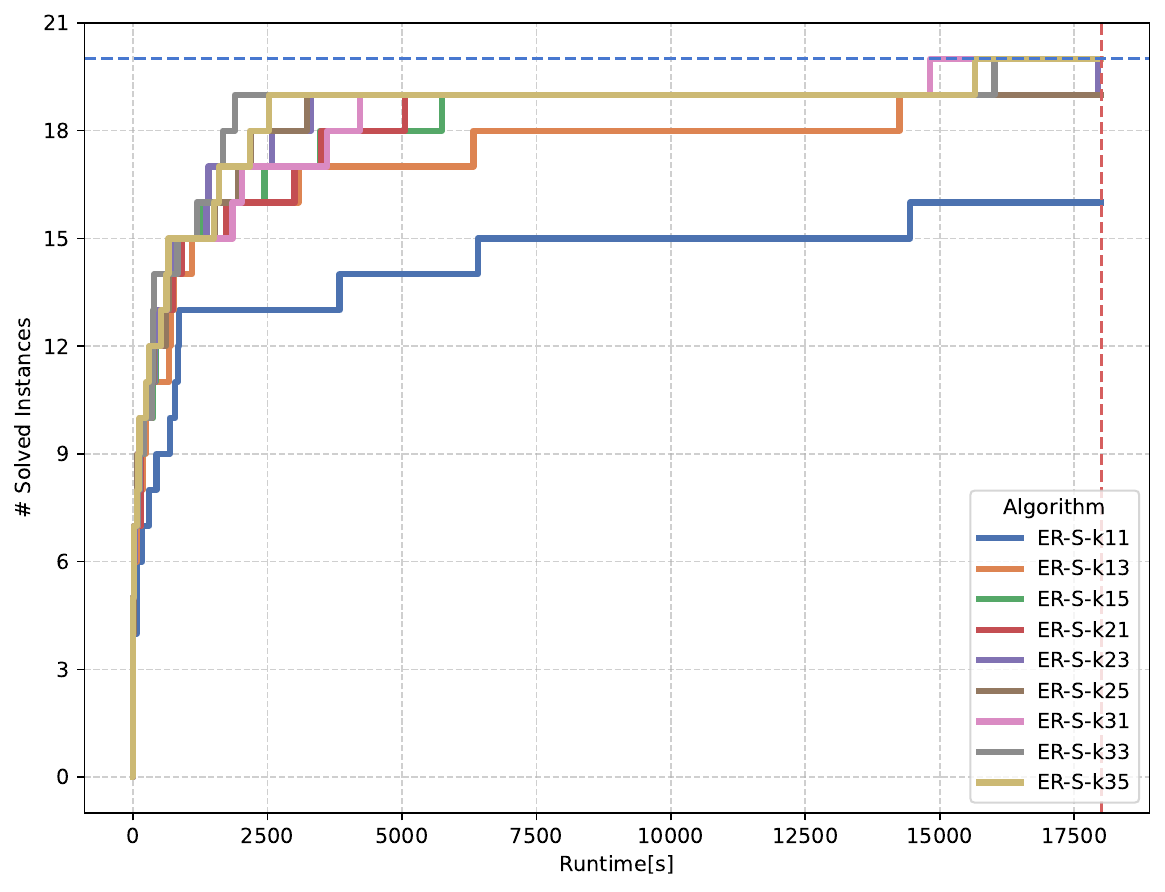}
    \caption{Size limit for general clique constraints}
    \label{fig:plot-clqcuts}
\end{figure}

\subsection{Impact of implemented features}
We evaluated the performance impact of our implementation decisions on the limited benchmark set, additionally including the instance \texttt{Romania}. For each of the tested feature, we ran the model with this specific feature deactiviated. The tested features where the following:
\begin{itemize}
    \item (Leaving out) Filter: Filtering out the cuts with the highest efficacy was deactivated, i.e. all identified cuts where added
    \item (Leaving out) Local Search (LocSrch): The local search step in the separation of the general clique inequalities was skipped, i.e. the separation algorithm exclusively performs the greedy extension step in each iteration
    \item (Leaving out) Nested Cuts (Nested): We do not perform nested cuts, only adding at most one cut to each vertex pair in each iteration
    \item (Leaving out) strengthening inequalities (SI): We omit adding \eqref{eq:edge-sum} and the general clique inequality for $S=V$ to the initial model
\end{itemize}
The results are shown in Figure~\ref{fig:plot-ablation}. \texttt{ER-S-Full} denotes \ERS, i.e. all features enabled. We can see that for the relatively easy instances, the performance is close to equal, while for the harder instances, we see some notable differences in performance. Disabling the local search step in the general clique separation seems to yield the largest performance degradation, followed by disabling strengthening inequalities and cut filtering. On the other hand, the performance benefit of using nested cuts is not completely clear, as there are multiple instances that are solved faster when nested cuts are disabled. Nevertheless, we still chose the current setting as the full model solves the most instances within a 5 hour time limit.
\begin{figure}
    \centering
    \includegraphics[width=0.45\linewidth]{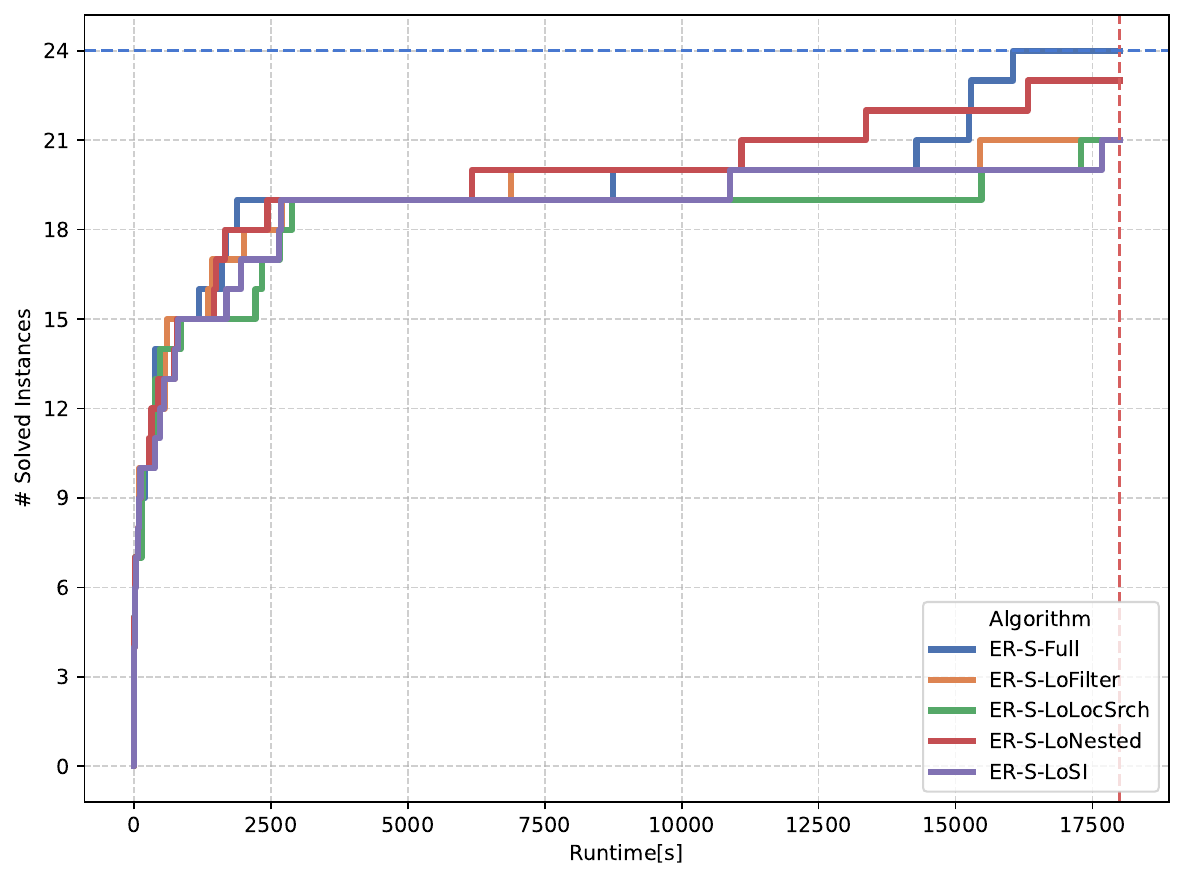}
    \caption{Leave-one-out experiment for different implemented features}
    \label{fig:plot-ablation}
\end{figure}

\subsection{Experimental comparison of \tree and \treeplus}
We conducted a computational comparison between \tree and \treeplus on the limited benchmark set. The results can be seen in Figure \ref{fig:treeplus}. We can see that \treeplus outperforms \tree, solving two more instances.
\begin{figure}
    \centering
    \includegraphics[width=0.45\linewidth]{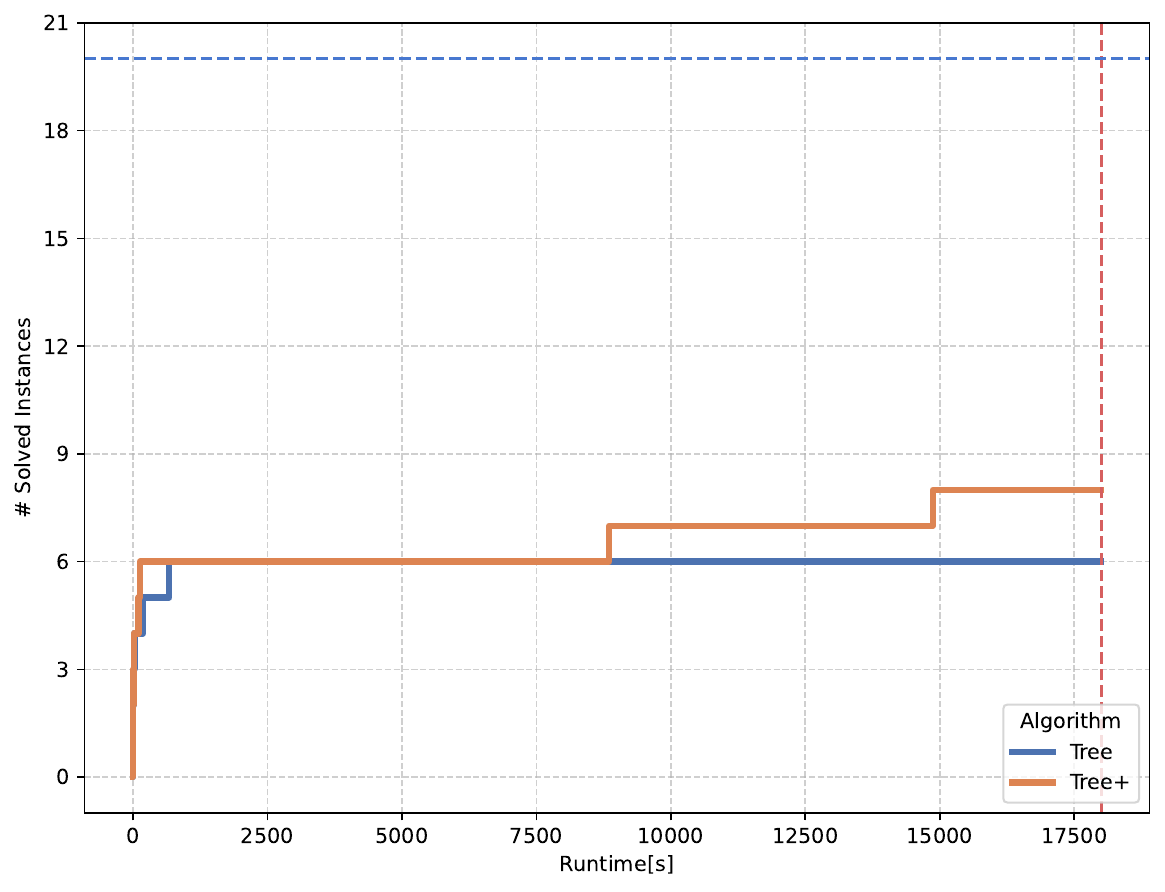}
    \caption{Performance comparison of \tree and \treeplus}
    \label{fig:treeplus}
\end{figure}
\section{Detailed runtime results}\label{sec:detailed-results}
\begin{table}
\caption{Detailed runtime results on real-world instances}
\label{tab:results}
\centering
\tiny
\begin{tabular}{|ll||r|r|r|r|}
\toprule
 & & \flow & \tree & \ERS & \ERSTree \\
Instance & $k$ &  &  &  &  \\
\midrule
\multirow[c]{8}{*}{\texttt{Finland}} & 3 & 0.99 & 188.16 & 1.17 & 0.93 \\
 & 4 & 3.04 & 44.18 & 0.70 & 0.74 \\
 & 5 & 5.37 & 17.16 & 0.56 & 0.86 \\
 & 6 & 3.97 & 6.01 & 0.57 & 0.32 \\
 & 7 & 4.19 & 3.78 & 0.70 & 0.59 \\
 & 8 & 6.80 & 2.23 & 0.22 & 0.24 \\
 & 9 & 5.24 & 1.89 & 0.20 & 0.11 \\
 & 10 & 3.51 & 0.59 & 0.12 & 0.10 \\
\cline{1-6}
\multirow[c]{8}{*}{\texttt{Bulgaria}} & 3 & 101.93 & 18000.00 & 91.97 & 104.97 \\
 & 4 & 271.07 & 18000.00 & 124.90 & 148.49 \\
 & 5 & 1440.47 & 18000.00 & 379.42 & 341.45 \\
 & 6 & 3759.51 & 5605.67 & 38.84 & 27.85 \\
 & 7 & 4954.60 & 675.45 & 16.14 & 11.37 \\
 & 8 & 9602.97 & 49.64 & 7.96 & 4.26 \\
 & 9 & 18000.00 & 29.27 & 7.35 & 2.59 \\
 & 10 & 18000.00 & 11.35 & 8.54 & 2.29 \\
\cline{1-6}
\multirow[c]{8}{*}{\texttt{United Kingdom}} & 3 & 3600.65 & 18000.00 & 852.03 & 774.19 \\
 & 4 & 8015.73 & 18000.00 & 938.91 & 787.43 \\
 & 5 & 18000.00 & 18000.00 & 1665.79 & 1498.59 \\
 & 6 & 18000.00 & 18000.00 & 809.84 & 856.50 \\
 & 7 & 18000.00 & 18000.00 & 393.71 & 247.37 \\
 & 8 & 18000.00 & 18000.00 & 1097.35 & 374.60 \\
 & 9 & 18000.00 & 18000.00 & 1615.26 & 522.20 \\
 & 10 & 18000.00 & 18000.00 & 925.93 & 259.56 \\
\cline{1-6}
\multirow[c]{8}{*}{\texttt{Greece}} & 3 & 428.34 & 18000.00 & 393.72 & 315.10 \\
 & 4 & 695.53 & 18000.00 & 1311.11 & 1000.88 \\
 & 5 & 1574.56 & 18000.00 & 309.93 & 164.18 \\
 & 6 & 6985.70 & 18000.00 & 355.58 & 236.43 \\
 & 7 & 18000.00 & 18000.00 & 402.11 & 156.18 \\
 & 8 & 18000.00 & 18000.00 & 1027.81 & 252.06 \\
 & 9 & 18000.00 & 7357.73 & 604.70 & 77.62 \\
 & 10 & 18000.00 & 1202.34 & 360.98 & 70.83 \\
\cline{1-6}
\multirow[c]{8}{*}{\texttt{Germany}} & 3 & 78.37 & 18000.00 & 23.04 & 22.61 \\
 & 4 & 608.38 & 18000.00 & 65.80 & 79.41 \\
 & 5 & 4290.50 & 18000.00 & 121.60 & 134.34 \\
 & 6 & 7684.89 & 18000.00 & 32.52 & 36.79 \\
 & 7 & 18000.00 & 18000.00 & 200.69 & 79.67 \\
 & 8 & 18000.00 & 18000.00 & 109.43 & 57.39 \\
 & 9 & 18000.00 & 18000.01 & 377.07 & 103.51 \\
 & 10 & 18000.00 & 18000.00 & 428.52 & 57.87 \\
\cline{1-6}
\multirow[c]{8}{*}{\texttt{Romania}} & 3 & 18000.00 & 18000.00 & 15003.73 & 18000.00 \\
 & 4 & 18000.00 & 18000.00 & 18000.00 & 13689.03 \\
 & 5 & 18000.00 & 18000.01 & 14632.75 & 3690.48 \\
 & 6 & 18000.00 & 18000.00 & 2639.39 & 1175.32 \\
 & 7 & 18000.00 & 18000.00 & 8809.66 & 722.67 \\
 & 8 & 18000.00 & 18000.00 & 18000.00 & 1219.44 \\
 & 9 & 18000.00 & 18000.01 & 16746.33 & 1090.82 \\
 & 10 & 18000.00 & 18000.00 & 18000.01 & 534.24 \\
\cline{1-6}
\multirow[c]{8}{*}{\texttt{Spain}} & 3 & 1567.02 & 18000.00 & 345.95 & 326.85 \\
 & 4 & 18000.00 & 18000.00 & 1208.84 & 1541.01 \\
 & 5 & 18000.00 & 18000.00 & 1243.61 & 1528.22 \\
 & 6 & 18000.00 & 18000.00 & 7313.83 & 2770.24 \\
 & 7 & 18000.00 & 18000.01 & 16194.48 & 8954.22 \\
 & 8 & 18000.00 & 18000.00 & 18000.00 & 7700.30 \\
 & 9 & 18000.00 & 18000.00 & 1922.54 & 1151.62 \\
 & 10 & 18000.00 & 18000.00 & 2945.97 & 856.33 \\
\cline{1-6}
\bottomrule
\end{tabular}
\end{table}
\begin{table}
\tiny
\caption{Final lower bounds (lb) and upper bounds (ub) on real-world instances (Slight inconsistencies in bounds are due to limited numerical precision)}
\label{tab:results}
\begin{tabular}{|ll||rr|rr|rr|rr|}
\toprule
 &  & \multicolumn{2}{c}{\flow} & \multicolumn{2}{c}{\tree} & \multicolumn{2}{c}{\ERS} & \multicolumn{2}{c}{\ERSTree} \\
 &  &  &  &  &  &  &  &  &  \\
Instance & $k$ &  lb & ub & lb & ub & lb & ub & lb & ub \\
\midrule
\multirow[c]{8}{*}{\texttt{Finland}} & 3 & 116.40 & 116.40 & 116.40 & 116.40 & 116.40 & 116.40 & 116.40 & 116.40 \\
 & 4 & 63.60 & 63.60 & 63.60 & 63.60 & 63.60 & 63.60 & 63.60 & 63.60 \\
 & 5 & 43.50 & 43.50 & 43.50 & 43.50 & 43.50 & 43.50 & 43.50 & 43.50 \\
 & 6 & 28.40 & 28.40 & 28.40 & 28.40 & 28.40 & 28.40 & 28.40 & 28.40 \\
 & 7 & 21.80 & 21.80 & 21.80 & 21.80 & 21.80 & 21.80 & 21.80 & 21.80 \\
 & 8 & 15.50 & 15.50 & 15.50 & 15.50 & 15.50 & 15.50 & 15.50 & 15.50 \\
 & 9 & 11.30 & 11.30 & 11.30 & 11.30 & 11.30 & 11.30 & 11.30 & 11.30 \\
 & 10 & 7.70 & 7.70 & 7.70 & 7.70 & 7.70 & 7.70 & 7.70 & 7.70 \\
\cline{1-10}
\multirow[c]{8}{*}{\texttt{Bulgaria}} & 3 & 588.90 & 588.90 & 287.80 & 588.90 & 588.90 & 588.90 & 588.90 & 588.90 \\
 & 4 & 388.40 & 388.40 & 262.80 & 388.40 & 388.40 & 388.40 & 388.40 & 388.40 \\
 & 5 & 295.70 & 295.70 & 214.50 & 297.20 & 295.70 & 295.70 & 295.70 & 295.70 \\
 & 6 & 202.80 & 202.80 & 202.80 & 202.80 & 202.80 & 202.80 & 202.80 & 202.80 \\
 & 7 & 149.90 & 149.90 & 149.90 & 149.90 & 149.90 & 149.90 & 149.90 & 149.90 \\
 & 8 & 114.90 & 114.90 & 114.90 & 114.90 & 114.90 & 114.90 & 114.90 & 114.90 \\
 & 9 & 70.70 & 96.00 & 96.00 & 96.00 & 96.00 & 96.00 & 96.00 & 96.00 \\
 & 10 & 59.10 & 78.60 & 78.60 & 78.60 & 78.60 & 78.60 & 78.60 & 78.60 \\
\cline{1-10}
\multirow[c]{8}{*}{\texttt{United Kingdom}} & 3 & 348.00 & 348.00 & 92.50 & 386.90 & 348.00 & 348.00 & 348.00 & 348.00 \\
 & 4 & 245.50 & 245.50 & 80.70 & 245.50 & 245.50 & 245.50 & 245.50 & 245.50 \\
 & 5 & 157.20 & 182.50 & 81.10 & 182.50 & 182.50 & 182.50 & 182.50 & 182.50 \\
 & 6 & 100.60 & 137.90 & 59.10 & 137.90 & 137.90 & 137.90 & 137.90 & 137.90 \\
 & 7 & 54.50 & 107.80 & 62.90 & 107.70 & 107.70 & 107.70 & 107.70 & 107.70 \\
 & 8 & 34.20 & 89.90 & 60.40 & 89.90 & 89.90 & 89.90 & 89.90 & 89.90 \\
 & 9 & 25.10 & 75.70 & 55.20 & 75.60 & 75.60 & 75.60 & 75.60 & 75.60 \\
 & 10 & 19.70 & 62.20 & 50.10 & 62.00 & 62.00 & 62.00 & 62.00 & 62.00 \\
\cline{1-10}
\multirow[c]{8}{*}{\texttt{Greece}} & 3 & 979.80 & 979.80 & 376.70 & 979.80 & 979.80 & 979.80 & 979.80 & 979.80 \\
 & 4 & 716.90 & 716.90 & 360.60 & 716.90 & 716.90 & 716.90 & 716.90 & 716.90 \\
 & 5 & 500.50 & 500.50 & 305.00 & 500.50 & 500.50 & 500.50 & 500.50 & 500.50 \\
 & 6 & 395.70 & 395.70 & 269.30 & 395.70 & 395.70 & 395.70 & 395.70 & 395.70 \\
 & 7 & 230.20 & 316.50 & 265.10 & 316.50 & 316.50 & 316.50 & 316.50 & 316.50 \\
 & 8 & 105.20 & 271.50 & 235.90 & 265.60 & 265.60 & 265.60 & 265.60 & 265.60 \\
 & 9 & 54.00 & 220.60 & 220.60 & 220.60 & 220.60 & 220.60 & 220.60 & 220.60 \\
 & 10 & 54.10 & 188.10 & 186.30 & 186.30 & 186.30 & 186.30 & 186.30 & 186.30 \\
\cline{1-10}
\multirow[c]{8}{*}{\texttt{Germany}} & 3 & 291.20 & 291.20 & 59.70 & 406.20 & 291.20 & 291.20 & 291.20 & 291.20 \\
 & 4 & 187.50 & 187.50 & 57.20 & 187.50 & 187.50 & 187.50 & 187.50 & 187.50 \\
 & 5 & 135.90 & 135.90 & 49.70 & 142.70 & 135.90 & 135.90 & 135.90 & 135.90 \\
 & 6 & 91.10 & 91.10 & 50.10 & 91.10 & 91.10 & 91.10 & 91.10 & 91.10 \\
 & 7 & 57.40 & 72.40 & 44.40 & 72.40 & 72.40 & 72.40 & 72.40 & 72.40 \\
 & 8 & 36.00 & 59.00 & 44.50 & 59.00 & 59.00 & 59.00 & 59.00 & 59.00 \\
 & 9 & 26.90 & 50.50 & 35.70 & 50.50 & 50.50 & 50.50 & 50.50 & 50.50 \\
 & 10 & 20.20 & 43.50 & 39.40 & 43.50 & 43.50 & 43.50 & 43.50 & 43.50 \\
\cline{1-10}
\multirow[c]{8}{*}{\texttt{Romania}} & 3 & 594.70 & 766.40 & 153.00 & $\infty$ & 764.80 & 764.80 & 705.40 & 764.80 \\
 & 4 & 259.90 & 514.30 & 159.60 & 631.80 & 488.70 & 514.30 & 514.30 & 514.30 \\
 & 5 & 161.50 & 377.70 & 135.80 & 453.10 & 377.70 & 377.70 & 377.70 & 377.70 \\
 & 6 & 96.50 & 290.20 & 148.20 & 357.80 & 290.20 & 290.20 & 290.20 & 290.20 \\
 & 7 & 61.70 & 238.10 & 120.40 & 238.10 & 235.80 & 235.80 & 235.80 & 235.80 \\
 & 8 & 37.80 & 202.70 & 127.30 & 201.50 & 199.50 & 201.50 & 201.50 & 201.50 \\
 & 9 & 23.50 & 172.40 & 118.50 & 172.40 & 172.40 & 172.40 & 172.40 & 172.40 \\
 & 10 & 23.40 & 151.00 & 113.80 & 148.10 & 145.10 & 148.10 & 148.10 & 148.10 \\
\cline{1-10}
\multirow[c]{8}{*}{\texttt{Spain}} & 3 & 1191.80 & 1191.80 & 182.10 & $\infty$ & 1191.80 & 1191.80 & 1191.80 & 1191.80 \\
 & 4 & 587.70 & 794.10 & 170.40 & $\infty$ & 777.80 & 777.80 & 777.80 & 777.80 \\
 & 5 & 313.20 & 559.60 & 159.00 & 651.00 & 559.60 & 559.60 & 559.60 & 559.60 \\
 & 6 & 188.00 & 442.40 & 148.80 & 541.80 & 442.40 & 442.40 & 442.40 & 442.40 \\
 & 7 & 101.00 & 372.20 & 144.10 & 417.50 & 359.60 & 359.60 & 359.60 & 359.60 \\
 & 8 & 71.40 & 303.20 & 129.50 & 327.70 & 287.30 & 295.40 & 295.40 & 295.40 \\
 & 9 & 59.30 & 235.60 & 123.90 & 235.60 & 235.60 & 235.60 & 235.60 & 235.60 \\
 & 10 & 36.10 & 206.70 & 119.20 & 200.70 & 200.70 & 200.70 & 200.70 & 200.70 \\
\cline{1-10}
\bottomrule
\end{tabular}
\end{table}

\begin{table}
\caption{Detailed runtime results on grid instances}
\tiny
\label{tab:results}
\begin{tabular}{|lll||r|r|r|r|}
\toprule
 &  & & \flow & \tree & \ERS & \ERSTree \\
Size & $k$ & $\rho$ &  &  &  &  \\
\midrule
\multirow[c]{32}{*}{$4 \times 4$} & \multirow[c]{4}{*}{3} & 0 & 1.49 & 1.17 & 0.89 & 0.61 \\
 &  & 30 & 0.91 & 0.45 & 0.41 & 0.12 \\
 &  & 60 & 1.44 & 0.98 & 0.68 & 0.37 \\
 &  & 90 & 0.65 & 1.18 & 0.36 & 0.39 \\
\cline{2-7}
 & \multirow[c]{4}{*}{4} & 0 & 3.86 & 0.34 & 0.53 & 0.25 \\
 &  & 30 & 1.01 & 0.11 & 0.14 & 0.03 \\
 &  & 60 & 3.47 & 0.57 & 0.39 & 0.24 \\
 &  & 90 & 3.56 & 0.31 & 0.37 & 0.52 \\
\cline{2-7}
 & \multirow[c]{4}{*}{5} & 0 & 2.05 & 0.10 & 0.15 & 0.07 \\
 &  & 30 & 1.02 & 0.32 & 0.07 & 0.02 \\
 &  & 60 & 3.82 & 0.18 & 0.13 & 0.08 \\
 &  & 90 & 1.62 & 0.16 & 0.16 & 0.13 \\
\cline{2-7}
 & \multirow[c]{4}{*}{6} & 0 & 2.42 & 0.09 & 0.14 & 0.07 \\
 &  & 30 & 1.26 & 0.15 & 0.04 & 0.02 \\
 &  & 60 & 3.32 & 0.09 & 0.09 & 0.05 \\
 &  & 90 & 1.98 & 0.07 & 0.02 & 0.03 \\
\cline{2-7}
 & \multirow[c]{4}{*}{7} & 0 & 1.25 & 0.09 & 0.02 & 0.02 \\
 &  & 30 & 0.91 & 0.07 & 0.07 & 0.02 \\
 &  & 60 & 2.08 & 0.09 & 0.03 & 0.03 \\
 &  & 90 & 0.96 & 0.05 & 0.01 & 0.01 \\
\cline{2-7}
 & \multirow[c]{4}{*}{8} & 0 & 1.13 & 0.03 & 0.01 & 0.02 \\
 &  & 30 & 0.70 & 0.03 & 0.02 & 0.01 \\
 &  & 60 & 1.59 & 0.08 & 0.01 & 0.01 \\
 &  & 90 & 1.26 & 0.06 & 0.01 & 0.01 \\
\cline{2-7}
 & \multirow[c]{4}{*}{9} & 0 & 0.74 & 0.03 & 0.02 & 0.02 \\
 &  & 30 & 0.58 & 0.04 & 0.03 & 0.01 \\
 &  & 60 & 0.79 & 0.03 & 0.01 & 0.01 \\
 &  & 90 & 0.77 & 0.07 & 0.01 & 0.02 \\
\cline{2-7}
 & \multirow[c]{4}{*}{10} & 0 & 0.62 & 0.02 & 0.01 & 0.02 \\
 &  & 30 & 0.84 & 0.02 & 0.01 & 0.01 \\
 &  & 60 & 0.59 & 0.01 & 0.01 & 0.01 \\
 &  & 90 & 0.77 & 0.01 & 0.01 & 0.01 \\
\cline{1-7} \cline{2-7}
\multirow[c]{32}{*}{$5 \times 5$} & \multirow[c]{4}{*}{3} & 0 & 54.45 & 561.41 & 5.20 & 5.05 \\
 &  & 30 & 182.30 & 385.23 & 11.96 & 12.47 \\
 &  & 60 & 2.93 & 173.43 & 0.62 & 0.70 \\
 &  & 90 & 34.89 & 472.19 & 6.70 & 6.46 \\
\cline{2-7}
 & \multirow[c]{4}{*}{4} & 0 & 153.65 & 220.07 & 7.15 & 7.77 \\
 &  & 30 & 71.01 & 11.01 & 0.96 & 0.67 \\
 &  & 60 & 12.42 & 13.53 & 1.46 & 0.55 \\
 &  & 90 & 88.26 & 112.57 & 3.97 & 3.63 \\
\cline{2-7}
 & \multirow[c]{4}{*}{5} & 0 & 555.06 & 44.08 & 2.49 & 3.46 \\
 &  & 30 & 208.24 & 7.74 & 3.70 & 2.08 \\
 &  & 60 & 80.52 & 1.90 & 1.55 & 0.98 \\
 &  & 90 & 189.00 & 25.28 & 5.19 & 4.03 \\
\cline{2-7}
 & \multirow[c]{4}{*}{6} & 0 & 1158.00 & 11.95 & 2.49 & 2.38 \\
 &  & 30 & 787.97 & 2.83 & 2.12 & 0.92 \\
 &  & 60 & 210.69 & 0.95 & 2.27 & 0.68 \\
 &  & 90 & 301.77 & 7.15 & 2.45 & 2.55 \\
\cline{2-7}
 & \multirow[c]{4}{*}{7} & 0 & 1770.82 & 2.80 & 0.50 & 0.52 \\
 &  & 30 & 774.30 & 0.68 & 0.75 & 0.26 \\
 &  & 60 & 374.82 & 0.89 & 0.88 & 0.37 \\
 &  & 90 & 373.09 & 1.21 & 1.11 & 0.21 \\
\cline{2-7}
 & \multirow[c]{4}{*}{8} & 0 & 5727.13 & 0.88 & 0.35 & 0.63 \\
 &  & 30 & 1508.63 & 0.65 & 0.96 & 0.18 \\
 &  & 60 & 513.96 & 0.53 & 0.87 & 0.26 \\
 &  & 90 & 372.40 & 0.35 & 0.35 & 0.15 \\
\cline{2-7}
 & \multirow[c]{4}{*}{9} & 0 & 2421.16 & 0.44 & 0.12 & 0.20 \\
 &  & 30 & 1631.03 & 0.41 & 0.89 & 0.15 \\
 &  & 60 & 588.59 & 0.32 & 0.68 & 0.18 \\
 &  & 90 & 332.21 & 0.14 & 0.05 & 0.07 \\
\cline{2-7}
 & \multirow[c]{4}{*}{10} & 0 & 647.25 & 0.29 & 0.09 & 0.11 \\
 &  & 30 & 1851.12 & 0.26 & 0.31 & 0.12 \\
 &  & 60 & 635.96 & 0.64 & 0.81 & 0.11 \\
 &  & 90 & 364.20 & 0.09 & 0.05 & 0.06 \\
\cline{1-7} \cline{2-7}
\end{tabular}
\end{table}
\begin{table}
\caption{Detailed runtime results on grid instances (continuation)}
\tiny
\label{tab:results}
\begin{tabular}{|lll||r|r|r|r|}
\toprule
 &  & & \flow & \tree & \ERS & \ERSTree \\
Size & $k$ & $\rho$ &  &  &  &  \\
\midrule
\multirow[c]{32}{*}{$6 \times 6$} & \multirow[c]{4}{*}{3} & 0 & 3173.17 & 18000.00 & 1933.51 & 2257.07 \\
 &  & 30 & 3538.70 & 18000.00 & 1011.37 & 2954.76 \\
 &  & 60 & 1475.93 & 18000.00 & 204.30 & 150.30 \\
 &  & 90 & 425.56 & 18000.00 & 481.72 & 1071.71 \\
\cline{2-7}
 & \multirow[c]{4}{*}{4} & 0 & 18000.00 & 18000.00 & 3621.91 & 2528.45 \\
 &  & 30 & 4745.47 & 18000.00 & 48.39 & 23.00 \\
 &  & 60 & 18000.00 & 18000.00 & 413.37 & 684.95 \\
 &  & 90 & 6180.58 & 18000.00 & 1282.13 & 547.69 \\
\cline{2-7}
 & \multirow[c]{4}{*}{5} & 0 & 18000.00 & 18000.00 & 409.80 & 305.68 \\
 &  & 30 & 18000.00 & 18000.00 & 70.45 & 33.16 \\
 &  & 60 & 18000.00 & 18000.00 & 657.07 & 1018.77 \\
 &  & 90 & 14059.45 & 18000.00 & 37.85 & 24.30 \\
\cline{2-7}
 & \multirow[c]{4}{*}{6} & 0 & 18000.00 & 18000.00 & 113.02 & 32.94 \\
 &  & 30 & 18000.00 & 13811.24 & 162.52 & 91.77 \\
 &  & 60 & 18000.00 & 4764.24 & 49.50 & 30.16 \\
 &  & 90 & 18000.00 & 1215.02 & 21.49 & 5.40 \\
\cline{2-7}
 & \multirow[c]{4}{*}{7} & 0 & 18000.00 & 4461.63 & 102.75 & 23.46 \\
 &  & 30 & 18000.00 & 2878.29 & 163.11 & 28.94 \\
 &  & 60 & 18000.00 & 145.28 & 17.52 & 7.50 \\
 &  & 90 & 18000.00 & 93.87 & 20.80 & 4.34 \\
\cline{2-7}
 & \multirow[c]{4}{*}{8} & 0 & 18000.00 & 539.19 & 155.50 & 27.37 \\
 &  & 30 & 18000.00 & 366.91 & 158.54 & 14.83 \\
 &  & 60 & 18000.00 & 46.86 & 16.82 & 4.85 \\
 &  & 90 & 18000.00 & 29.57 & 10.34 & 4.57 \\
\cline{2-7}
 & \multirow[c]{4}{*}{9} & 0 & 18000.00 & 252.29 & 85.21 & 23.07 \\
 &  & 30 & 18000.00 & 36.78 & 76.15 & 6.48 \\
 &  & 60 & 18000.00 & 12.53 & 11.92 & 3.48 \\
 &  & 90 & 18000.00 & 10.00 & 5.83 & 1.35 \\
\cline{2-7}
 & \multirow[c]{4}{*}{10} & 0 & 18000.01 & 201.91 & 11.44 & 3.92 \\
 &  & 30 & 18000.00 & 12.67 & 37.79 & 2.53 \\
 &  & 60 & 18000.00 & 7.98 & 3.59 & 1.57 \\
 &  & 90 & 18000.00 & 6.16 & 3.21 & 0.57 \\
\cline{1-7} \cline{2-7}
\multirow[c]{32}{*}{$7 \times 7$} & \multirow[c]{4}{*}{3} & 0 & 18000.00 & 18000.00 & 18000.00 & 18000.04 \\
 &  & 30 & 18000.00 & 18000.00 & 18000.00 & 18000.00 \\
 &  & 60 & 18000.00 & 18000.01 & 7748.44 & 15348.45 \\
 &  & 90 & 18000.00 & 18000.00 & 13809.59 & 7894.15 \\
\cline{2-7}
 & \multirow[c]{4}{*}{4} & 0 & 18000.00 & 18000.00 & 18000.00 & 18000.00 \\
 &  & 30 & 18000.00 & 18000.00 & 18000.06 & 18000.08 \\
 &  & 60 & 18000.01 & 18000.01 & 8409.97 & 7549.17 \\
 &  & 90 & 18000.00 & 18000.00 & 18000.00 & 14821.58 \\
\cline{2-7}
 & \multirow[c]{4}{*}{5} & 0 & 18000.00 & 18000.00 & 18000.00 & 15188.95 \\
 &  & 30 & 18000.00 & 18000.00 & 18000.00 & 18000.04 \\
 &  & 60 & 18000.00 & 18000.00 & 1642.46 & 1591.97 \\
 &  & 90 & 18000.00 & 18000.00 & 18000.00 & 18000.00 \\
\cline{2-7}
 & \multirow[c]{4}{*}{6} & 0 & 18000.00 & 18000.00 & 7141.67 & 1682.89 \\
 &  & 30 & 18000.00 & 18000.00 & 18000.00 & 18000.00 \\
 &  & 60 & 18000.00 & 18000.00 & 18000.01 & 3438.08 \\
 &  & 90 & 18000.00 & 18000.00 & 18000.00 & 11095.24 \\
\cline{2-7}
 & \multirow[c]{4}{*}{7} & 0 & 18000.00 & 18000.00 & 18000.05 & 7400.82 \\
 &  & 30 & 18000.00 & 18000.00 & 18000.04 & 18000.00 \\
 &  & 60 & 18000.00 & 18000.00 & 18000.01 & 4318.91 \\
 &  & 90 & 18000.00 & 18000.01 & 7958.19 & 2905.39 \\
\cline{2-7}
 & \multirow[c]{4}{*}{8} & 0 & 18000.00 & 18000.00 & 18000.01 & 5332.83 \\
 &  & 30 & 18000.00 & 18000.00 & 18000.00 & 11435.34 \\
 &  & 60 & 18000.00 & 18000.00 & 11379.43 & 830.86 \\
 &  & 90 & 18000.00 & 18000.01 & 6033.23 & 1408.61 \\
\cline{2-7}
 & \multirow[c]{4}{*}{9} & 0 & 18000.00 & 18000.00 & 3607.06 & 447.97 \\
 &  & 30 & 18000.00 & 18000.00 & 2385.48 & 585.35 \\
 &  & 60 & 18000.00 & 18000.00 & 12975.71 & 859.22 \\
 &  & 90 & 18000.01 & 18000.00 & 1603.00 & 327.62 \\
\cline{2-7}
 & \multirow[c]{4}{*}{10} & 0 & 18000.00 & 18000.00 & 3614.20 & 385.03 \\
 &  & 30 & 18000.00 & 18000.00 & 2915.21 & 249.53 \\
 &  & 60 & 18000.00 & 18000.00 & 11413.25 & 323.36 \\
 &  & 90 & 18000.00 & 18000.00 & 691.61 & 89.37 \\
\cline{1-7} \cline{2-7}
\end{tabular}
\end{table}
\begin{table}
\caption{Detailed runtime results on grid instances (continuation)}
\tiny
\label{tab:results}
\begin{tabular}{|lll||r|r|r|r|}
\toprule
 &  & & \flow & \tree & \ERS & \ERSTree \\
Size & $k$ & $\rho$ &  &  &  &  \\
\midrule
\multirow[c]{32}{*}{$8 \times 8$} & \multirow[c]{4}{*}{3} & 0 & 18000.00 & 18000.01 & 18000.00 & 18000.01 \\
 &  & 30 & 18000.00 & 18000.00 & 18000.01 & 18000.01 \\
 &  & 60 & 18000.00 & 18000.00 & 18000.00 & 18000.01 \\
 &  & 90 & 18000.00 & 18000.01 & 18000.00 & 18000.00 \\
\cline{2-7}
 & \multirow[c]{4}{*}{4} & 0 & 18000.02 & 18000.00 & 18000.00 & 18000.00 \\
 &  & 30 & 18000.00 & 18000.02 & 18000.03 & 18000.00 \\
 &  & 60 & 18000.00 & 18000.00 & 18000.00 & 18000.00 \\
 &  & 90 & 18000.00 & 18000.01 & 18000.05 & 18000.00 \\
\cline{2-7}
 & \multirow[c]{4}{*}{5} & 0 & 18000.00 & 18000.00 & 18000.00 & 18000.18 \\
 &  & 30 & 18000.00 & 18000.00 & 18000.01 & 18000.00 \\
 &  & 60 & 18000.00 & 18000.00 & 18000.01 & 18000.01 \\
 &  & 90 & 18000.00 & 18000.00 & 18000.00 & 18000.00 \\
\cline{2-7}
 & \multirow[c]{4}{*}{6} & 0 & 18000.00 & 18000.00 & 18000.01 & 18000.01 \\
 &  & 30 & 18000.00 & 18000.00 & 18000.00 & 18000.00 \\
 &  & 60 & 18000.00 & 18000.00 & 18000.01 & 18000.01 \\
 &  & 90 & 18000.00 & 18000.00 & 18000.01 & 18000.01 \\
\cline{2-7}
 & \multirow[c]{4}{*}{7} & 0 & 18000.00 & 18000.00 & 18000.00 & 18000.00 \\
 &  & 30 & 18000.01 & 18000.00 & 18000.01 & 18000.01 \\
 &  & 60 & 18000.00 & 18000.00 & 18000.01 & 18000.00 \\
 &  & 90 & 18000.00 & 18000.01 & 18000.00 & 18000.01 \\
\cline{2-7}
 & \multirow[c]{4}{*}{8} & 0 & 18000.00 & 18000.00 & 18000.01 & 18000.00 \\
 &  & 30 & 18000.00 & 18000.00 & 18000.01 & 18000.00 \\
 &  & 60 & 18000.01 & 18000.00 & 18000.00 & 18000.00 \\
 &  & 90 & 18000.01 & 18000.00 & 18000.01 & 18000.00 \\
\cline{2-7}
 & \multirow[c]{4}{*}{9} & 0 & 18000.00 & 18000.00 & 18000.00 & 18000.00 \\
 &  & 30 & 18000.00 & 18000.05 & 18000.01 & 12743.91 \\
 &  & 60 & 18000.00 & 18000.00 & 18000.01 & 18000.00 \\
 &  & 90 & 18000.01 & 18000.00 & 18000.00 & 18000.01 \\
\cline{2-7}
 & \multirow[c]{4}{*}{10} & 0 & 18000.01 & 18000.00 & 18000.00 & 18000.00 \\
 &  & 30 & 18000.01 & 18000.01 & 18000.11 & 18000.01 \\
 &  & 60 & 18000.00 & 18000.01 & 18000.00 & 18000.01 \\
 &  & 90 & 18000.01 & 18000.00 & 18000.01 & 18000.00 \\
\cline{1-7} \cline{2-7}
\multirow[c]{32}{*}{$9 \times 9$} & \multirow[c]{4}{*}{3} & 0 & 18000.00 & 18000.00 & 18000.01 & 18000.01 \\
 &  & 30 & 18000.00 & 18000.00 & 18000.01 & 18000.01 \\
 &  & 60 & 18000.10 & 18000.00 & 18000.01 & 18000.01 \\
 &  & 90 & 18000.13 & 18000.00 & 18000.01 & 18000.01 \\
\cline{2-7}
 & \multirow[c]{4}{*}{4} & 0 & 18000.95 & 18000.00 & 18000.01 & 18000.01 \\
 &  & 30 & 18000.01 & 18000.00 & 18000.01 & 18000.01 \\
 &  & 60 & 18000.00 & 18000.00 & 18000.01 & 18000.01 \\
 &  & 90 & 18000.00 & 18000.00 & 18000.01 & 18000.01 \\
\cline{2-7}
 & \multirow[c]{4}{*}{5} & 0 & 18000.01 & 18000.00 & 18000.01 & 18000.01 \\
 &  & 30 & 18000.01 & 18000.00 & 18000.01 & 18000.01 \\
 &  & 60 & 18000.01 & 18000.02 & 18000.01 & 18000.01 \\
 &  & 90 & 18000.00 & 18000.00 & 18000.01 & 18000.01 \\
\cline{2-7}
 & \multirow[c]{4}{*}{6} & 0 & 18000.01 & 18000.00 & 18000.01 & 18000.01 \\
 &  & 30 & 18000.01 & 18000.00 & 18000.01 & 18000.01 \\
 &  & 60 & 18000.01 & 18000.00 & 18000.01 & 18000.00 \\
 &  & 90 & 18000.01 & 18000.00 & 18000.12 & 18000.01 \\
\cline{2-7}
 & \multirow[c]{4}{*}{7} & 0 & 18000.01 & 18000.00 & 18000.00 & 18000.01 \\
 &  & 30 & 18000.01 & 18000.02 & 18000.01 & 18000.01 \\
 &  & 60 & 18000.01 & 18000.03 & 18000.01 & 18000.07 \\
 &  & 90 & 18000.01 & 18000.03 & 18000.01 & 18000.01 \\
\cline{2-7}
 & \multirow[c]{4}{*}{8} & 0 & 18000.01 & 18000.00 & 18000.01 & 18000.01 \\
 &  & 30 & 18000.01 & 18000.00 & 18000.01 & 18000.03 \\
 &  & 60 & 18000.01 & 18000.02 & 18000.01 & 18000.01 \\
 &  & 90 & 18000.01 & 18000.01 & 18000.01 & 18000.01 \\
\cline{2-7}
 & \multirow[c]{4}{*}{9} & 0 & 18000.01 & 18000.00 & 18000.01 & 18000.01 \\
 &  & 30 & 18000.01 & 18000.01 & 18000.01 & 18000.01 \\
 &  & 60 & 18000.01 & 18000.01 & 18000.01 & 18000.01 \\
 &  & 90 & 18000.01 & 18000.01 & 18000.01 & 18000.01 \\
\cline{2-7}
 & \multirow[c]{4}{*}{10} & 0 & 18000.01 & 18000.00 & 18000.01 & 18000.02 \\
 &  & 30 & 18000.01 & 18000.01 & 18000.01 & 18000.01 \\
 &  & 60 & 18000.01 & 18000.01 & 18000.01 & 18000.01 \\
 &  & 90 & 18000.01 & 18000.03 & 18000.01 & 18000.01 \\
\cline{1-7} \cline{2-7}
\bottomrule
\end{tabular}
\end{table}

\begin{table}
\tiny
\caption{Final lower bounds (lb) and upper bounds (ub) on grid instances (Slight inconsistencies in bounds are due to limited numerical precision)}
\label{tab:results}
\begin{tabular}{lllrrrrrrrr}
\toprule
 &  &  & \multicolumn{2}{c}{\flow} & \multicolumn{2}{c}{\tree} & \multicolumn{2}{c}{\ERS} & \multicolumn{2}{c}{\ERSTree} \\
 &  &  & lb & ub & lb & ub & lb & ub & lb & ub \\
Size & $k$ & $\rho$ &  &  &  &  &  &  &  &  \\
\midrule
\multirow[c]{32}{*}{$4 \times 4$} & \multirow[c]{4}{*}{3} & 0 & 248.43 & 248.43 & 248.43 & 248.43 & 248.43 & 248.43 & 248.43 & 248.43 \\
 &  & 30 & 229.24 & 229.24 & 229.24 & 229.24 & 229.24 & 229.24 & 229.23 & 229.24 \\
 &  & 60 & 229.79 & 229.79 & 229.79 & 229.79 & 229.79 & 229.79 & 229.79 & 229.79 \\
 &  & 90 & 314.47 & 314.47 & 314.47 & 314.47 & 314.47 & 314.47 & 314.44 & 314.47 \\
\cline{2-11}
 & \multirow[c]{4}{*}{4} & 0 & 156.00 & 156.00 & 156.00 & 156.00 & 156.00 & 156.00 & 156.00 & 156.00 \\
 &  & 30 & 123.71 & 123.72 & 123.72 & 123.72 & 123.72 & 123.72 & 123.72 & 123.72 \\
 &  & 60 & 150.13 & 150.13 & 150.13 & 150.13 & 150.13 & 150.13 & 150.13 & 150.13 \\
 &  & 90 & 212.42 & 212.42 & 212.42 & 212.42 & 212.42 & 212.42 & 212.42 & 212.42 \\
\cline{2-11}
 & \multirow[c]{4}{*}{5} & 0 & 92.54 & 92.54 & 92.54 & 92.54 & 92.54 & 92.54 & 92.54 & 92.54 \\
 &  & 30 & 78.03 & 78.03 & 78.03 & 78.03 & 78.03 & 78.03 & 78.03 & 78.03 \\
 &  & 60 & 103.09 & 103.09 & 103.09 & 103.09 & 103.09 & 103.09 & 103.09 & 103.09 \\
 &  & 90 & 141.42 & 141.42 & 141.42 & 141.42 & 141.42 & 141.42 & 141.42 & 141.42 \\
\cline{2-11}
 & \multirow[c]{4}{*}{6} & 0 & 66.22 & 66.22 & 66.22 & 66.22 & 66.22 & 66.22 & 66.22 & 66.22 \\
 &  & 30 & 52.22 & 52.22 & 52.22 & 52.22 & 52.22 & 52.22 & 52.22 & 52.22 \\
 &  & 60 & 73.82 & 73.82 & 73.82 & 73.82 & 73.82 & 73.82 & 73.82 & 73.82 \\
 &  & 90 & 85.45 & 85.45 & 85.45 & 85.45 & 85.45 & 85.45 & 85.45 & 85.45 \\
\cline{2-11}
 & \multirow[c]{4}{*}{7} & 0 & 42.52 & 42.52 & 42.52 & 42.52 & 42.52 & 42.52 & 42.52 & 42.52 \\
 &  & 30 & 37.36 & 37.36 & 37.36 & 37.36 & 37.36 & 37.36 & 37.36 & 37.36 \\
 &  & 60 & 48.48 & 48.48 & 48.48 & 48.48 & 48.48 & 48.48 & 48.48 & 48.48 \\
 &  & 90 & 56.51 & 56.51 & 56.51 & 56.51 & 56.51 & 56.51 & 56.51 & 56.51 \\
\cline{2-11}
 & \multirow[c]{4}{*}{8} & 0 & 30.74 & 30.74 & 30.74 & 30.74 & 30.74 & 30.74 & 30.74 & 30.74 \\
 &  & 30 & 22.96 & 22.96 & 22.96 & 22.96 & 22.96 & 22.96 & 22.96 & 22.96 \\
 &  & 60 & 29.10 & 29.10 & 29.10 & 29.10 & 29.10 & 29.10 & 29.10 & 29.10 \\
 &  & 90 & 44.21 & 44.21 & 44.21 & 44.21 & 44.21 & 44.21 & 44.21 & 44.21 \\
\cline{2-11}
 & \multirow[c]{4}{*}{9} & 0 & 22.58 & 22.58 & 22.58 & 22.58 & 22.58 & 22.58 & 22.58 & 22.58 \\
 &  & 30 & 16.75 & 16.75 & 16.75 & 16.75 & 16.75 & 16.75 & 16.75 & 16.75 \\
 &  & 60 & 13.46 & 13.46 & 13.46 & 13.46 & 13.46 & 13.46 & 13.46 & 13.46 \\
 &  & 90 & 28.22 & 28.22 & 28.22 & 28.22 & 28.22 & 28.22 & 28.22 & 28.22 \\
\cline{2-11}
 & \multirow[c]{4}{*}{10} & 0 & 15.63 & 15.63 & 15.63 & 15.63 & 15.63 & 15.63 & 15.63 & 15.63 \\
 &  & 30 & 10.66 & 10.66 & 10.66 & 10.66 & 10.66 & 10.66 & 10.66 & 10.66 \\
 &  & 60 & 5.82 & 5.82 & 5.82 & 5.82 & 5.82 & 5.82 & 5.82 & 5.82 \\
 &  & 90 & 15.92 & 15.92 & 15.92 & 15.92 & 15.92 & 15.92 & 15.92 & 15.92 \\
\cline{1-11} \cline{2-11}
\multirow[c]{32}{*}{$5 \times 5$} & \multirow[c]{4}{*}{3} & 0 & 619.28 & 619.28 & 619.22 & 619.28 & 619.28 & 619.28 & 619.28 & 619.28 \\
 &  & 30 & 644.22 & 644.22 & 644.20 & 644.22 & 644.22 & 644.22 & 644.22 & 644.22 \\
 &  & 60 & 479.24 & 479.24 & 479.20 & 479.24 & 479.24 & 479.24 & 479.24 & 479.24 \\
 &  & 90 & 860.48 & 860.48 & 860.48 & 860.48 & 860.48 & 860.48 & 860.48 & 860.48 \\
\cline{2-11}
 & \multirow[c]{4}{*}{4} & 0 & 451.08 & 451.08 & 451.07 & 451.08 & 451.08 & 451.08 & 451.08 & 451.08 \\
 &  & 30 & 330.14 & 330.14 & 330.14 & 330.14 & 330.14 & 330.14 & 330.12 & 330.14 \\
 &  & 60 & 305.93 & 305.93 & 305.93 & 305.93 & 305.93 & 305.93 & 305.91 & 305.93 \\
 &  & 90 & 549.19 & 549.19 & 549.16 & 549.19 & 549.19 & 549.19 & 549.19 & 549.19 \\
\cline{2-11}
 & \multirow[c]{4}{*}{5} & 0 & 311.59 & 311.59 & 311.57 & 311.59 & 311.59 & 311.59 & 311.59 & 311.59 \\
 &  & 30 & 265.15 & 265.15 & 265.15 & 265.15 & 265.15 & 265.15 & 265.15 & 265.15 \\
 &  & 60 & 236.02 & 236.02 & 236.02 & 236.02 & 236.02 & 236.02 & 236.02 & 236.02 \\
 &  & 90 & 393.64 & 393.64 & 393.64 & 393.64 & 393.64 & 393.64 & 393.64 & 393.64 \\
\cline{2-11}
 & \multirow[c]{4}{*}{6} & 0 & 247.87 & 247.89 & 247.89 & 247.89 & 247.89 & 247.89 & 247.89 & 247.89 \\
 &  & 30 & 201.96 & 201.96 & 201.96 & 201.96 & 201.96 & 201.96 & 201.96 & 201.96 \\
 &  & 60 & 188.43 & 188.43 & 188.43 & 188.43 & 188.43 & 188.43 & 188.43 & 188.43 \\
 &  & 90 & 270.26 & 270.26 & 270.26 & 270.26 & 270.26 & 270.26 & 270.26 & 270.26 \\
\cline{2-11}
 & \multirow[c]{4}{*}{7} & 0 & 191.14 & 191.14 & 191.14 & 191.14 & 191.14 & 191.14 & 191.14 & 191.14 \\
 &  & 30 & 151.30 & 151.30 & 151.30 & 151.30 & 151.30 & 151.30 & 151.30 & 151.30 \\
 &  & 60 & 150.96 & 150.96 & 150.96 & 150.96 & 150.96 & 150.96 & 150.96 & 150.96 \\
 &  & 90 & 184.51 & 184.52 & 184.52 & 184.52 & 184.52 & 184.52 & 184.52 & 184.52 \\
\cline{2-11}
 & \multirow[c]{4}{*}{8} & 0 & 151.05 & 151.06 & 151.06 & 151.06 & 151.06 & 151.06 & 151.06 & 151.06 \\
 &  & 30 & 123.56 & 123.57 & 123.57 & 123.57 & 123.57 & 123.57 & 123.57 & 123.57 \\
 &  & 60 & 121.36 & 121.36 & 121.36 & 121.36 & 121.36 & 121.36 & 121.36 & 121.36 \\
 &  & 90 & 140.85 & 140.85 & 140.85 & 140.85 & 140.85 & 140.85 & 140.85 & 140.85 \\
\cline{2-11}
 & \multirow[c]{4}{*}{9} & 0 & 113.70 & 113.71 & 113.71 & 113.71 & 113.71 & 113.71 & 113.71 & 113.71 \\
 &  & 30 & 101.35 & 101.36 & 101.36 & 101.36 & 101.36 & 101.36 & 101.36 & 101.36 \\
 &  & 60 & 100.74 & 100.75 & 100.75 & 100.75 & 100.75 & 100.75 & 100.75 & 100.75 \\
 &  & 90 & 109.83 & 109.83 & 109.83 & 109.83 & 109.83 & 109.83 & 109.83 & 109.83 \\
\cline{2-11}
 & \multirow[c]{4}{*}{10} & 0 & 91.15 & 91.15 & 91.15 & 91.15 & 91.15 & 91.15 & 91.15 & 91.15 \\
 &  & 30 & 83.75 & 83.75 & 83.75 & 83.75 & 83.75 & 83.75 & 83.75 & 83.75 \\
 &  & 60 & 82.16 & 82.16 & 82.16 & 82.16 & 82.16 & 82.16 & 82.16 & 82.16 \\
 &  & 90 & 94.79 & 94.79 & 94.79 & 94.79 & 94.79 & 94.79 & 94.79 & 94.79 \\
\cline{1-11} \cline{2-11}
\end{tabular}
\end{table}
\begin{table}
\tiny
\caption{Final lower bounds (lb) and upper bounds (ub) on grid instances (continuation)}
\label{tab:results}
\begin{tabular}{lllrrrrrrrr}
\toprule
 &  &  & \multicolumn{2}{c}{\flow} & \multicolumn{2}{c}{\tree} & \multicolumn{2}{c}{\ERS} & \multicolumn{2}{c}{\ERSTree} \\
 &  &  & lb & ub & lb & ub & lb & ub & lb & ub \\
Size & $k$ & $\rho$ &  &  &  &  &  &  &  &  \\
\midrule
\multirow[c]{32}{*}{$6 \times 6$} & \multirow[c]{4}{*}{3} & 0 & 1349.24 & 1349.31 & 509.44 & 1418.51 & 1349.31 & 1349.31 & 1349.31 & 1349.31 \\
 &  & 30 & 1767.60 & 1767.77 & 686.78 & 1903.80 & 1767.77 & 1767.77 & 1767.77 & 1767.77 \\
 &  & 60 & 1774.35 & 1774.52 & 792.20 & 1902.43 & 1774.52 & 1774.52 & 1774.52 & 1774.52 \\
 &  & 90 & 1641.41 & 1641.43 & 707.76 & 1656.83 & 1641.43 & 1641.43 & 1641.43 & 1641.43 \\
\cline{2-11}
 & \multirow[c]{4}{*}{4} & 0 & 629.75 & 939.96 & 466.40 & 939.96 & 939.96 & 939.96 & 939.96 & 939.96 \\
 &  & 30 & 1025.63 & 1025.72 & 692.83 & 1025.72 & 1025.72 & 1025.72 & 1025.62 & 1025.72 \\
 &  & 60 & 1013.97 & 1232.06 & 721.65 & 1274.08 & 1232.06 & 1232.06 & 1231.99 & 1232.06 \\
 &  & 90 & 1039.54 & 1039.64 & 631.82 & 1083.31 & 1039.64 & 1039.64 & 1039.64 & 1039.64 \\
\cline{2-11}
 & \multirow[c]{4}{*}{5} & 0 & 372.53 & 654.37 & 455.74 & 654.37 & 654.37 & 654.37 & 654.37 & 654.37 \\
 &  & 30 & 543.07 & 761.54 & 564.81 & 761.54 & 761.54 & 761.54 & 761.54 & 761.54 \\
 &  & 60 & 571.54 & 911.69 & 679.60 & 911.69 & 911.69 & 911.69 & 911.69 & 911.69 \\
 &  & 90 & 668.43 & 668.48 & 544.30 & 668.48 & 668.48 & 668.48 & 668.48 & 668.48 \\
\cline{2-11}
 & \multirow[c]{4}{*}{6} & 0 & 242.17 & 486.66 & 421.92 & 486.66 & 486.66 & 486.66 & 486.66 & 486.66 \\
 &  & 30 & 323.77 & 615.52 & 615.41 & 615.47 & 615.47 & 615.47 & 615.47 & 615.47 \\
 &  & 60 & 314.78 & 615.38 & 615.32 & 615.38 & 615.38 & 615.38 & 615.38 & 615.38 \\
 &  & 90 & 355.68 & 481.95 & 481.91 & 481.95 & 481.95 & 481.95 & 481.93 & 481.95 \\
\cline{2-11}
 & \multirow[c]{4}{*}{7} & 0 & 151.66 & 423.68 & 387.38 & 387.41 & 387.41 & 387.41 & 387.41 & 387.41 \\
 &  & 30 & 207.65 & 501.98 & 501.93 & 501.98 & 501.98 & 501.98 & 501.98 & 501.98 \\
 &  & 60 & 203.92 & 453.59 & 453.59 & 453.59 & 453.59 & 453.59 & 453.56 & 453.59 \\
 &  & 90 & 225.04 & 377.45 & 377.42 & 377.45 & 377.45 & 377.45 & 377.45 & 377.45 \\
\cline{2-11}
 & \multirow[c]{4}{*}{8} & 0 & 109.49 & 324.43 & 324.40 & 324.43 & 324.43 & 324.43 & 324.43 & 324.43 \\
 &  & 30 & 148.85 & 411.28 & 409.16 & 409.18 & 409.18 & 409.18 & 409.18 & 409.18 \\
 &  & 60 & 190.61 & 367.38 & 367.38 & 367.38 & 367.38 & 367.38 & 367.38 & 367.38 \\
 &  & 90 & 176.44 & 297.93 & 297.93 & 297.93 & 297.93 & 297.93 & 297.93 & 297.93 \\
\cline{2-11}
 & \multirow[c]{4}{*}{9} & 0 & 68.12 & 270.65 & 270.64 & 270.65 & 270.65 & 270.65 & 270.65 & 270.65 \\
 &  & 30 & 93.83 & 327.55 & 327.55 & 327.55 & 327.55 & 327.55 & 327.55 & 327.55 \\
 &  & 60 & 103.24 & 308.87 & 308.87 & 308.87 & 308.87 & 308.87 & 308.87 & 308.87 \\
 &  & 90 & 125.68 & 237.36 & 237.36 & 237.36 & 237.36 & 237.36 & 237.36 & 237.36 \\
\cline{2-11}
 & \multirow[c]{4}{*}{10} & 0 & 61.97 & 220.95 & 220.93 & 220.95 & 220.95 & 220.95 & 220.95 & 220.95 \\
 &  & 30 & 71.10 & 264.41 & 264.41 & 264.41 & 264.41 & 264.41 & 264.41 & 264.41 \\
 &  & 60 & 103.14 & 252.38 & 252.38 & 252.38 & 252.38 & 252.38 & 252.38 & 252.38 \\
 &  & 90 & 113.31 & 191.79 & 191.79 & 191.79 & 191.79 & 191.79 & 191.79 & 191.79 \\
\cline{1-11} \cline{2-11}
\multirow[c]{32}{*}{$7 \times 7$} & \multirow[c]{4}{*}{3} & 0 & 1678.47 & 2769.95 & 689.07 & 2774.52 & 2529.33 & 2774.52 & 2592.46 & 2763.55 \\
 &  & 30 & 2030.43 & 3268.98 & 806.48 & $\infty$ & 2926.32 & 3116.57 & 2823.13 & 3116.57 \\
 &  & 60 & 2063.96 & 2439.55 & 646.96 & 2753.08 & 2439.55 & 2439.55 & 2439.55 & 2439.55 \\
 &  & 90 & 3391.85 & 4567.88 & 1038.94 & 4614.88 & 4567.88 & 4567.88 & 4567.87 & 4567.88 \\
\cline{2-11}
 & \multirow[c]{4}{*}{4} & 0 & 754.37 & 1889.76 & 675.49 & 1913.77 & 1758.35 & 1874.74 & 1817.14 & 1913.77 \\
 &  & 30 & 804.96 & 2233.78 & 760.91 & 2313.05 & 2042.02 & 2094.72 & 2024.10 & 2094.72 \\
 &  & 60 & 1051.56 & 1680.58 & 598.08 & 2098.96 & 1680.58 & 1680.58 & 1680.58 & 1680.58 \\
 &  & 90 & 1781.99 & 2981.79 & 990.86 & 3092.89 & 2870.83 & 2981.79 & 2981.57 & 2981.79 \\
\cline{2-11}
 & \multirow[c]{4}{*}{5} & 0 & 477.46 & 1430.04 & 620.46 & 1445.17 & 1332.66 & 1406.28 & 1406.28 & 1406.28 \\
 &  & 30 & 461.58 & 1617.61 & 718.62 & 1692.61 & 1493.07 & 1617.61 & 1514.24 & 1617.61 \\
 &  & 60 & 520.64 & 1210.27 & 558.25 & 1220.54 & 1210.27 & 1210.27 & 1210.27 & 1210.27 \\
 &  & 90 & 892.98 & 2148.17 & 916.64 & 2148.17 & 1992.35 & 2196.46 & 2048.94 & 2148.17 \\
\cline{2-11}
 & \multirow[c]{4}{*}{6} & 0 & 224.03 & 1103.87 & 583.01 & 1177.74 & 1079.68 & 1079.70 & 1079.67 & 1079.70 \\
 &  & 30 & 185.02 & 1305.52 & 603.99 & 1277.96 & 1200.52 & 1279.93 & 1236.28 & 1279.93 \\
 &  & 60 & 296.20 & 988.35 & 520.33 & 988.29 & 958.54 & 972.95 & 972.95 & 972.95 \\
 &  & 90 & 430.95 & 1592.27 & 826.40 & 1835.29 & 1534.47 & 1592.27 & 1592.27 & 1592.27 \\
\cline{2-11}
 & \multirow[c]{4}{*}{7} & 0 & 132.23 & 906.01 & 559.22 & 906.01 & 874.33 & 906.01 & 906.01 & 906.01 \\
 &  & 30 & 132.16 & 1088.90 & 618.45 & 1070.98 & 967.91 & 1049.79 & 1009.12 & 1048.88 \\
 &  & 60 & 145.36 & 830.85 & 475.70 & 826.08 & 768.19 & 803.74 & 803.74 & 803.74 \\
 &  & 90 & 283.12 & 1241.79 & 780.94 & 1241.79 & 1241.79 & 1241.79 & 1241.79 & 1241.79 \\
\cline{2-11}
 & \multirow[c]{4}{*}{8} & 0 & 83.81 & 763.12 & 513.14 & 756.65 & 736.39 & 756.65 & 756.65 & 756.65 \\
 &  & 30 & 79.86 & 901.62 & 597.46 & 872.52 & 826.15 & 856.68 & 856.60 & 856.68 \\
 &  & 60 & 74.75 & 672.98 & 456.81 & 676.05 & 652.33 & 652.33 & 652.33 & 652.33 \\
 &  & 90 & 160.54 & 1061.00 & 741.86 & 1023.87 & 1023.87 & 1023.87 & 1023.87 & 1023.87 \\
\cline{2-11}
 & \multirow[c]{4}{*}{9} & 0 & 59.01 & 617.86 & 472.18 & 617.86 & 617.86 & 617.86 & 617.86 & 617.86 \\
 &  & 30 & 56.68 & 695.87 & 547.50 & 695.87 & 695.85 & 695.87 & 695.87 & 695.87 \\
 &  & 60 & 64.36 & 558.77 & 425.20 & 558.77 & 558.77 & 558.77 & 558.77 & 558.77 \\
 &  & 90 & 118.54 & 886.26 & 664.82 & 840.10 & 840.04 & 840.10 & 840.10 & 840.10 \\
\cline{2-11}
 & \multirow[c]{4}{*}{10} & 0 & 45.02 & 535.40 & 453.95 & 535.40 & 535.40 & 535.40 & 535.40 & 535.40 \\
 &  & 30 & 42.72 & 599.21 & 512.51 & 599.21 & 594.71 & 594.75 & 594.75 & 594.75 \\
 &  & 60 & 50.40 & 527.08 & 381.28 & 482.36 & 482.36 & 482.36 & 482.36 & 482.36 \\
 &  & 90 & 85.29 & 716.97 & 646.34 & 702.49 & 702.49 & 702.49 & 702.49 & 702.49 \\
\cline{1-11} \cline{2-11}
\end{tabular}
\end{table}

\begin{table}
\tiny
\caption{Final lower bounds (lb) and upper bounds (ub) on grid instances (continuation)}
\label{tab:results}
\begin{tabular}{lllrrrrrrrr}
\toprule
 &  &  & \multicolumn{2}{c}{\flow} & \multicolumn{2}{c}{\tree} & \multicolumn{2}{c}{\ERS} & \multicolumn{2}{c}{\ERSTree} \\
 &  &  & lb & ub & lb & ub & lb & ub & lb & ub \\
Size & $k$ & $\rho$ &  &  &  &  &  &  &  &  \\
\midrule
\multirow[c]{32}{*}{$8 \times 8$} & \multirow[c]{4}{*}{3} & 0 & 2103.43 & 5973.56 & 1125.15 & $\infty$ & 4911.10 & 6003.52 & 4985.74 & 7052.40 \\
 &  & 30 & 2540.03 & 6912.07 & 1282.96 & $\infty$ & 5281.30 & 6423.68 & 5332.67 & 6276.32 \\
 &  & 60 & 2470.59 & 7856.03 & 1156.25 & $\infty$ & 5715.57 & 8635.63 & 5734.72 & 9076.03 \\
 &  & 90 & 3466.48 & 7030.71 & 1022.57 & $\infty$ & 5801.75 & 10574.24 & 5908.83 & 8205.28 \\
\cline{2-11}
 & \multirow[c]{4}{*}{4} & 0 & 1278.90 & 4298.46 & 1079.08 & $\infty$ & 3522.21 & 4450.32 & 3569.06 & 6748.73 \\
 &  & 30 & 729.08 & 4805.46 & 1202.77 & 7095.49 & 3771.29 & 4335.87 & 3801.84 & 4202.35 \\
 &  & 60 & 960.00 & 5561.16 & 1098.74 & $\infty$ & 4037.15 & 5010.12 & 4059.69 & 5082.09 \\
 &  & 90 & 1489.67 & 4982.64 & 994.07 & $\infty$ & 3872.83 & 4915.32 & 3958.06 & 4858.57 \\
\cline{2-11}
 & \multirow[c]{4}{*}{5} & 0 & 462.73 & 3739.51 & 1008.03 & $\infty$ & 2709.32 & 3063.47 & 2784.36 & 4012.07 \\
 &  & 30 & 416.54 & 3462.78 & 1132.74 & $\infty$ & 2891.03 & 3312.89 & 2976.16 & 3325.59 \\
 &  & 60 & 527.12 & 3624.13 & 996.81 & $\infty$ & 3011.62 & 3875.09 & 3078.19 & 3663.09 \\
 &  & 90 & 583.57 & 3540.83 & 897.37 & $\infty$ & 2882.01 & 3353.56 & 2981.67 & 3540.83 \\
\cline{2-11}
 & \multirow[c]{4}{*}{6} & 0 & 285.05 & 2600.64 & 974.76 & 2832.56 & 2166.55 & 2551.45 & 2264.48 & 2406.60 \\
 &  & 30 & 268.90 & 3154.87 & 1095.77 & 2976.64 & 2347.07 & 2612.89 & 2436.71 & 2676.97 \\
 &  & 60 & 268.50 & 2755.93 & 1007.68 & 3408.67 & 2367.42 & 2792.89 & 2459.39 & 2707.78 \\
 &  & 90 & 398.11 & 2699.84 & 838.09 & $\infty$ & 2218.24 & 2646.57 & 2338.62 & 2717.08 \\
\cline{2-11}
 & \multirow[c]{4}{*}{7} & 0 & 123.85 & 2131.64 & 919.74 & 2685.26 & 1795.60 & 2008.06 & 1876.03 & 2011.47 \\
 &  & 30 & 99.13 & 2296.55 & 1067.21 & 2319.46 & 1931.97 & 2230.28 & 2037.86 & 2173.70 \\
 &  & 60 & 152.99 & 2455.03 & 967.01 & 2413.04 & 1929.24 & 2117.70 & 2018.97 & 2171.89 \\
 &  & 90 & 169.09 & 2140.64 & 838.93 & 2236.05 & 1775.06 & 2196.16 & 1891.55 & 2140.85 \\
\cline{2-11}
 & \multirow[c]{4}{*}{8} & 0 & 97.88 & 1693.65 & 863.66 & 1759.77 & 1519.91 & 1690.24 & 1596.82 & 1699.47 \\
 &  & 30 & 74.21 & 1806.42 & 1019.18 & 2326.53 & 1647.51 & 1806.42 & 1734.91 & 1806.42 \\
 &  & 60 & 94.22 & 1925.83 & 909.31 & 1940.40 & 1598.69 & 1811.76 & 1680.71 & 1811.76 \\
 &  & 90 & 113.69 & 1727.89 & 800.01 & 1964.64 & 1474.12 & 1698.50 & 1573.11 & 1698.10 \\
\cline{2-11}
 & \multirow[c]{4}{*}{9} & 0 & 48.77 & 1446.53 & 846.94 & 1438.15 & 1298.14 & 1443.12 & 1372.40 & 1446.53 \\
 &  & 30 & 52.94 & 1521.38 & 955.93 & 1609.04 & 1425.55 & 1496.87 & 1496.87 & 1496.87 \\
 &  & 60 & 81.27 & 1537.34 & 878.10 & 1634.03 & 1353.15 & 1540.06 & 1429.38 & 1537.34 \\
 &  & 90 & 80.31 & 1426.21 & 755.83 & 1462.72 & 1239.85 & 1443.70 & 1329.74 & 1422.37 \\
\cline{2-11}
 & \multirow[c]{4}{*}{10} & 0 & 36.44 & 1207.75 & 803.10 & 1225.78 & 1130.85 & 1218.52 & 1190.21 & 1207.75 \\
 &  & 30 & 34.99 & 1388.10 & 902.24 & 1390.80 & 1246.62 & 1332.66 & 1316.56 & 1332.66 \\
 &  & 60 & 43.10 & 1326.57 & 794.38 & 1348.15 & 1161.17 & 1310.56 & 1244.06 & 1312.02 \\
 &  & 90 & 53.21 & 1163.47 & 713.38 & 1191.54 & 1067.71 & 1163.47 & 1132.24 & 1163.47 \\
\cline{1-11} \cline{2-11}
\multirow[c]{32}{*}{$9 \times 9$} & \multirow[c]{4}{*}{3} & 0 & 2306.44 & 8368.85 & 1158.35 & $\infty$ & 6320.35 & 29126.32 & 6370.68 & 27711.16 \\
 &  & 30 & 2408.97 & 8864.60 & 1086.03 & $\infty$ & 6516.67 & 29818.35 & 6579.82 & 31867.45 \\
 &  & 60 & 2744.19 & 9147.63 & 1155.30 & $\infty$ & 6795.59 & 32752.43 & 6969.17 & 34800.02 \\
 &  & 90 & 4502.30 & 11197.40 & 1457.40 & $\infty$ & 10211.92 & 60780.37 & 10231.97 & 57636.52 \\
\cline{2-11}
 & \multirow[c]{4}{*}{4} & 0 & 1456.40 & 6570.81 & 1135.46 & $\infty$ & 4551.91 & 26010.00 & 4586.93 & 26422.88 \\
 &  & 30 & 941.11 & 6342.96 & 1120.08 & $\infty$ & 4674.83 & 23499.13 & 4725.97 & 26771.57 \\
 &  & 60 & 846.08 & 6703.23 & 1058.46 & $\infty$ & 4861.11 & 32159.55 & 4986.72 & 30773.68 \\
 &  & 90 & 1628.75 & 8153.97 & 1338.00 & $\infty$ & 6786.19 & 9518.38 & 6846.62 & 9044.52 \\
\cline{2-11}
 & \multirow[c]{4}{*}{5} & 0 & 371.61 & 5012.56 & 1063.89 & $\infty$ & 3553.11 & 15336.98 & 3611.80 & 23902.28 \\
 &  & 30 & 448.31 & 5013.70 & 1073.81 & $\infty$ & 3639.32 & 30084.80 & 3710.21 & 24418.90 \\
 &  & 60 & 362.08 & 4832.50 & 1046.23 & $\infty$ & 3731.82 & 28807.57 & 3841.59 & 28023.08 \\
 &  & 90 & 817.84 & 6179.81 & 1290.64 & $\infty$ & 5043.04 & 42110.65 & 5143.40 & 6083.49 \\
\cline{2-11}
 & \multirow[c]{4}{*}{6} & 0 & 215.78 & 4118.69 & 1013.51 & $\infty$ & 2902.78 & 7612.32 & 2952.78 & 4427.37 \\
 &  & 30 & 236.04 & 3756.49 & 1045.29 & $\infty$ & 2951.97 & 27465.70 & 3034.77 & 19550.18 \\
 &  & 60 & 228.91 & 3961.43 & 1033.77 & $\infty$ & 2982.31 & 31384.41 & 3105.96 & 13480.52 \\
 &  & 90 & 464.67 & 5034.68 & 1224.86 & $\infty$ & 3984.28 & 5190.79 & 4079.92 & 5664.49 \\
\cline{2-11}
 & \multirow[c]{4}{*}{7} & 0 & 74.67 & 2970.54 & 1008.40 & $\infty$ & 2436.11 & 24281.70 & 2488.21 & 15585.68 \\
 &  & 30 & 84.75 & 3473.85 & 957.91 & $\infty$ & 2465.00 & 11101.72 & 2524.54 & 6057.12 \\
 &  & 60 & 75.94 & 3501.08 & 963.40 & $\infty$ & 2465.70 & 31861.58 & 2575.71 & 6057.45 \\
 &  & 90 & 186.56 & 4254.35 & 1258.62 & 5320.46 & 3290.45 & 4823.05 & 3374.45 & 7247.76 \\
\cline{2-11}
 & \multirow[c]{4}{*}{8} & 0 & 48.37 & 2557.55 & 982.18 & $\infty$ & 2085.04 & 17982.80 & 2126.68 & 2491.31 \\
 &  & 30 & 58.96 & 3061.32 & 977.16 & $\infty$ & 2100.76 & 24416.65 & 2176.46 & 2865.63 \\
 &  & 60 & 46.37 & 2430.62 & 893.37 & 3013.63 & 2074.67 & 26031.13 & 2169.08 & 2722.66 \\
 &  & 90 & 87.05 & 3327.22 & 1220.81 & 4257.36 & 2746.90 & 3527.88 & 2835.54 & 3794.86 \\
\cline{2-11}
 & \multirow[c]{4}{*}{9} & 0 & 31.14 & 2344.15 & 820.35 & $\infty$ & 1816.71 & 17531.22 & 1860.33 & 2257.66 \\
 &  & 30 & 42.90 & 2458.06 & 929.79 & 2776.75 & 1815.23 & 2798.69 & 1876.45 & 3003.00 \\
 &  & 60 & 45.13 & 2213.88 & 884.38 & 2776.22 & 1791.45 & 2030.41 & 1875.80 & 2027.14 \\
 &  & 90 & 69.66 & 2809.89 & 1167.44 & 3072.89 & 2334.24 & 2949.30 & 2435.06 & 2763.84 \\
\cline{2-11}
 & \multirow[c]{4}{*}{10} & 0 & 24.87 & 1845.03 & 910.05 & 2322.44 & 1598.02 & 2927.63 & 1641.86 & 2516.69 \\
 &  & 30 & 32.94 & 1896.25 & 870.71 & 1967.03 & 1591.13 & 1971.83 & 1663.10 & 1980.77 \\
 &  & 60 & 27.64 & 1919.38 & 846.95 & 2037.87 & 1545.66 & 2211.21 & 1643.27 & 1798.90 \\
 &  & 90 & 52.31 & 2563.98 & 1127.31 & 2913.75 & 2003.98 & 2188.39 & 2100.26 & 2188.39 \\
\cline{1-11} \cline{2-11}
\bottomrule
\end{tabular}
\end{table}

\end{document}